\documentclass[conference]{IEEEtran}
\IEEEoverridecommandlockouts
% The preceding line is only needed to identify funding in the first footnote. If that is unneeded, please comment it out.
\usepackage{cite}
\usepackage{amsmath,amssymb,amsfonts}
\usepackage{graphicx}
\usepackage{textcomp}
\usepackage{xcolor}

% the recommended bibstyle
\usepackage{multicol,amsthm}
\usepackage{url}
\usepackage{color,setspace,enumitem}
\usepackage{epstopdf}
\usepackage{times}
\usepackage{array}
\usepackage{color}
\usepackage[export]{adjustbox} % left, center, right for figures
\usepackage[binary-units=true]{siunitx}

\usepackage[draft]{changes}

\usepackage[ruled]{algorithm}
\usepackage{algpseudocode}
\usepackage{ioa_code}
\usepackage{scalerel,stackengine}
\usepackage[symbol]{footmisc}

%Authors' macros
\newtheorem{theorem}{Theorem}
\newtheorem{lemma}[theorem]{Lemma}

\newtheorem{definition}[theorem]{Definition}

\newtheorem{claim}[theorem]{Claim}

\newcommand{\prf}[1]{{}}

\newtheorem{property}{Property}

%\newenvironment{proof}{\noindent{\bf Proof.}}{\hfill$\Box$\FF}

%BEGIN DRAWING MACROS
\setlength{\unitlength}{3.4pt}

%END DRAWING MACROS

\newtheorem{Def}{Definition}[section]

\newcommand{\sacode}[5]
%{ \vspace{.25in} \hrule \vspace{.12in}
{ \vspace{.06in} \hrule \vspace{.06in} %///AS reduce vspace
 \noindent {\bf #1}: \\
 \footnotesize \noindent {\bf Signature:}\B \nobreak
 \normalsize \begin{quote} \nobreak #2 \end{quote}
 \footnotesize \noindent {\bf States:}\B \nobreak
 \begin{quote} \nobreak #3 \end{quote}
 \noindent {\bf Transitions:} \nobreak
 \vspace{-.2in} \nobreak
 \normalsize #4
% \footnotesize \noindent {\bf Tasks:} \nobreak
% \begin{quote} \nobreak #5 \end{quote}
% \normalsize
% \vspace{.12in} \hrule \vspace{.25in}
 \vspace{-.06in} \hrule \vspace{.06in} %///AS reduce vspace
}

\newcommand{\act}[1]{%
    \relax\ifmmode
        \mathord{\mathcode`\-="702D\sf #1\mathcode`\-="2200}%
    \else
        $\mathord{\mathcode`\-="702D\sf #1\mathcode`\-="2200}$%
%        {\sfshape #1}%
%        {\sf #1}%
    \fi
}

\newcommand{\tup}[1]{%
    \relax\ifmmode
%        \mathord{\mathcode`\-="702D\bf #1\mathcode`\-="2200}%
      \langle #1 \rangle%
    \else
%        $\mathord{\mathcode`\-="702D\bf #1\mathcode`\-="2200}$%
        $\langle$#1$\rangle$%
    \fi
}

\newcommand{\seq}[1]{%
    \relax\ifmmode
%        \mathord{\mathcode`\-="702D\bf #1\mathcode`\-="2200}%
      \langle \! \langle #1 \rangle \! \rangle%
    \else
%        $\mathord{\mathcode`\-="702D\bf #1\mathcode`\-="2200}$%
        $\langle \! \langle$ #1 $\rangle \! \rangle$%
    \fi
}

\newcommand{\B}{\vspace*{-\smallskipamount}}

\newcommand{\FF}{\vspace*{\medskipamount}}

%\newcommand{\Item}{\vspace*{-1.5\smallskipamount}\item}
%\newcommand{\Item}{\vspace*{-2.5\smallskipamount}\item}

%\setlength{\leftmargini}{1.5em}
%\setlength{\leftmarginii}{1.6em}

%{\BB\section{#1}\BB} %{\B\section{#1}\B}
%{\BB\subsection{#1}\BB} %{\B\subsection{#1}\B}
%{\BB\subsubsection{#1}}
%{\BB\BB\paragraph{#1}}

\newcommand{\Nat}{{\N}}
\newcommand{\N}{\mathbb N}

\newcommand{\ms}[1]{%
    \relax\ifmmode
        \mathord{\mathcode`\-="702D\it #1\mathcode`\-="2200}%
    \else
%        $\mathord{\mathcode`\-="702D\it #1\mathcode`\-="2200}$%
%        {\itshape #1}%
        {\it #1}%
    \fi
}

%NEW MACRO USED FOR MESSAGE HEADERS
\newcommand{\lit}[1]{%
    \relax\ifmmode
        \mathord{\mathcode`\-="702D\sf #1\mathcode`\-="2200}%
    \else
%        $\mathord{\mathcode`\-="702D\sf #1\mathcode`\-="2200}$%
%        {\itshape #1}%
        {\it #1}%
    \fi
}

 % To remove stuff from extended Abstract
 % To remove stuff for DISC submission
\newcommand{\XDK}[1]{}% To keep stuff for DISC final
\newcommand{\remove}[1]{} % To remove stuff
\newcommand{\proofremove}[1]{} % To remove stuff
\newcommand{\uselater}[1]{} % To remove stuff

% overladed latex command

% shorthand mnemonics
% ig with phase number

% ig without phase number

%old notation

%history variables

%\newcommand{\hknw}[2]{\ms{h-know(#1,#2)}}

%\newcommand{\IGR}[1]{{\sc IG}-\RAMBO{{#1}}}
%\newcommand{\IGR}[1]{{\rm IG-}\RAMBO{{#1}}}

%{\textcolor{blue}{{\bf (CG:)} #1}}
%\newcommand{\elaine}[1]{\textcolor{magenta}{{\bf (Elaine:)} #1}}
%{\textcolor{blue}{{#1}}}

%\renewcommand{\baselinestretch}{0.94}
%\renewcommand{\baselinestretch}{1.10}

 %\item #1}

%%%%%%%%%%%%%%%%%%%%%%%%%%%%%%%%%%%%%%%%%%%%%%
%%%%%% FORMATTING 
%%%%%%%%%%%%%%%%%%%%%%%%%%%%%%%%%%%%%%%%%%%%%%
\newcommand{\case}[1]{
        {\vspace{1em}\noindent{\bf Case #1:}}}

%Efficiency Metric

%floor

%ceiling

%%%%%%%%%%%%%%%%%%%%%%%%%%%%%%%%%%%%%%%%%%%%%%
%%%%%% LIST OF SYMBOLS
%%%%%%%%%%%%%%%%%%%%%%%%%%%%%%%%%%%%%%%%%%%%%%

% Define List of Symbols
% Added by Nicolas Nicolaou
\makeatletter
\def\mainlistofsymbols{
  %\newpage
  %\vspace{.25in}
  %\begin{center}
  %  {\Large\bf LIST OF SYMBOLS}
  %\end{center}
  \normalsize
  \vspace*{1.5 em}
  \@starttoc{los}
}

\def\partonelistofsymbols{
  \normalsize
  \vspace*{1.5 em}
  \@starttoc{p1los}
}

\def\parttwolistofsymbols{
  \normalsize
  \vspace*{1.5 em}
  \@starttoc{p2los}
}

%% Define the formatting of the symbol entry in the addcontentsline command : NN
\def\l@symbol#1#2{\addpenalty{-\@highpenalty} \vskip 4pt plus 2pt
{\@dottedtocline{0}{0em}{8em}{#1}{#2}}}
\makeatother

% Command for list of symbols
%% The symbol appears in both the los and in place in the text : NN

%% The symbol appears in both the los and in place in the text : NN

%% The symbol appears in both the los and in place in the text : NN

%% The symbol does not appear in the text but only in the los : NN
\newcommand{\newhiddensym}[2]{%
%% NN uncomment next line to work
%\addcontentsline{los}{symbol}{\protect\numberline{#1}{#2}}
}

%%%%%%%%%%%%%%%%%%%%%%%%%%%%%%%%%%%%%%%%%%%%%%
%IOA
%%%%%%%%%%%%%%%%%%%%%%%%%%%%%%%%%%%%%%%%%%%%%%

% Automata

\newcommand{\algIOA}[2]{\ifmmode{\text{#1}_{#2}}\else{$\text{#1}_{#2}$}\fi}
% IOA assigned to process #1

% Memory Automaton

% Algorithm Automata

%%%%%%%%%%%%%%%%%%%%%%%%%%%%%%%%%%%%%%%%%%%%%%
% Executions and execution fragments
%%%%%%%%%%%%%%%%%%%%%%%%%%%%%%%%%%%%%%%%%%%%%%
\newcommand{\EX}{\ifmmode{\xi}\else{$\xi$}\fi}
\newcommand{\EXF}{\ifmmode{\phi}\else{$\phi$}\fi}
%Extend

%Extracted execution

%Actions

%Execution Set

%Good Executions Set for a particular failure pattern

%General Good Executions Set

%Event History
\newcommand{\hist}[1]{H_{#1}}
%Event History Fragment

%%%%%%%%%%%%%%%%%%%%%%%%%%%%%%%%%%%%%%%%%%%%%%
% States and State Variables
%%%%%%%%%%%%%%%%%%%%%%%%%%%%%%%%%%%%%%%%%%%%%%
%State of an automaton

% Invocation Step

% Response Step

%The value of variable #2 in state #1 of process #2 

%Fix events (read, write, inform)

%%%%%%%%%%%%%%%%%%%%%%%%%%%%%%%%%%%%%%%%%%%%%%
% Object Type Notation
%%%%%%%%%%%%%%%%%%%%%%%%%%%%%%%%%%%%%%%%%%%%%%
% File Block Set
\newcommand{\blockSet}{\mathcal{B}}
%Object Response Set

%Object Operation Set

%Object State Set

%Object Transition Set

%Object

%%%%%%%%%%%%%%%%%%%%%%%%%%%%%%%%%%%%%%%%%%%%%%
% Quorum systems
%%%%%%%%%%%%%%%%%%%%%%%%%%%%%%%%%%%%%%%%%%%%%%

% Biquorum System

% Sub-Quorum system 

% Quorum

% Intersection
\newcommand{\inter}[1]{
	\ifmmode{\left(\bigcap_{\mathcal{Q}\in#1}\mathcal{Q}\right)}
	\else{$\left(\bigcap_{\mathcal{Q}\in#1}\mathcal{Q}\right)$}
	\fi
}
%Intersection Degree

%%%%%%%%%%%%%%%%%%%%%%%%%%%%%%%%%%%%%%%%%%%%%%
%Identifier sets
%%%%%%%%%%%%%%%%%%%%%%%%%%%%%%%%%%%%%%%%%%%%%%
\newcommand{\idSet}{\mathcal{I}}
%Writer set
\newcommand{\wSet}{\mathcal{W}}
%Readers set
\newcommand{\rdSet}{\mathcal{R}}
% Reconfigurators Set
\newcommand{\recSet}{\mathcal{G}}
%Servers Set
\newcommand{\srvSet}{\mathcal{S}}
%Directory Servers Set

%Replica Servers Set

%File Managers Servers Set

%Failure Set

%Failure Model

%Virtual Node Set

%Objects set

% Version Set
\newcommand{\verSet}{\mathit{Versions}}
% Operation Set

%  Client Set
\newcommand{\cSet}{\mathcal{I}}

%%%%%%%%%%%%%%%%%%%%%%%%%%%%%%%%%%%%%%%%%%%%%%
% Configuration Notation
%%%%%%%%%%%%%%%%%%%%%%%%%%%%%%%%%%%%%%%%%%%%%%
% Configuration Set
\newcommand{\confSet}{\mathcal{C}}
% Servers in a configuration
\newcommand{\servers}[1]{ #1.Servers}
%Configuration quorums
\newcommand{\quorums}[1]{ #1.Quorums}
% Configuration consensus object
\newcommand{\consensus}[1]{ #1.Con}
%Configuration tg_b
\newcommand{\tgb}[1]{ #1.tg_b}
%Configuration val_b
\newcommand{\valb}[1]{ #1.val_b}

%%%%%%%%%%%%%%%%%%%%%%%%%%%%%%%%%%%%%%%%%%%%%%
% Operation Notation
%%%%%%%%%%%%%%%%%%%%%%%%%%%%%%%%%%%%%%%%%%%%%%
\newcommand{\op}{\pi}
% Traceable Operation

%Read operation
\newcommand{\rd}{\rho}
%Write operation
\newcommand{\wrt}{\omega}

%Unique write operation

%Update operation

\mathchardef\mhyphen="2D
% Write on Traceable Register
\newcommand{\trw}[2]{\act{tr-write}(#1)[#2]}
% Read on Traceable Register

% Write on Ranked Register

% Read on Ranked Register

%%%%%%%%%%%%%%%%%%%%%%%%%%%%%%%%%%%%%%%%%%%%%%
% Processes
%%%%%%%%%%%%%%%%%%%%%%%%%%%%%%%%%%%%%%%%%%%%%%
\newcommand{\pr}{p}
%Reader
\newcommand{\rdr}{r}
%Writer

%Server

% Reconfigurer

%Operation Precedence
\newcommand{\bef}{\rightarrow}

%Techniques Notation
%Virtual Node ID for a reader
\newcommand{\vid}[1]{\ifmmode{\nu_{#1}}\else{$\nu_{#1}$}\fi}
%\newcommand{\vid}[1]{\nu(#1)}

%Quorum Views

%Server Side Ordering

%In progress set

%Seen Set
\newcommand{\seen}{\ifmmode{seen}\else{$seen$}\fi}
%Number of failures

%%%%%%%%%%%%%%%%%%%%%%%%%%%%%%%%%%%%%%%%%%%%%%
% Algorithm Names
%%%%%%%%%%%%%%%%%%%%%%%%%%%%%%%%%%%%%%%%%%%%%%

%Semifast

%Quorum Views (Sliq)

%Classic writes Fast reads

%Server Side Ordering

%Approximate Server Side Ordering

%ABD

%MW-ABD

%Fast

%Erasure COded ABD

%Erasure COded Fast

%Erasure COded Quorum Views

%Atomic Reconfigurable Erasure coded Storage
\newcommand{\ares}{{\sc Ares}}

%Flexible Erasure Coded Algorithm

%%%%%%%%%%%%%%%%%%%%%%%%%%%%%%%%%%%%%%%%%%%%%%
% Register Characteristics
%%%%%%%%%%%%%%%%%%%%%%%%%%%%%%%%%%%%%%%%%%%%%%

%Object/File set

%Value Set
\newcommand{\valSet}{{\mathcal V}}
%Register Set

%Timestamp Set
\newcommand{\tsSet}{{\mathcal T}}
%Message Set

%Message

%Value

%New Value

%Previous Value (SF, SLIQ)

%%%%%%%%%%%%%%%%%%%%%%%%%%%%%%%%%%%%%%%%%%%%%%
% Timestamp & Tags
%%%%%%%%%%%%%%%%%%%%%%%%%%%%%%%%%%%%%%%%%%%%%%
%Timestamp of process #1

%Timestamp sent by a process #1 for an operation #2

%Postit of process #1

%Postit sent by a process #1 for an operation #2

%Tag of process #1
\newcommand{\tg}[1]{\tau_{#1}}

%Tag returned by a process to an operation

%max timestamp, tag and postit
\newcommand{\maxts}[1]{\ifmmode{maxTS_{#1}}\else{$maxTS_{#1}$}\fi}
\newcommand{\maxtag}[1]{\ifmmode{maxTag_{#1}}\else{$maxTag_{#1}$}\fi}
\newcommand{\maxpair}[1]{\ifmmode{maxMPair_{#1}}\else{$maxMPair_{#1}$}\fi}
\newcommand{\mintag}[1]{\ifmmode{minTag_{#1}}\else{$minTag_{#1}$}\fi}
\newcommand{\maxps}{\ifmmode{maxPS}\else{$maxPS$}\fi}
\newcommand{\conftg}[1]{\ifmmode{confirmed_{#1}}\else{$confirmed_{#1}$}\fi}
\newcommand{\maxconftag}{\ifmmode{\ms{maxCT}}\else{$maxCT$}\fi}

\newcommand{\at}[1]{{\color{orange}#1}}
\newcommand{\atcom}[1]{{[AT: \color{orange}#1]}}

\newcommand{\cgadd}[1]{{\color{purple}#1}}
\newcommand{\nncom}[1]{{\color{blue} #1}}
\newcommand{\myemph}[1]{{\it #1}}

\newtheorem*{theorem*}{{\bf Theorem}}
\newtheorem*{lemma*}{{\bf Lemma}}

\newcommand{\cvrw}[2]{cvr\mhyphen\wrt(#1)[#2]}

\newcommand{\myparagraph}[1]{\noindent{\textbf{#1}}}

%Coding Stuff

\newcommand{\config}[1]{#1.cfg}

\newcommand{\Coded}{code\act{-}elems}

\newcommand{\dap}[1]{{DAP(#1)}}

\algblockdefx[Operation]{Operation}{EndOperation}%
[2]{{\bf operation} $\act{#1}$(#2)}%
{{\bf end operation}}
\algblockdefx[Procedure]{Procedure}{EndProcedure}%
[2]{{\bf procedure} $\act{#1}$(#2)}%
{{\bf end procedure}}
\algblockdefx[Receive]{Receive}{EndReceive}%
[2]{{\bf Upon receive} (#1)$_{\text{ #2 }}${\bf from} $q$}%
{{\bf end receive}}

%Changed the macro to remove b
%Configuration tg_b 
\renewcommand{\tgb}[1]{ #1.tag}
%Configuration val_b
\renewcommand{\valb}[1]{ #1.val}

\newcommand\wwidehat[1]{%
\savestack{\tmpbox}{\stretchto{%
  \scaleto{%
    \scalerel*[\widthof{\ensuremath{#1}}]{\kern-.6pt\bigwedge\kern-.6pt}%
    {\rule[-\textheight/2]{1ex}{\textheight}}%WIDTH-LIMITED BIG WEDGE
  }{\textheight}% 
}{0.5ex}}%
\stackon[1pt]{#1}{\tmpbox}%
}

\newcommand{\daputdata}[2]{ {#1}.{\act{put-data}(#2)}}
\newcommand{\dagetdata}[1]{ {#1}.{\act{get-data}()}}
\newcommand{\dagettag}[1]{ {#1}.{\act{get-tag}()}}

\newcommand{\coARES}{{\sc Co\ares{}}}
\newcommand{\fcoARES}{{\sc \coARES F}}
\newcommand{\frfs}{{\sc CoBFS}}

\newcommand{\rambo}{{\sc RAMBO}}
\newcommand{\smStore}{{\sc SM-Store}}
\newcommand{\dynaStore}{{\sc DynaStore}}
\newcommand{\SpSnStore}{{\sc SpSnStore}}

\newcommand{\abd}{{\sc ABD}}
\newcommand{\ec}{{\sc EC}}
\newcommand{\vmwABD}{{\sc Co\abd{}}}

\newcommand{\abdbased}{{\sc ABD-based}}
\newcommand{\ecbased}{{\sc EC-based}}

\newcommand{\ARESabd}{{\sc \coARES \abd{}}}%\_\abd{}}}
\newcommand{\ARESec}{{\sc \coARES \ec{}}}%\_\ec{}}}

% kanw allagi edw 
\newcommand{\fvmwABD}{{\sc \vmwABD F}}%$_{f}$}}
\newcommand{\fARESabd}{{\sc \ARESabd F}}%$_{f}$}}
\newcommand{\fARESec}{{\sc \ARESec F}}%$_{f}$}}

\newcommand{\WRP}{\par\qquad\(\hookrightarrow\)\enspace}

%DAP implementations
\newcommand{\abddap}{\abd-DAP}
\newcommand{\ecdap}{\ec-DAP}
\newcommand{\ecdapopt}{\ecdap{}opt}

% \renewcommand{\thefootnote}{\fnsymbol{footnote}}
% \newcommand{\cseq}[1]{\widehat{#1}}
% \newcolumntype{P}[1]{>{\centering\arraybackslash}p{#1}}

\def\BibTeX{{\rm B\kern-.05em{\sc i\kern-.025em b}\kern-.08em
    T\kern-.1667em\lower.7ex\hbox{E}\kern-.125emX}}
\begin{document}

% \newtheorem{definition}{Definition}

% \title{Conference Paper Title*\\
% {\footnotesize \textsuperscript{*}Note: Sub-titles are not captured in Xplore and
% should not be used}
\title{
% Fragmented \ares{}: Integrate an Erasure coded and Reconfigurable Storage to \frfs{}
Fragmented \ares{}: Dynamic Storage\\ for Large Objects
% \thanks{Identify applicable funding agency here. If none, delete this.}
\thanks{This work was supported in part by the Cyprus Research and Innovation Foundation under the grant agreement POST-DOC/0916/0090. The experiments presented in this paper were carried out using the Emulab and AWS experimental testbeds.}
}

\author{
\IEEEauthorblockN{Chryssis Georgiou}%1\textsuperscript{st} 
\IEEEauthorblockA{\textit{University of Cyprus}\\
Nicosia, Cyprus \\
chryssis@ucy.ac.cy}
\and

\IEEEauthorblockN{Nicolas Nicolaou}%2\textsuperscript{nd} 
\IEEEauthorblockA{
\textit{Algolysis Ltd}\\
Limassol, Cyprus \\
nicolas@algolysis.com}

\and 
\IEEEauthorblockN{Andria Trigeorgi}%3\textsuperscript{rd} 
% \IEEEauthorblockA{\textit{dept. name of organization (of Aff.)} \\
% \textit{name of organization (of Aff.)}\\
% City, Country \\
% email address or ORCID}
\IEEEauthorblockA{\textit{University of Cyprus}\\
Nicosia, Cyprus \\
atrige01@cs.ucy.ac.cy}

\and
% \and
% \IEEEauthorblockN{4\textsuperscript{th} Given Name Surname}
% \IEEEauthorblockA{\textit{dept. name of organization (of Aff.)} \\
% \textit{name of organization (of Aff.)}\\
% City, Country \\
% email address or ORCID}
% \and
% \IEEEauthorblockN{5\textsuperscript{th} Given Name Surname}
% \IEEEauthorblockA{\textit{dept. name of organization (of Aff.)} \\
% \textit{name of organization (of Aff.)}\\
% City, Country \\
% email address or ORCID}
% \and
% \IEEEauthorblockN{6\textsuperscript{th} Given Name Surname}
% \IEEEauthorblockA{\textit{dept. name of organization (of Aff.)} \\
% \textit{name of organization (of Aff.)}\\
% City, Country \\
% email address or ORCID}
}

\maketitle

\begin{abstract}
%\nncom{[NN: Abstract needs to be revised to emphasize that we want to obtain dynamic implementations for large objects and also that we present an in-depth experimental evaluation.]}
% This document is a model and instructions for \LaTeX.
% This and the IEEEtran.cls file define the components of your paper [title, text, heads, etc.]. *CRITICAL: Do Not Use Symbols, Special Characters, Footnotes, 
% or Math in Paper Title or Abstract.
Data availability is one of the most important features in distributed storage systems, made possible by data replication. Nowadays  data are generated rapidly and the goal to develop efficient, scalable and reliable storage systems has become one of the major challenges for high performance computing. 
In this work, 
we develop a dynamic, robust and strongly consistent distributed storage implementation suitable for handling large objects (such as files). We do so by integrating an Adaptive, Reconfigurable, Atomic Storage framework, called \ares{}, with a distributed file system, called \frfs{}, which relies on a block fragmentation technique to handle large objects. 
{With the addition of \ares{}, we also enable the use of an erasure-coded algorithm to further split our 
data and to potentially improve storage efficiency at the 
replica servers and operation latency.}
% More precisely, instead of replicating each fragment generated by the block fragmentation technique, we perform a second level of striping, using the erasure coded storage, which splits each block into $k$ small, equally sized coded fragments. 
{To put the practicality of our outcomes at test, } { we conduct an in-depth experimental evaluation on the Emulab and AWS EC2 testbeds, illustrating the benefits of our approaches, as well as other interesting tradeoffs.}      

\end{abstract}

\begin{IEEEkeywords}
Distributed storage, Large objects, 
Strong consistency, High access concurrency, Erasure code, Reconfiguration
\end{IEEEkeywords}

\section{Introduction}
\myparagraph{Motivation and prior work.} Distributed Storage Systems (DSS) have gained momentum 
in recent years, following the demand for available, 
accessible, and survivable data storage \cite{bookDS,ref_article_Consistency}. {
To preserve those properties in a harsh, asynchronous, fail prone environment (as a distributed system), data are replicated in multiple, 
often geographically separated devices, raising the
challenge on how to preserve consistency 
between the replica copies.}

{For more than two decades, a series of works 
\cite{ref_article_ABD, ref_article_MWMRABD, ref_article_ERATO, ref_article_fastRead, ref_article_semifast} suggested 
solutions for building distributed shared memory emulations,} 
% One of the fundamental structures to implement a DSS
% is a distributed shared memory, 
allowing data to be shared concurrently offering
basic memory elements, i.e. registers, with 
strict consistency guarantees. Linerazibility (atomicity) \cite{HW90} is the 
most challenging, yet intuitive consistency guarantee that such solutions provide.
% ,which 

The problem of keeping copies consistent becomes
even more challenging when failed {replica hosts (or servers)} need to be replaced or new servers need to be added in the system. Since the data of a DSS should be accessible immediately, it is imperative that the service interruption during a failure or a repair should be as short as possible. {The need to be able to modify the set of servers %replica hosts
while ensuring service liveness yielded  dynamic solutions and reconfiguration services}. 
% The design of a reconfiguration service for dynamic networks is an active area of research.
Examples of {reconfigurable storage algorithms} %algorithm variants of reconfigurable storage 
are \rambo{}~\cite{rambo}, \dynaStore{}~\cite{dynastore}, \smStore{}~\cite{smartmerge},
%SmartMerge-Store~\cite{smartmerge} (\smStore{})
\SpSnStore{}~\cite{SpSnStore} and \ares{}~\cite{ARES}.\vspace{-.2em}

Currently, such emulations are 
%either 
limited to small-size, versionless, primitive objects (like registers),
%or if two writes occur concurrently on different parts of the object, only one of them prevails.
hindering the practicality of the solutions when dealing with larger, more common DSS objects (like files).
A recent work by Anta et al. \cite{SIROCCO_2021}, introduced a modular solution, called \frfs{}, which combines 
a suitable data fragmentation strategy with a distributed shared memory module 
to boost concurrency, while maintaining strong consistency guarantees, and minimizing operation latencies.
The architecture of \frfs{} is shown in Fig.~\ref{fig:architecture} and it is composed of two main modules: (i) a Fragmentation Module (FM), and (ii) a Distributed Shared Memory Module (DSMM). 
 In short, the FM implements a fragmented object, {which is a totally ordered sequence of blocks (where a block is a R/W object of restricted size),} while the DSMM implements an interface to a shared memory service that allows operations on individual block objects. To this end a DSMM may encapsulate any DSM implementation. \frfs{} implements coverable linearizable fragmented {objects}. Coverability~\cite{coverability} extends linearizability
with the additional guarantee that {object} writes succeed when associating the written value with the ``current" version of the object. In a different case, a write operation becomes a read operation and 
returns the latest version and the associated value of the {object}. Coverable solutions have been proposed 
only for the static environment. Thus, \frfs{} operated over a static architecture, with the replica hosts predefined 
in the beginning of its execution.\vspace{0.15em}
\myparagraph{Contributions.} In this work, we propose a dynamic DSS, that (i) supports versioned objects, (ii) is suitable for large objects (such as files), and (iii) is storage-efficient. To achieve this, we integrate the dynamic distributed shared memory 
algorithm \ares{} with the DSMM module in \frfs{}. \ares{} is the first algorithm that enables erasure coded based 
dynamic DSM yielding benefits on the storage efficiency at the replica hosts. To support versioning we modify 
\ares{} to implement coverable objects, while %support for 
{high access concurrency} is preserved by introducing support 
for fragmented objects. Ultimately, we aim to make a leap towards dynamic DSS that will be attractive for practical
applications (like highly concurrent and consistent file sharing). 
% Although all solutions provide strong consistency guarantees, they focused on primitive, versionless objects, like registers, hindering the practicality of the solutions when dealing with larger, more common DSS objects (like files). 
% Lynch and Fan \cite{ref_article_LDR}, made an initial attempt to propose a solution for large objects. While, separating 
% metadata from object data managed to improve operation efficiency, they did not consider versions and 
% large objects still generated an overhead in the network. 

% \nncom{[NN: There are 2 ways to motivate this work! Either we say that a new dynamic algorithm (ARES) was introduced that enables EC algorithms in a dynamic system and we want 
% to boost its concurrency, make it suitable for versioned objects, and support large
% objects. Or we can say that we have the CoBFS and we want to apply ARES in it to make
% it dynamic. What is more important for the audience? ]}

%%%%%%%%%%%%%%%%%%%%%%%%%%%%%%%%%%%%%%%%%%%%%%%%%%%%%%%%%%%%%%%%%%%%%%%%%%%%%%%%%%%%%%%
\remove{
introduce the development of a Robust and Strongly Consistent  DSS built on top of an asynchronous message-passing, dynamic and fault-prone environment while providing highly concurrent access to its users. 
Our design is based on fundamental research in the area of distributed shared memory emulation~\cite{ref_article_ABD,ref_article_ABD_new}. 
These emulations provide a strong consistency guarantee, called {\em linearizability} (atomicity)~\cite{ref_article_Linearizability}, 
which is especially 
suitable for concurrent systems. \at{A shared-memory system supports linearizability if all processes see all shared accesses in the same order.}\nncom{NOT REALLY TRUE!} Currently, such emulations, are either limited to small-size objects, or if two writes occur concurrently on 
different parts of the object, only one of them prevails.
To address these limitations, we use a framework, called \frfs{}~\cite{SIROCCO_2021}, based on suitable data fragmentation strategy that boosts concurrency, while maintaining strong consistency guarantees, and minimizing the operation latencies. \at{\frfs{} implements coverable linearizable fragmented \cgadd{objects}. 
Coverability~\cite{coverability} extends linearizability
%CG: replacing block with object
with the additional guarantee that \cgadd{object} writes succeed when associating the written value with the ``current" version of the object.
In a different case, a write operation becomes a read operation and 
returns the latest version and the associated value of the \cgadd{object}.
} 
The architecture of \frfs{} is shown in Fig.~\ref{fig:architecture} and it is composed of two main modules: (i) a Fragmentation Module (FM), and (ii) a Distributed Shared Memory Module (DSMM). 
 In summary, the FM implements the fragmented object, \cgadd{which is a totally ordered sequence of blocks (where a block is a R/W object with limited value size),} while the DSMM implements an interface to a shared memory service that allows operations on individual block objects.
However, instead of replicating each block to multiple servers \cgadd{(as done in~\cite{SIROCCO_2021}),} in this work we perform a second level of striping using an Erasure coded Atomic Storage, called \ares{}~\cite{ARES}. Thus, each replica server stores an encoded fragment of each block. 
In addition, the integration of a reconfiguration mechanism provides the ability of adding or removing servers in the DSS for system maintenance purpose or to change the data storage service, without service interruption, \cgadd{thus, increasing the dependability and longevity of the system.}
}
%%%%%%%%%%%%%%%%%%%%%%%%%%%%%%%%%%%%%%%%%%%%%%%%%%%%%%%%%%%%%%%%%%%%%%%%%%%%%%%%%%%%%%%

\begin{figure}
    \centering
    \includegraphics[width=0.8\linewidth]{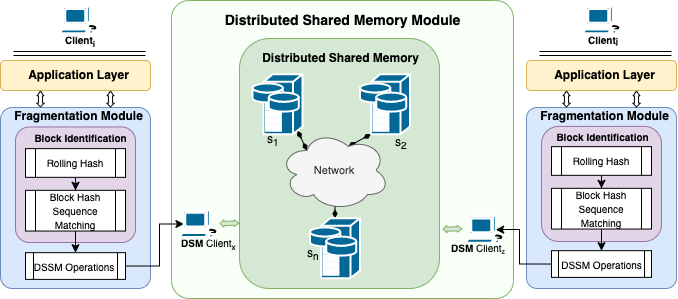}
    \caption{Basic architecture of \frfs{}~\cite{SIROCCO_2021}}
    \label{fig:architecture}\vspace{-1.7em}
\end{figure}

In summary, our contributions are the following:
\begin{itemize}
%    \item We present the model assumptions for our setting. (Section~\ref{sec:model})
%CG: The model is not a really a contribution    
    \item We propose and prove the correctness of the coverable version of \ares{}, \coARES, a {Fault-tolerant}, Reconfigurable, Erasure coded, Atomic Storage, to support versioned objects %and concurrent updates to the same file 
    (Section~\ref{sec:coverable:impl}). 
    %We 
    %We prove the correctness of \coARES{} (Appendix~\ref{sec:coverable:correctness}). 
    % \cgadd{[CG: I suggest we have the description and the proof in the same section (as subsections), especially if we will end up pushing the main parts of the proof in the appendix.]}\at{fixed!}
    
    % \item We prove the correctness of \coARES{}. (Section~\ref{sec:coverable:correctness})
    
    \item We adopt the idea of fragmentation as presented in \frfs{}~\cite{SIROCCO_2021}, to obtain \fcoARES{}, which enables 
    %show how to integrate 
    \coARES{} to handle \emph{large}
    % into \frfs{}~\cite{SIROCCO_2021}, a distributed file system that utilizes coverable fragmented objects, designed to handle { large} 
    shared data objects {and increased data access concurrency} %, and we obtain algorithm  
    (Section~\ref{sec:fragmentation:impl}). The 
    correctness of \fcoARES{} is rigorously proven. %(Appendix~\ref{sec:fragmentation:correctness}).
    % \cgadd{[CG: Same comment as above.]}\at{fixed!}
    
    \item {To reduce the operational latency of the read/write operations in the DSMM layer, we apply and prove correct an optimization in the implementation of the erasure coded {\em data-access primitives} (DAP) used by the  \ares{} framework (which includes \coARES{} and \fcoARES{}). This optimization has its own interest, as it could be applicable beyond the  \ares{} framework, i.e., by other erasure coded algorithms relying on tag-ordered data-access primitives.}
    %We prove the correctness of the optimized $DAP_{s}$ (Appendix~\ref{sec:dap:optimize:correctness}).} 

    % \item We prove the correctness of \fcoARES{}. (Section~\ref{sec:fragmentation:correctness})
    
    \item {We present an in-depth experimental evaluation of our approach over Emulab, a popular emulation testbest, and Amazon Web Services (AWS) EC2, an overlay (real-time)  testbed (Section~\ref{sec:evaluation}). Our experiments compare various versions of our implementation, i.e., with and without the fragmentation technique or with and without Erasure Code or with and without reconfiguration, illustrating tradeoffs and synergies.}

\end{itemize}

% \section{Related Work}
% \input{related_work}

\section{Model and Definitions}\label{sec:model}
% Asynchronous/Communication/Executions/Property of correctness-DAPs/Executions:operations
% + meta ta xrimopoioume sta proofs
% Atomicity Properties 

% {\bf [CG: This needs to be updated. We need to include high-level definitions of blocks, coverable objects, etc.] }

In this section we present the system setting and define necessary terms we use in the rest of the manuscript. As mentioned, our 
main goal is to implement a highly consistent shared storage
that supports large shared objects and favors high access
concurrency.
We assume read/write (R/W) shared objects that support 
two operations: (i) a \emph{read operation} that returns the value of the
object, and (ii) a \emph{write operation} that modifies the value of the object. 

\myparagraph{Executions and histories} An \textit{execution} $\xi$
of a distributed algorithm $A$ is an alternating sequence of 
\textit{states} and \textit{actions} of $A$ reflecting the evolution in real time of the execution. A history $H_\xi$ is the subsequence
of the actions in $\xi$. 
A history $H_\xi$ is \textit{sequential} if it starts with an invocation action and each invocation is immediately followed by its matching response; otherwise, $H_\xi$ is \textit{concurrent}. Finally, $H_\xi$ is \textit{complete} if every invocation in $H_\xi$ has a matching response in $H_\xi$, i.e., each operation in $\xi$ is complete.
An operation $\pi_1$ precedes an operation $\pi_2$ (or $\pi_2$ succeeds $\pi_1$), denoted by $\pi_1\!\!\rightarrow\!\!\pi_2$, in $H_\xi$, if the response action of $\pi_1$ appears before the invocation action of $\pi_2$ in $H_\xi$. Two operations are concurrent if none precedes the other.

% \fcoARES{} is a distributed storage system with highly-available 
% replicated concurrent objects ({\em fragmented objects},{\em blocks}) that support a set of operations.
% The shared atomic storage of the system, consisting of any number of individual objects,  can be emulated by composing individual atomic memory objects. Therefore, herein we aim
% in implementing a single atomic \textit{read/write} memory object. %on a set of servers.
% {A read/write} object takes a value from a set $\valSet$. 

\myparagraph{Clients and servers.} We consider a system 
%consisting of a collection of crash-prone, asynchronous processors with unique identifiers (ids) from a totally-ordered set $\idSet$, 
composed of four distinct sets of crash-prone, asynchronous processes:
a set $\wSet$ of writers, a set $\rdSet$ of readers, a set $\recSet$ of 
reconfiguration clients, and a set $\srvSet$ of servers. Let $\cSet = \wSet \cup\rdSet\cup\recSet$ 
be the set of clients. Servers host data elements (replicas or encoded data fragments).
Each writer is allowed to modify the value of a shared object, and each reader is allowed to obtain 
the value of that object. Reconfiguration clients attempt  to introduce new configuration of servers to the 
system in order to mask transient errors and to ensure the longevity of the service.

\myparagraph{Configurations.} 
A configuration,  
 $c\in\confSet$, consists of: 
 %is a data type that describes explicitly: 
$(i)$ $\servers{c}\subseteq\srvSet$: a set of server identifiers; %that {belong} in $c$; 
$(ii)$ $\quorums{c}$: the set of quorums on $\servers{c}$, s.t. $\forall Q_1,Q_2\in\quorums{c}, Q_1,Q_2\subseteq\servers{c}$ and $Q_1\cap Q_2\neq \emptyset$; 
$(iii)$ $\dap{c}$: the set of data access primitives (operations at level lower than reads or writes) that clients in $\idSet$ may invoke on $\servers{c}$ (cf. Section~\ref{sec:ares}); $(iv)$ $\consensus{c}$: a consensus instance with the values from $\confSet$, implemented and running on top of the servers in $\servers{c}$; and $(v)$ the pair $(\tgb{c}$, $\valb{c})$: the maximum tag-value pair that clients in $\idSet$ have. {A tag consists of a timestamp $ts$ (sequence number) and a writer id; the timestamp is used for ordering the operations, and the writer id is used to break symmetry (when two writers attempt to write concurrently using the same timestamp)~\cite{rambo}}. 
We refer to a server $s \in \servers{c}$ as a \myemph{member} of  configuration $c$. The consensus instance $\consensus{c}$ in each configuration $c$ is used as a service that 
	returns the identifier of the configuration that follows $c$.

%%%%%%%%%%%%%%%%%%%%%%%%%%%%%%%%%%%%%%%%%%%%%%%%%%%%%%%%%%%%%%%%%%
\remove{
\myparagraph{Data Access Primitives (DAP).} 
{
We define two data access primitives for each $c\in\confSet$:
$(i)$ $\daputdata{c}{\tup{\tg{},v}}$, via which 
a client can ingest the tag value pair $\tup{\tg{},v}$ in to the configuration $c$; and $(ii)$ $\dagetdata{c}$, used to retrieve the most up to date tag and vlaue pair stored in the configuration $c$. More formally, assuming  a tag $\tg{}$ from a set of totally ordered tags $\tsSet$, a value $v$ from a domain $\valSet$, and a configuration $c$ from a set of identifiers $\confSet$, the two primitives are defined as follows:\vspace{-.3em}

\begin{definition}[Data Access Primitives] 
  	Given a configuration identifier $c\in\confSet$, any non-faulty client process $\pr$ may invoke the following data access primitives during an execution $\EX$, where $c$ 
  	is added to specify the configuration specific implementation of the primitives:
  	\begin{enumerate}
  		\item[$D1$:] $\dagetdata{c}$ that returns a tag-value pair $(\tg{}, v) \in \tsSet \times \valSet$, 
  		\item[$D2$:] $\daputdata{c}{\tup{\tg{}, v}}$ which accepts the tag-value pair $(\tg{}, v) \in \tsSet \times \valSet$ as argument. 
  	\end{enumerate} 
  \end{definition}

}

Different $DAP_{s}$ can be used in different configurations, as long as the following consistency properties hold:   \vspace{-.3em}
\begin{property}[DAP Consistency Properties]\label{property:dap}  A $DAP$ operation in an execution $\EX$ is complete if both the invocation and the 
 	matching response steps  appear in $\EX$. 
 	If $\Pi$ is the set of complete DAP operations in execution $\EX$ then for any $\phi,\pi\in\Pi$: 
 	%be an execution of some algorithm that executes the data-primitives 
 \begin{enumerate}
 \item[ C1 ]  If $\phi$ is  $\daputdata{c}{\tup{\tg{\phi}, v_\phi}}$, for $c \in \confSet$, $\tup{\tg{\phi}, v_\phi} \in\tsSet\times\valSet$, % and $v_1 \in \valSet$,
 and $\pi$ is $\dagetdata{c}$
%  $\dagettag{c}$ (or  $\dagetdata{c}$) 
 %in $\EX$ such that 
 that returns 
 %$\tg{\pi} \in \tsSet$ (or 
 $\tup{\tg{\pi}, v_{\pi}} \in \tsSet \times \valSet$
 and $\phi$ completes before $\pi$ is invoked in $\EX$, then $\tg{\pi} \geq \tg{\phi}$.
 \item[ C2 ] \sloppy If $\phi$ is a $\dagetdata{c}$ that returns $\tup{\tg{\pi}, v_\pi } \in \tsSet \times \valSet$, 
 then there exists $\pi$ such that $\pi$ is $\daputdata{c}{\tup{\tg{\pi}, v_{\pi}}}$ and $\phi$ did not complete before the invocation of $\pi$. 
 If no such $\pi$ exists in $\EX$, then $(\tg{\pi}, v_{\pi})$ is equal to $(t_0, v_0)$.
 \end{enumerate} \label{def:consistency}
 \end{property}\vspace{-0.5em}
 }
 %%%%%%%%%%%%%%%%%%%%%%%%%%%%%%%%%%%%%%%%%%%%%%%%%%%%%%%%%%%%%%%%%%

%\sout{\cgadd{[CG: Don't we need to define the primitives put-data() and get-data()?]} }

% \cgadd{[CG: Subsection: Fragmented objects and fragmented linearizability]}	
\myparagraph{Fragmented objects and fragmented linearizability.}
As defined in~\cite{SIROCCO_2021},
a {\em fragmented object} is a totally ordered sequence of {\em block objects}. Let ${\cal F}$ denote the set of fragmented objects, and $\blockSet$ the set of block objects. A {block object} (or block)  $b\in \blockSet$ 
is a concurrent R/W object with a unique id and is associated with two structures, $val$ and $ver$. The unique id of the block is a triplet $\tup{fid, clid, clseq}\in {\cal F}\times\cSet\times\Nat$, where $fid\in {\cal F}$ is the id of the fragmented object in which the block belongs to, $clid\in\cSet$ is the id of the client that created the block, and $clseq\in\Nat$ is the client's local sequence number of blocks that is incremented every time this client creates a block for this fragmented object.
$val(b)$ is composed of: (i) a \emph{pointer} that points to the next block in the sequence ($\bot$ denotes the null pointer), and (ii) the \emph{data} contained in the block ($\bot$ means there are no data). $ver(b)=\tup{wid,bseq}$, where $wid\in\cSet$ is the id of the client that last updated $val(b)$ (initially is the id of the creator of the block) and
$bseq\in\Nat$ is a sequence number of the block (initially 0) that it is incremented every time $val(b)$ is updated.

%\begin{definition}{Fragmented Object.} 
% A \emph{fragmented object}~\cite{SIROCCO_2021} $f\in\fileSet$ is a complete block linked-list with a unique id $\tup{cfid,cfseq}$, where $cfid\in\cSet$ is the id of the client that created the file (fragmented object) and $cfseq\in\Nat$ is the client's local sequence number that it is incremented every time this client creates a file.
A \emph{fragmented object} $f$ is a concurrent R/W object with a unique identifier from a set~${\cal F}$. Essentially, a fragmented object is a \textit{sequence} of blocks from 
$\blockSet$, with a value 
$val(f) = \tup{b_0,b_1,b_2,\ldots}$, where each $b_i\in\blockSet$. %\cup\{g\}$. 
Initially, each fragmented object contains an empty block, i.e., $val(f)=\tup{b_0}$ with $val(b_0)=\varepsilon$; we refer to this as the {\em genesis} block.
%\nncom{[NN: The following may not be necessary]}CG:agreed
%We say that $f$ is \textit{valid} and $f\in\fileSet^\ell$
%, $val(f) = \tup{b_{g},b_1,b_2,\ldots,b_n},$ 
%if $\forall b_i\in val(f)$, $b_i \in \blockSet^\ell$. 
%Otherwise, $f$ is \textit{invalid}.
%
%\end{definition}
%
%
We now proceed to present the formal definitions of linearizability and fragmented linearizability, as given in~\cite{SIROCCO_2021}.

\begin{definition}[Linearizability] 
\label{def:linearizability}
An object $f$ is {\em linearizable}~\cite{ref_article_Linearizability} if, given any complete history 
	$H$, there exists a permutation $\sigma$ of all actions in $H$ such that: \vspace{-.1em}
	\begin{itemize}[leftmargin=10mm]
		\item $\sigma$ is a sequential history and follows the sequential specification\footnote{The  sequential specification of a concurrent object  describes the behavior of the object when  accessed sequentially.}  of
		$f$, and 
		\item for operations $\pi_1, \pi_2$, if $\pi_1\bef \pi_2$ in $H$, then $\pi_1$ appears before $\pi_2$ in $\sigma$. 
	\end{itemize}

\end{definition}

%\nncom{[NN: Various notations in the following definition are not well defined]}
Given a history $H$, we denote for an operation 
 $\op$ the history $\hist{}^\op$ which contains the
 actions extracted from $H$ and performed during $\op$ (including its invocation and response actions). Hence,
if $val(f)$ is the value returned by $\act{read}()_f$, then  $H^{\act{read}()_f}$ contains an 
invocation and matching response for a $\act{read}()_b$ operation, for each $b\in val(f)$. Then, from $H$, 
we can construct a history
$H|_f$ that only contains operations on the whole fragmented
object. 
%In particular, $H|_f$ is the same as $H$ with the following changes: for each $\act{read}()_f$, if
%$\tup{val(b_0),\ldots,val(b_n)}$ is the value returned by the
%read operation, then we replace the invocation of 
%$\act{read}()_{b_0}$ operation with the invocation of the 
%$\act{read}()_f$ operation and the response of the 
%$\act{read}()_{b_n}$ block with the response action for 
%the $\act{read}()_f$ operation. Then, we remove from $H|_f$
%all the actions in $H^{\act{read}()_f}$.

\begin{definition}[Fragmented Linearizability~\cite{SIROCCO_2021}]
\label{def:fragatomic}
Let $f\in {\cal F}$ be a fragmented object, $H$ a complete history
on $f$, and $val(f)_H\subseteq\blockSet$ the value of $f$ at the 
end of $H$.
Then, $f$ is \emph{fragmented linearizable} if there exists a permutation $\sigma_b$ over all the actions on $b$ in $H$, 
$\forall b\in val(f)_H$, such that:\vspace{-.5em} 
	\begin{itemize}[leftmargin=10mm]
		\item $\sigma_b$ is a sequential history that 
		follows the sequential specification of $b$\footnote{The sequential specification of a block is similar to that of a R/W register~\cite{Lynch1996}, whose value has bounded length.}, and 
		\item for operations $\pi_1, \pi_2$ that appear in $H|_f$ extracted from $H$, if $\pi_1\bef \pi_2$ in $H|_f$, then all operations on $b$ in  $H^{\op{1}}$ appear before any operation on $b$ in $H^{\op{2}}$ in $\sigma_b$. 
	\end{itemize}
\end{definition}

Fragmented linearizability guarantees that all concurrent operations on different blocks prevail, and only concurrent operations on the same blocks are conflicting. The second point guarantees the total ordering of the operations
on the fragmented object with respect to their real time ordering. For example, considering 
two read operations on $f$, say $\rd_1\bef \rd_2$, it must be a case that $\rd_2$ 
returns a \textit{supersequence} of the sequence returned by $\rd_1$.

% \cgadd{[CG: Subsection: Coverability and fragmented coverability]}
\myparagraph{Coverability and fragmented coverability.}
%\label{def:cover.&fragm.cover.}
% \cgadd{[CG: All these must be moved to the previous section before Definition 5. In fact, you should have a subsection called coverability where all this discussion, the formal definition and framgented coverability are given.]}
%
{\em Coverability} is defined over a \textit{totally ordered} set of \textit{versions}, say $\verSet$, and introduces the notion of \emph{versioned (coverable) objects}. 
According to \cite{coverability}, a \emph{coverable object} is a type of R/W object where each value written 
is assigned with a version from the set $\verSet$. Denoting a successful write as $\trw{ver}{ver', chg}_{\pr}$ (updating the object from version $ver$ to $ver'$),
and an unsuccessful write as $\trw{ver}{ver', unchg}_{\pr}$, a coverable implementation satisfies the properties {\em consolidation}, {\em continuity} and {\em evolution} as formally defined below in Definition~\ref{def:coverability}.

Intuitively, \emph{consolidation} specifies that write operations
may revise the register with a version larger than any version modified
 by a preceding write operation, and may lead to a version newer than
 any version introduced by a preceding write operation. 
 %provides a limitation on the version of 
 %the object that an operation may modify. In particular 
\emph{Continuity} requires that a write operation may revise a version that was introduced
by a preceding write operation, according to the given total order.
Finally, \emph{evolution} limits the relative increment on the version of a register that can be
introduced by any operation.

%\sout{\cgadd{[CG: We need to define $\wSet_{\EX,succ}$.]}}

{
We say that a write operation \emph{revises} a version $ver$ of 
the versioned object to a version $ver'$ (or \emph{produces} $ver'$) in an execution $\EX$, 
if $\trw{ver}{ver'}_{\pr_i}$ completes in $\hist{\EX}$.
Let the set of \emph{successful write} operations on a 
history $\hist{\EX}$ 
be defined as $
\wSet_{\EX,succ}\! = \{\op\!:\! \op=\trw{ver}{ver'}_{\pr_i}\text{ completes in }   \hist{\EX}\}$
The set now of produced versions %, given that $ver_0$ is the initial version,
in the history $\hist{\EX}$ is defined by %\vspace{-.2em}
$\verSet_{\EX}\! =\! \{ver_i\! :\! \trw{ver}{ver_i}_{\pr_i}\!\!\in\! \wSet_{\EX, succ}\}\cup\{ver_0\}$
where $ver_0$ is the initial version of the object.
Observe that the elements of $\verSet_{\EX}$ are totally ordered.
% Now we present the {\em validity} property which defines explicitly the 
% set of executions that are considered to be valid executions.

}
\begin{definition} [Validity]
\label{def:validity} 
An execution $\EX$ (resp. its history $\hist{\EX}$) is a \emph{valid execution} (resp. history)
%and its history $\hist{\EX}$ is a \emph{valid coverable history} 
on a versioned object, %if  $\wSet_{\EX}$ 
%is the set of complete operations in $\hist{\EX}$  and 
for any $\pr_i,\pr_j\in\idSet$:%\vspace{-0.2em}
\begin{itemize}[leftmargin=5mm]\itemsep2pt
	\item $\forall \cvrw{ver}{ver'}_{\pr_i} \in \wSet_{\EX,succ}, ver < ver'$,
	\item for any operations $\trw{*}{ver'}_{\pr_i}$ and $\trw{*}{ver''}_{\pr_j}$ in $\wSet_{\EX,succ}$, $ver'\neq ver''$,  and
	\item  for each $ver_k\in Versions_{\EX}$ there is a sequence of versions $ver_0, ver_1,\ldots, ver_k$, such that 
	$\trw{ver_i}{ver_{i+1}}$ $\in\wSet_{\EX,succ}$, for $0\leq i<k$.
\end{itemize}
\end{definition}

\begin{definition}[Coverability~\cite{coverability}] 
\label{def:coverability}
A valid execution $\EX$ is \textbf{coverable} with respect to a total order $<_{\EX}$ 
on operations in $\wSet_{\EX,succ}$ if:\vspace{-.3em}
\begin{itemize}[leftmargin=5mm]\itemsep2pt
	\item ({\bf Consolidation}) If $\op_1=\trw{*}{ver_i}, \op_2=\trw{ver_j}{*} \in \wSet_{\EX,succ}$, 
	and $\op_1\bef_{\hist{\EX}} \op_2$ in $\hist{\EX}$, then $ver_i \leq ver_j$ and $\op_1<_{\EX}\op_2$.
	\item ({\bf Continuity})  if $\op_2=\trw{ver}{ver_i}\in\wSet_{\EX, succ}$, then 
	there exists $\op_1\in\wSet_{\EX,succ}$ s.t. $\op_1=\trw{*}{ver}$ and $\op_1<_\EX \op_2$, or $ver=ver_0$.
	\item ({\bf Evolution})
	let $ver, ver', ver''\in Versions_{\EX}$. If there are sequences of versions $ver'_1, ver'_2,\ldots, ver'_{k}$
	and $ver''_1, ver''_2,\ldots, ver''_{\ell}$, where $ver=ver'_1=ver''_1$, $ver'_{k}=ver'$, and $ver''_{\ell}=ver''$
	such that
	$\trw{ver'_i}{ver'_{i+1}}$ $\in\wSet_{\EX,succ}$, for $1\leq i<k$, and $\trw{ver''_i}{ver''_{i+1}}$ $\in\wSet_{\EX,succ}$, for $1\leq i<\ell$,
	and $k < \ell$, then $ver' < ver''$.
\end{itemize}
\end{definition}

%\myparagraph{Fragmented Coverability.}
If a fragmented object utilizes coverable blocks, instead of linearizable blocks, then Definition~\ref{def:fragatomic} together with Definition~\ref{def:coverability} provide what we would call {\bf\em fragmented coverability}: Concurrent update operations on different blocks would \emph{all} prevail (as long as each update is tagged with the latest version of each block), whereas only one update operation on the same block would prevail (all the other updates on the same block that are concurrent with this would become a read operation).

\section{\ares{}: A Framework for Dynamic Storage}
\label{sec:ares}
% \nncom{[NN: Short description of \ares{} algorithm. Not more than a single column]}
\ares{}~\cite{ARES} is a modular framework, designed to implement
dynamic, reconfigurable, fault-tolerant, read/write distributed linearizable (atomic) shared memory objects.

Similar to traditional implementations, \ares{} uses $\tup{tag, value}$
pairs to order the operations on a shared object. 
In contrast to existing solutions, \ares{} does not explicitly 
define the exact methodology to access the object replicas. Rather, 
it relies on three, so called, \emph{data access primitives} (DAPs): 
(i) the \act{get-tag} primitive which returns the tag of an object, 
(ii) the \act{get-data} primitive which returns a $\tup{tag, value}$ pair, and 
(iii) the \act{put-data}($\tup{\tg, v}$) primitive which accepts a $\tup{tag,
value}$ as argument. 

As seen in \cite{ARES}, these DAPs may be used to express 
the data access strategy (i.e., how they retrieve and update the object
data) of different shared memory algorithms (e.g., \cite{ref_article_ABD_new}).
Therefore, the DAPs help \ares{} to achieve a modular design,
agnostic of the data access strategies, and in turn enables \ares{} to 
use different DAP implementation per configuration (something impossible 
for other solutions). Linearizability is then preserved by \ares{} given that 
the DAPs satisfy the following property in any given configuration $c$:

\begin{property}[DAP Consistency Properties]\label{property:dap}  In an execution $\EX$ 
 	we say that a DAP operation in  $\EX$ is complete if both the invocation and the 
 	matching response step  appear in $\EX$. 
 	If $\Pi$ is the set of complete DAP operations in execution $\EX$ then for any $\phi,\pi\in\Pi$: 
 	%be an execution of some algorithm that executes the data-primitives 
 \begin{enumerate}
 \item[ C1 ]  If $\phi$ is  $\daputdata{c}{\tup{\tg{\phi}, v_\phi}}$, $\tup{\tg{\phi}, v_\phi} \in\tsSet\times\valSet$, % and $v_1 \in \valSet$,
 and $\pi$ is $\dagettag{c}$ (or  $\dagetdata{c}$) 
 %in $\EX$ such that 
 that returns $\tg{\pi} \in \tsSet$ (or $\tup{\tg{\pi}, v_{\pi}} \in \tsSet \times \valSet$) and $\phi\bef\pi$ in $\EX$, then $\tg{\pi} \geq \tg{\phi}$.
 \item[ C2 ] \sloppy If $\phi$ is a $\dagetdata{c}$ that returns $\tup{\tg{\pi}, v_\pi } \in \tsSet \times \valSet$, 
 then there exists $\pi$ such that $\pi$ is a $\daputdata{c}{\tup{\tg{\pi}, v_{\pi}}}$ and $\phi$ did not complete before the invocation of $\pi$. 
 If no such $\pi$ exists in $\EX$, then $(\tg{\pi}, v_{\pi})$ is equal to $(t_0, v_0)$.
 \end{enumerate} \label{def:consistency}
 \end{property}

\noindent{{\bf DAP Implementations:}}
To demonstrate the flexibility that DAPs provide, the authors in \cite{ARES_arxiv}, expressed 
two different atomic shared R/W algorithms in terms of 
DAPs. These are the DAPs for the well celebrated ABD \cite{ref_article_ABD_new} algorithm,
and the DAPs for an erasure coded based approach presented for the first time in \cite{ARES_arxiv}. In the rest of the manuscript we refer to the two DAP implementations as \abddap{} and \ecdap{}. Erasure-coded approaches 
became popular in implementing atomic R/W objects as they offer fault tolerance and
storage efficiency at the replica hosts. In particular, an $[n,k]$-MDS erasure coding algorithm (e.g., Reed-Solomon) splits the object into $k$ equally sized fragments. Then erasure coding is applied to these fragments to obtain $n$ coded elements, which consist of the $k$ encoded data fragments and $m$ encoded parity fragments. The $n$ coded fragments are distributed among a set of $n$ different replica servers. Any $k$ of the $n$ coded fragments can then be used to reconstruct the initial object value.
As servers maintain a fragment instead of the whole object value, EC based approaches claim significant storage benefits. By utilizing the \ecdap{}, \ares{} became the first erasure coded dynamic algorithm to implement an 
atomic R/W object. 

Given the DAPs we can now provide a high-level description of the two main 
functionalities supported by \ares{}: (i) the reconfiguration of the 
data replicas, and (ii) the read/write operations on the shared object. 

\noindent{{\bf Reconfiguration.}} 
Reconfiguration is the process of changing the set of servers that hold the 
object replicas. A configuration sequence $cseq$ in \ares{} is defined as a sequence 
of pairs $\tup{c, status}$ where $c\in\confSet$, and $status\in\{P, F\}$ ($P$ stands for proposed and $F$ for finalized). 
Configuration sequences are constructed and stored in clients, while each
server in a configuration $c$ only maintains the configuration that follows $c$ in 
a local variable $nextC\in\confSet\cup\{\bot\}\times\{P, F\}$. 

To perform a reconfiguration operation $\act{recon}(c)$, a client follows 4 steps. At first, the reconfiguration client $r$ executes a sequence traversal to discover the latest configuration sequence $cseq$. Then it attempts to add $\tup{c,P}$ at the end of $cseq$ by proposing $c$ to a consensus mechanism. The outcome of the consensus may be a configuration $c'$ (possibly different than $c$) 
proposed by some reconfiguration client.  %(In our implementation, Raft~\cite{Raft} is used as the consensus algorithm.) 
Then the client determines the maximum tag-value pair of the object, say $\tup{\tg,v}$
by executing \act{get-data} operation starting from the last finalized configuration in $cseq$ to the last configuration in $cseq$, 
and transfers the pair to $c'$ 
by performing $\act{put-data}(\tup{\tg,v})$ on $c'$. Once the update of the value is complete, the client finalizes the proposed configuration by setting 
$nextC=\tup{c', F}$ on a quorum of servers of the last configuration in $cseq$
(or $c_0$ if no other configuration exists). 

The traversal and reconfiguration procedure in \ares{} preserves three 
crucial properties:
(i) {\em configuration uniqueness}, i.e., the configuration sequences in any two processes have identical configuration at any index $i$, (ii) {\em sequence prefix}, i.e., the configuration sequence witnessed by an operation is a prefix of the sequence witnessed by any succeeding operation, and (iii) {\em sequence progress}, i.e., if the configuration with index $i$ is finalized during an operation, then a configuration $j$, for $j\geq i$, will be finalized
for a succeeding operation.

\noindent{{\bf Read/Write operations.}}
% The read, write and reconfig operations in \ares{} are expressed in terms of the $DAP$. 
% This provides the flexibility to \ares{} to use different implementation of $DAP{s}$ in different configurations, without changing the basic structure of \ares{}. 
A \act{write} (or \act{read}) operation $\op$ by a client $p$ is executed by performing the following actions: 
(i) %obtains the \textit{latest configuration sequence} by using the 
 $\op$ invokes a $\act{read-config}$ action to obtain the latest configuration sequence $cseq$, 
%of the reconfiguration service, 
(ii) $\op$ invokes a $\act{get-tag}$ (if a write) or $\act{get-data}$ (if a read)
in each configuration, starting from the last finalized to the last configuration
in $cseq$, and discovers the maximum $\tg{}$ or $\tup{\tg{},v}$ pair respectively,
and 
% operation queries  the configurations, in $cseq$, starting from the last finalized configuration
% to the end of the discovered configuration 
% sequence, for the latest $\tup{tag,value}$ pair with a help of $\act{get-tag}$ (or $\act{get-data}$) operation as specified for each configuration, and 
(iii) repeatedly invokes $\act{put-data}(\tup{\tg{}',v'})$, where $\tup{\tg{}',v'}=\tup{\tg{}+1,v'}$ if $\op$ is a write and $\tup{\tg{}',v'}=\tup{\tg,v}$  if $\op$ is a read in the last configuration in $cseq$, and $\act{read-config}$
to discover any new configuration, 
%a new $\tup{tag', value'}$ pair (the largest $\tup{tag,value}$ pair) with $\act{put-data}$ in the last configuration of its local sequence, 
until no additional configuration is observed. 

% \nncom{[NN: Need to check this]}
% \myparagraph{Erasure Code.}  
%  In \ares{}, instead of replicating 
%  the entire object to multiple replica servers, striping is  used by implementing an erasure coding (\ec{}) algorithm in terms of DAP{s}. 
%  Erasure code-based algorithms create encoding groups that encompass blocks stored on different nodes. There are different encoding techniques, including Reed-Solomon encoding and XOR. In this work, the type of erasure code we use in our implementation of \ares{} and its variants is  the $(n, k)$-Reed-Solomon code. First it splits each object into $k$ small, equally size fragments. Then erasure coding is applied to these fragments to obtain $n$ coded elements, which consist of the $k$ encoded data fragments and $m$ encoded parity fragments. Once this step has completed, the $n$ coded fragments are distributed among a set of $n$ different replica servers; $k$ data servers and $m$ parity servers. \vspace{-.1em}
% % In practice, the $\act{get-data}$ and $\act{put-data}$ actions integrate the standard Reed-Solomon implementation provided by liberasurecode from the PyEClib library~\cite{ref_url_pyeclib}.
% % % Thus instead of replicating each entire object to multiple replica servers, we perform striping by using the \ares{} algorithm as storage.

\section{\coARES: Coverable %Erasure Coding
\ares{} 
}\label{sec:coverable:impl}
% Systems that process large quantities of data must often process the data in parallel. A version-based concurrency control algorithm can be an effective solution to maximize the number of operations performed in parallel. In this way clients do not need to manage the synchronization. 

In this section we replace the R/W objects of \ares{}~\cite{ARES}  with versioned objects that use coverability (cf. Section~\ref{sec:model}), yielding the coverable variant of \ares{}, which we refer as \coARES{}. We first present the algorithms and then its correctness proof. 

\subsection{Description}

In this section we describe the modification that need to occur on \ares{} in order to  support coverability. 
The reconfiguration protocol and the DAP implementations remain the same as they are not affected by the application of coverability. 
%\cgadd{[CG: How about the server code? There should be some changes there. Even if not, shouldn't we give it, for completeness, say in the Appendix?]}\atcom{The server code is part of the DAP implementation} 
The changes occur in the specification of read/write operations, which we detail below.

\myparagraph{Read/Write operations.}
Algorithm~\ref{algo:read_writeProtocol} specifies the read and write protocols of \coARES{}. The blue text annotates the changes when compared to the original \ares{} read/write protocols.
%\cgadd{[CG: The algorithm code contains undefined symbols. For example, what is F, what is P? We need to align the symbols with the ones defined in the Model section. For example, if P is meant to be the set of processes, ie. clients, this should be I, or actually I-G (set of readers and writers). Also, we need to be careful not to use the same symbol for different concepts. F is also used as the set of fragmented objects. So here we need to change F into something else.]}
%With the following changes to the implementation, the algorithm can support coverability. 
%Note that the changes are marked in blue.
%%
%
\begin{algorithm*}[!htbp]
	%\hrule \F
	\begin{algorithmic}[2]
		\begin{multicols}{2}{\scriptsize
				\Part{CVR-Write Operation}
				\State at each writer $w_i$ 

				\State {\bf State Variables:}
				\State  $cseq[] s.t. cseq[j]\in\confSet\times\{F,P\}$
				% with members:
				\State \textcolor{blue}{$version\in\N^+\times\wSet$ initially $\tup{0,\bot}$}
				\State {\bf Local Variables:}
				\State  \textcolor{blue}{$flag\in\{chg,unchg\}$ initially $unchg$}
				\State {\bf Initialization:} 
				\State $cseq[0] = \tup{c_0,F}$
				% \State \textcolor{blue}{$flag = False$}
				% \State \textcolor{red}{$version = 0$}

				\Statex		
				
				\Operation{cvr-write}{$val$}, $val \in V$ 
				\State $cseq\gets$\act{read-config}($cseq$)\label{line:writer:readconfig}\label{line:writer:read-config} %\Comment{Read the latest configuration sequence}
				\State $\mu\gets\max(\{i: cseq[i].status = F\})$\label{line:writer:lastfin}
				\State $\nu\gets |cseq|$ 
				\For{$i=\mu:\nu$}
			      \State
			      \textcolor{blue}{$\tup{\tg{},v}\gets$}
			     %\State
			     \textcolor{blue}{$\max(\config{cseq[i]}.\act{get-data}(), \tup{\tg{},v})$}\label{line:writer:max}%\tg{max}
				\EndFor
				\textcolor{blue}{
                    \If{$version = \tg{}$}\label{line:writer:flag_True}
                    \State $flag \gets chg$ \label{line:writer:flag_True:start}
                    \State $\tup{\tg{},v} \gets \tup{ \tup{\tg{}.ts+1, \wrt_i}, val}$ \label{line:writer:flag_True:end}
                    \Else
                    \State $flag \gets unchg$ \label{line:writer:flag_False}
                    \EndIf\label{line:writer:isCoverable_True_end}
                }
                \State \textcolor{blue}{$version\gets\tg{}$}\label{line:writer:update_version}
				\State $done \gets false$
				\While{{\bf not} $done$}\label{line:writer:whilebegin}
				\State $\config{cseq[\nu]}.$\act{put-data}$(\tup{\tg{},v})$\label{line:writer:prop}
				\State $cseq\gets$\act{read-config}($cseq$)
				\If{$|cseq| = \nu$}
				\State $done \gets  true$
				\Else
				\State $\nu\gets |cseq|$ 
				\EndIf
				\EndWhile \label{line:writer:whileend}
				
			    \State \textcolor{blue}{{\bf return} $\tup{\tg{},v}, flag$}

				\EndOperation
				\EndPart

				\Part{CVR-Read Operation}
				\State at each reader $\rdr_i$ 
				\State {\bf State Variables:}
				\State  $cseq[] s.t. cseq[j]\in\confSet\times\{F,P\}$
				\State {\bf Initialization:} 
				\State $cseq[0] = \tup{c_0,F}$
				
				\Statex
				
				\Operation{cvr-read}{ } 
				\State $cseq\gets$\act{read-config}($cseq$)\label{line:reader:read-config}%\Comment{Read the latest configuration sequence}
				\State $\mu\gets\max(\{j: cseq[j].status = F\})$\label{line:reader:lastfin}
				\State $\nu\gets |cseq|$ 
				\For{$i=\mu:\nu$} 
				\label{line:reader:getdata:start}
				\State $\tup{\tg{},v} \gets \max(\config{cseq[i]}.\act{get-data}(), \tup{\tg{},v})$\label{line:reader:max}
				\EndFor \label{line:reader:getdata:end}
				\State $done\gets {\bf false}$
				\While{{\bf not} $done$}\label{line:reader:whilebegin}
			    \State $\config{cseq[\nu]}.\act{put-data}(\tup{\tg{},v})$\label{line:reader:prop}
				\State $cseq\gets$\act{read-config}($cseq$)
			    \If{$|cseq| = \nu$}
			    \State $done \gets  true$
			    \Else
			    \State $\nu\gets |cseq|$ 
			    \EndIf
			    \EndWhile \label{line:reader:whileend}
			    
			    \State \textcolor{blue}{{\bf return} $\tup{\tg{},v}$}

				\EndOperation\vspace{-1em}
				\EndPart

		}\end{multicols}	
	\end{algorithmic}
	\caption{Write and Read protocols for \coARES{}.}
	\label{algo:read_writeProtocol}\vspace{-.5em}
\end{algorithm*}
%\at{Every client maintains a local variable $cseq$ composed of pairs in $\{\confSet\cup\{\bot\}\}\times\{F,P\}$, where $F$ means finalized (i.e., the reconfigurer finalized the configuration ($c$) he added) and $P$ pending, and initially contains $\tup{c_0, F}$.}
The local variable $flag\in\{chg,unchg\}$, maintained by the write clients, is set to $chg$ when the write operation is successful and to $unchg$ otherwise; initially it is set to $unchg$. The state variable $version$ is used by the client to maintain the tag of the coverable object.  
At first, in both \act{cvr-read} and \act{cvr-write} operations, the read/write client issues a $\act{read-config}$ action to obtain the latest introduced configuration% in {the global configuration sequence} $\gseq$
; cf. line Alg.~\ref{algo:read_writeProtocol}:\ref{line:writer:read-config} (resp. line Alg.~\ref{algo:read_writeProtocol}:\ref{line:reader:read-config}).

In the case of \act{cvr-write}, the writer $w_i$ finds the last finalized entry in $cseq$,
	say $\mu$, and performs a $cseq[j].conf.\act{get-data}()$ action,
	for $\mu\leq j\leq|cseq|$ (lines Alg.~\ref{algo:read_writeProtocol}:\ref{line:writer:lastfin}--\ref{line:writer:max}). {Thus, $w_i$ retrieves all the $\tup{\tg{}, v}$ pairs from the last finalized configuration and all the pending ones.}
	Note that in \act{cvr-write},  \act{get-data} is used in the first phase instead of a \act{get-tag}, as the coverable version needs both the highest tag and value and not only the tag, as in the original write protocol.
	Then, the  writer computes the maximum $\tup{\tg{}, v}$ pair among all the returned replies. 
	Lines Alg.~\ref{algo:read_writeProtocol}:\ref{line:writer:flag_True}
	- \ref{algo:read_writeProtocol}:\ref{line:writer:isCoverable_True_end} depict the main difference between the coverable \act{cvr-write} and the original one: if the maximum $\tg{}$ is equal to the state variable $version$, meaning that the writer $w_i$ has the latest version of the object, it proceeds to update the state of the object
	($\tup{\tg{}, v}$) by increasing $\tg{}$ and assigning $\tup{\tg{}, v}$ to $\tup{\tup{\tg{}.ts+1, \wrt_i}, val}$, where $val$ is the value it wishes to write (lines Alg.~\ref{algo:read_writeProtocol}:\ref{line:writer:flag_True:start}--\ref{line:writer:flag_True:end}). Otherwise, the state of the object does not change and the writer keeps the maximum $\tup{\tg{}, v}$ pair found in the first phase (i.e., the write has become a read). No matter whether the state changed or not, the writer updates its $version$ with the value $\tg{}$ (line Alg.~\ref{algo:read_writeProtocol}:\ref{line:writer:update_version}).
	
In the case of \act{cvr-read}, the first phase is the same as the original, that is, it discovers the
\textit{maximum tag-value} pair among the received replies (lines Alg.~\ref{algo:read_writeProtocol}:\ref{line:reader:getdata:start}--\ref{line:reader:max}). 
The propagation of $\tup{\tg{}, v}$ in both \act{cvr-read} (lines  Alg.~\ref{algo:read_writeProtocol}:\ref{line:reader:whilebegin}--\ref{line:reader:whileend})) and \act{cvr-write} (lines  Alg.~\ref{algo:read_writeProtocol}:\ref{line:writer:whilebegin}--\ref{line:writer:whileend}) remains the same as the original. 
%\sout{\cgadd{[CG: But still we need to explain it, as we do not provide an explanation of the original one anyway.]}}\at{That is, the reader repeatedly propagates the largest $\tup{\tg{}, v}$ pair and the writer the new one, with $\act{put-data}$ in the last configuration of their local sequence, until no additional configuration is observed.}
Finally, the \act{cvr-write} operation returns $\tup{\tg{}, v}$ and the $flag$, whereas the \act{cvr-read} operation only returns ($\tup{\tg{}, v}$).%\smallskip  

\subsection{Correctness of \coARES}~\label{sec:coverable:correctness}
\coARES{} is correct if it satisfies {\em liveness} (termination) and {\em safety} (i.e., linearizable coverability).
%as indicated in Definitions~\ref{def:validity} and~\ref{def:coverability}.
Termination holds since read, update and reconfig operations on the \coARES{} always complete given that the DAP complete. 
As shown in \cite{ARES}, \ares{}, implements
an atomic object given that the DAP used satisfy Property \ref{property:dap}.
Given that \coARES{} uses the same reconfiguration and read operations, and 
only the write operation is sometime converted to read operation then 
linearizability is not affected and can be shown that it holds in a 
similar way as in \cite{ARES}.

% \cgadd{[Ayta na pane stin apodiksi toy pio katw theorem]}

% \begin{itemize}
%     \item {\em Validity} is satisfied since each tag is unique, as the tag maintained by each server process in the system is monotonically increasing. 
%     \item ({\em Consolidation}) If a $\trw{ver_j}{*}\in\wSet_{\EX,succ}$ then $ver_j$ is larger than any version written by a preceding successful write operation.
%     \item ({\em Continuity}) if $\trw{ver}{ver_i}\in\wSet_{\EX, succ}$, then $ver$ was written by a preceding write operation or $ver=\bot$ the initial version
% 	\item ({\em Evolution}) The version of the object is incrementally evolving and thus for two version `chains' formed by concurrent writes on a single initial version $ver$, the last version of the longest chain is larger than the latest version on the shorter chain.
% \end{itemize}

% \nn{explain that the versions in \coARES{} are the tags (i.e., pairs of integers with writer ids) and explain how they can be compared. }

% \nn{proof properties into separate lemmas and show for a single configuiration c. then  show that the theorem holds even when reconfiguring.}

The validity and coverability properties, as defined in Definitions~\ref{def:validity} and~\ref{def:coverability}, remain to be examined. In \coARES{}, we use tags to denote the version of the register. 
% The tags are compared lexicographically (first the number $\tg{}.ts$ and then the writer identifier to break ties). 
% \at{We are concerned with only configuration $c$, and therefore, in our proofs it suffices to examine only one
% configuration.} 
Given that the $DAP(c)$ used in any configuration $c\in \confSet$ satisfies Property~\ref{property:dap}, we will show that any execution $\EX$ of \coARES{} satisfies the properties of Definitions~\ref{def:validity} and~\ref{def:coverability}. In the lemmas that follow we refer to a successful write operation to the one that is not converted to a read operation.

We begin with some lemmas that help us show that \coARES{} satisfies \emph{Validity}. 
% \nn{validity property 1 -- new version larger than current version of the writer. This is guarnteed from line 17 as the writer checks that the tag retrieved from the \act{get-data} operation is equal to its local version. If that holds then the writer generates a new version 
% larger than its local version, by incrementing the tag received.}

\begin{lemma}[Version Increment]
\label{cover-correctness:validity:prop1}
In any execution $\EX$ of \coARES{}, if $\wrt$ is a successful write operation, and $ver$ the maximum version it discovered during the $\act{get-data}$ operation, then $\wrt$ propagates a version $ver'>ver$.
% An operation changes the version of the object to a larger version, according to the total ordering of the versions.
\end{lemma}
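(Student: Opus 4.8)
The plan is to trace through the \act{cvr-write} protocol of Algorithm~\ref{algo:read_writeProtocol} and show that a successful write necessarily increments the timestamp component of the tag. First I would recall from the description that a write $\wrt$ by writer $\wrt_i$ is \emph{successful} (not converted to a read) precisely when the condition on line~\ref{line:writer:flag_True} holds, i.e.\ $version = \tg{}$ where $\tg{}$ is the maximum tag discovered via the \act{get-data} invocations in lines~\ref{line:writer:lastfin}--\ref{line:writer:max}. Setting $ver = \tg{}$ (the maximum version discovered), the protocol then executes lines~\ref{line:writer:flag_True:start}--\ref{line:writer:flag_True:end}, which reassign $\tup{\tg{},v} \gets \tup{\tup{\tg{}.ts+1, \wrt_i}, val}$. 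So the new tag propagated in the second phase (lines~\ref{line:writer:whilebegin}--\ref{line:writer:whileend}, via \act{put-data}) has timestamp $ver.ts + 1$.

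The key step is then to argue that, under the tag ordering used throughout \ares{}, a tag with timestamp $ver.ts+1$ is strictly greater than $ver$. This follows from the definition of the tag order given in the model section: a tag is a pair $(ts, wid)$ ordered lexicographically, first by timestamp and then by writer id. Since the timestamp strictly increases ($ver.ts+1 > ver.ts$), the lexicographic comparison is decided on the first coordinate regardless of the writer id, giving $ver' = \tup{ver.ts+1, \wrt_i} > ver$. I would also note that nothing between the \act{get-data} phase and the \act{put-data} phase alters $\tup{\tg{},v}$ once it has been set on lines~\ref{line:writer:flag_True:start}--\ref{line:writer:flag_True:end} — the \act{read-config} calls and the while-loop only affect which configuration the \act{put-data} targets, not the value being propagated — so the version that $\wrt$ propagates is exactly $ver'$, and the claim follows.

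The main obstacle, such as it is, is mostly a matter of careful bookkeeping rather than mathematical depth: one has to be precise that the ``version $\wrt$ propagates'' refers to the tag carried by the \act{put-data} invocations and that ``maximum version discovered during \act{get-data}'' refers to the value of $\tg{}$ at the end of the loop on lines~\ref{line:writer:lastfin}--\ref{line:writer:max}, consistently with how the subsequent lemmas and the coverability/validity definitions use the word ``version.'' One should also double-check the edge case where the maximum discovered tag equals $version = \tup{0,\bot}$ (the initial state), which is handled uniformly since $0+1 > 0$. No appeal to Property~\ref{property:dap} is needed for this particular lemma — it is a purely local argument about the writer's code — though Property~\ref{property:dap} will be needed in the companion lemmas that relate discovered tags across operations.
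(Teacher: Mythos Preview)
Your proposal is correct and follows the same approach as the paper: both argue directly from the code of \act{cvr-write}, observing that a successful write passes the check on line~\ref{line:writer:flag_True} and then increments the timestamp on line~\ref{line:writer:flag_True:end}, yielding a strictly larger tag. Your version is more explicit (spelling out the lexicographic comparison, the invariance of $\tup{\tg{},v}$ through the propagation loop, and the initial-state edge case), but the underlying argument is identical.
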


\begin{proof}
This lemma follows from the fact that \coARES{} uses a condition before the propagation phase in line Alg.~\ref{algo:read_writeProtocol}:\ref{line:writer:flag_True}. The writer checks if the maximum tag retrieved from the \act{get-data} action is equal to the local $version$. If that holds, then the writer generates a new version larger than its local version by incrementing the tag found. 
% first property of validity \ref{def:validity}
\end{proof}

% \nn{Valitity propery 2 -- We need to show that each tag/version generated is unique. First examine the case where the writer is the same and then the case where the tag is generated by two different writers (differentiated by the writer id).}

\begin{lemma}[Version Uniqueness]
\label{cover-correctness:validity:prop2}
In any execution $\EX$ of \coARES{}, if two write operations 
$\wrt_1$ and $\wrt_2$, write values associated with versions 
$ver_1$ and $ver_2$ respectively, then $ver_1\neq ver_2$.
% The versions are \emph{unique}, 
% i.e. no two operations associate two states with the same version. 
% This can be easily 
% achieved by, for example, recording a counter and the id of the invoking 
% process in the version of the object. 
\end{lemma}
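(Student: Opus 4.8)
The plan is to argue by contradiction: suppose $\wrt_1$ and $\wrt_2$ are distinct write operations that both succeed (i.e., are not converted to reads) and that produce the same version $ver_1 = ver_2 = \tup{ts, w}$. Since a tag consists of a timestamp and a writer id, the writer-id component $w$ identifies which writer produced the version, so both $\wrt_1$ and $\wrt_2$ must be operations by the same writer $\wrt_j = w$. Hence the two operations are invoked by a single sequential client, and without loss of generality $\wrt_1$ precedes $\wrt_2$ in that client's local order (operations at one client are sequential).

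Next I would track the writer's local state variable $version$ across the two operations. By Lemma~\ref{cover-correctness:validity:prop1}, when $\wrt_1$ succeeds it propagates $ver_1 > ver$, where $ver$ is the maximum version it discovered via \act{get-data}; moreover, inspecting lines Alg.~\ref{algo:read_writeProtocol}:\ref{line:writer:flag_True:start}--\ref{line:writer:flag_True:end} and Alg.~\ref{algo:read_writeProtocol}:\ref{line:writer:update_version}, the successful write first sets $\tup{\tg{},v}\gets\tup{\tup{\tg{}.ts+1,\wrt_j},val}$ and then sets $version\gets\tg{}$, so after $\wrt_1$ completes the writer's $version$ equals $ver_1$. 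Now consider $\wrt_2$. During its \act{get-data} phase $\wrt_2$ must discover a tag $\tg{}\geq ver_1$: this follows from Property~\ref{property:dap} (C1) applied to the $\act{put-data}(\tup{ver_1,\cdot})$ that $\wrt_1$ completed before $\wrt_2$ was invoked — together with the \ares{} sequence-prefix and configuration-uniqueness properties, which guarantee $\wrt_2$ queries (at least) a configuration in which $\wrt_1$'s value was ingested. For $\wrt_2$ to also succeed, line Alg.~\ref{algo:read_writeProtocol}:\ref{line:writer:flag_True} requires $version = \tg{}$, i.e.\ the discovered maximum equals the writer's stored $version = ver_1$, so $\tg{} = ver_1$; then $\wrt_2$ produces $ver_2 = \tup{ver_1.ts+1,\wrt_j} > ver_1 = ver_2$, a contradiction.

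The one delicate point — and the step I expect to be the main obstacle — is establishing that $\wrt_2$'s \act{get-data} actually returns a tag at least as large as $ver_1$. This is not immediate from the DAP property alone, because \coARES{} is reconfigurable: $\wrt_1$ may have propagated $ver_1$ in a configuration that is no longer the one $\wrt_2$ scans first. The argument here must invoke the \ares{} reconfiguration invariants recalled in Section~\ref{sec:ares} (configuration uniqueness, sequence prefix, sequence progress) exactly as in the linearizability proof of \ares{}~\cite{ARES}: the value associated with $ver_1$ is carried forward by \act{get-data}/\act{put-data} into every configuration that a later operation traverses, starting from the last finalized one. Once that monotonicity-of-propagated-tag fact is in hand, the rest of the argument is the short local-state bookkeeping above. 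An alternative, slightly cleaner packaging is to skip the contradiction and argue directly: either $ver_1$ and $ver_2$ are produced by different writers (different writer-id component, done immediately), or by the same writer in some order, and then the strict increment of Lemma~\ref{cover-correctness:validity:prop1} combined with the fact that the later operation sees a tag $\geq$ the earlier produced version forces $ver_1 \neq ver_2$.
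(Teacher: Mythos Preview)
Your proposal is correct and follows essentially the same approach as the paper. Both arguments split on whether the two writes come from the same writer (handled via well-formedness, DAP Property~\ref{property:dap}~C1, and the \ares{} reconfiguration invariants to ensure the later write's \act{get-data} returns a tag $\geq ver_1$, hence the incremented $ver_2 > ver_1$) or from different writers (handled immediately by the writer-id component of the tag); the ``delicate point'' you flag about carrying $ver_1$ across configurations is exactly where the paper also spends its effort. Your detour through the local $version$ variable and the success condition at line Alg.~\ref{algo:read_writeProtocol}:\ref{line:writer:flag_True} is a harmless extra step the paper omits --- it simply uses that $ver_2$ is the discovered tag plus one --- and the ``alternative, slightly cleaner packaging'' you sketch at the end is in fact precisely the paper's own case split.
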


\begin{proof}
A tag is composed of an integer timestamp $ts$ and the id of a process $wid$. Let $w_1$ be the id of 
the writer that invoked $\wrt_1$ and $w_2$ the id
of the writer that invoked $\wrt_2$.  
To show whether the
versions generated by the two write operations 
are not equal we need to examine two cases: 
% To provide \emph{unique} versions we need to show two cases where the writers do not generate the same tag twice: 
($a$) both $\wrt_1$ and $\wrt_2$ are invoked by the same writer, i.e. $w_1=w_2$, and ($b$) $\wrt_1$ and $\wrt_2$ are invoked by two different writers, i.e. $w_1\neq w_2$. \vspace{-.8em}

\case{a} In this case, the uniqueness of the versions is achieved due to the well-formedness assumption and  the $C1$ term in Property~\ref{property:dap}. By well-formdness, writer $w_1$ can only invoke one operation at a time. Thus, the last $\act{put-data}(ver_1,*)$ of $\wrt_1$ completes before the first $\act{get-data}$ of $\wrt_2$.

If both operations are invoked and completed in the same configuration $c$ then by $C1$, the version $ver'$ returned by $c.\act{get-data}$, is $ver'\geq ver_1$. Since the version is incremented in $\wrt_2$ then $ver_2=ver'+1 > ver_1$, and hence $ver_1\neq ver_2$ as desired. 

It remains to examine the case where the $\act{put-data}$ was invoked in a configuration $c$ and the $\act{get-data}$ in a configuration $c'$. Since by well-formedness 
$\wrt_1\bef\wrt_2$, then by the sequence prefix guaranteed by the reconfiguration protocol of \ares{} (second property) the $cseq_1$ obtained during the $\act{read-config}$ action in $\wrt_1$ is a prefix of the $cseq_2$ obtained during the same action in $\wrt_2$. Notice that $c'$ is the last finalized
configuration in $cseq_2$ as this is the configuration where the first $\act{get-data}$ action of $\wrt_2$ is invoked. If $c'$ appears before $c$ in $cseq_2$ then by \coARES{} the  write operation $\wrt_2$ will invoke a $\act{get-data}$ operation in $c$ as well and with the same reasoning as before will generate a $ver_2\neq ver_1$. If now $c$ appears before $c'$ in $cseq_2$, then it must be the case that a  reconfiguration operation $r$ has been invoked concurrently or after $\wrt_2$ and added $c'$. By \ares{} \cite{ARES}, $r$, invoked a $\act{put-data}(ver')$ in $c'$ before finalizing $c'$ with $ver'\geq ver_1$. So when $\wrt_2$ invokes $\act{get-data}$ in $c'$ by $C1$ will obtain a version $ver''\geq ver'\geq ver_1$. Hence $ver_2>ver''$ and 
thus $ver_2\neq ver_1$ as needed.\vspace{-.8em}

\case{b}
When $w_1\neq w_2$ then $\wrt_1$ generates a version $ver_1=\{ts_1, w_1\}$ and $\wrt_2$ generates some version $ver_2=\{ts_2, w_2\}$. Even if $ts_1=ts_2$ the two version differ on the unique id of the writers and hence $ver_1\neq ver_2$. This completes the case and the proof.
%
% Respectively, in the ($b$) case where the writers are different, the second write operation also generates a newer version which also differs in $wid$.
\end{proof}

\begin{lemma}
\label{cover-correctness:validity:prop3}
Each version we reach in an execution is \textit{derived} (through a chain of operations) 
from the initial version of the register $ver_0$.
From this point onward we fix $\EX$ to be a valid 
execution and $\hist{\EX}$ to be its valid history.
\end{lemma}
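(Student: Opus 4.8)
The plan is to establish the remaining clause of Definition~\ref{def:validity}: for every $ver_k\in\verSet_{\EX}$ there is a sequence $ver_0,ver_1,\dots,ver_k$ with $\trw{ver_i}{ver_{i+1}}\in\wSet_{\EX,succ}$ for all $0\le i<k$. I would argue by strong induction on the timestamp $ver_k.ts$. The induction is well founded because, by inspection of Algorithm~\ref{algo:read_writeProtocol}, a successful write that discovered maximum version $ver$ generates and propagates exactly the version $\tup{ver.ts+1,\wrt_i}$ (line Alg.~\ref{algo:read_writeProtocol}:\ref{line:writer:flag_True:end}); hence every version produced by a successful write has timestamp $\ge 1$, so the only element of $\verSet_{\EX}$ with timestamp $0$ is $ver_0$, for which the empty chain works (base case).

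For the inductive step, let $ver_k\in\verSet_{\EX}$ with $ver_k\neq ver_0$. By definition of $\verSet_{\EX}$ there is a successful write $\wrt$ with $\trw{ver}{ver_k}\in\wSet_{\EX,succ}$, where $ver$ is the maximum version $\wrt$ obtained in its \act{get-data} phase; by Lemma~\ref{cover-correctness:validity:prop1} we have $ver<ver_k$, and moreover $ver.ts=ver_k.ts-1$. If $ver=ver_0$, the chain $ver_0,ver_k$ is as required; otherwise I claim $ver\in\verSet_{\EX}$, which finishes the proof, since the induction hypothesis then yields a chain $ver_0,\dots,ver$ and appending $\trw{ver}{ver_k}$ produces one for $ver_k$. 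To prove the claim, apply Property~\ref{property:dap}(C2) to the \act{get-data} invocation of $\wrt$ that returned the maximum pair $\tup{ver,v_{ver}}$: since $ver\neq ver_0$ this gives a \act{put-data}$(\tup{ver,v_{ver}})$ operation in $\EX$; let $\pi$ be the one with the earliest invocation. If $\pi$ were a step of a read, a reconfiguration, or an unsuccessful write, the enclosing operation would have completed a \act{get-data} returning $\tup{ver,v_{ver}}$ before invoking $\pi$, and C2 applied to that \act{get-data} would yield a \act{put-data}$(\tup{ver,\cdot})$ invoked no later than its completion, hence strictly before $\pi$ --- contradicting the minimality of $\pi$. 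Therefore $\pi$ is a step of a successful write $\wrt'$, which by the algorithm propagates precisely the version it freshly generated, namely $ver$; hence $\trw{\cdot}{ver}$ completes and $ver\in\verSet_{\EX}$.

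The main obstacle is this last sub-claim. It requires threading the comparatively weak timing guarantee of Property~\ref{property:dap}(C2) (``$\phi$ did not complete before the invocation of $\pi$'') through a case analysis over all operation types, and --- reading the definition of $\wSet_{\EX,succ}$ strictly --- also arguing that the generating write $\wrt'$ actually completes, which it does because $\wrt$'s \act{get-data} terminating with the pair $\tup{ver,v_{ver}}$ witnesses that $ver$ was propagated far enough for $\wrt'$ to reach its response. Everything else is routine bookkeeping with the finite-timestamp induction.
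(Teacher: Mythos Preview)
Your argument follows the same skeleton as the paper's proof—induction combined with Property~\ref{property:dap}(C2) to trace each version back through the \act{get-data}/\act{put-data} chain to $ver_0$—but you make explicit the step the paper leaves implicit: the case analysis showing that the \emph{earliest} \act{put-data} carrying $ver$ cannot belong to a read, a reconfiguration, or an unsuccessful write (each of which would have obtained $ver$ from a prior \act{get-data} and hence, by C2, from a strictly earlier \act{put-data}), so it must stem from a successful write. That is exactly the bridge from ``some \act{put-data} wrote $ver$'' (the paper's level of detail) to ``$ver\in\verSet_{\EX}$'' (what Definition~\ref{def:validity} actually demands); the one remaining loose thread—that the generating write $\wrt'$ in fact \emph{completes}—is handled only informally in your last paragraph and is equally glossed over in the paper's sketch.
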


\begin{proof}
Every tag is generated by extending the tag retrieved by a \act{get-data} operation starting from the initial tag (lines Alg.~\ref{algo:read_writeProtocol}:\ref{line:writer:flag_True:start}--\ref{line:writer:flag_True:end}). In turn, each \act{get-data} operation returns 
a tag written by a \act{put-data} operation or the initial tag (as per $C2$ in Property \ref{property:dap}). Then, applying a simple induction, we may show that 
there is a sequence of tags leading from the initial tag to the tag used by the write operation. % 3rd valitidy property
\end{proof}

% \begin{lemma}
% \label{lem:coverability}
% \coARES{} satisfies the validity properties (Definition~\ref{def:validity}) that define explicitly the set of executions that are considered to be valid executions.
% \end{lemma}

% \begin{proof}
% The theorem follows from Lemmas~\ref{cover-correctness:validity:prop1},~\ref{cover-correctness:validity:prop2} and~\ref{cover-correctness:validity:prop3}, which satisfy the three properties of \emph{Validity} presented in Definition~\ref{def:validity}.   
% \end{proof}

% \nn{You need to talk about get-data and put-data and their relationship based on Property \ref{property:dap}. Note that  the get-data of $\wrt_2$ appears after the put-data of $\wrt_1$ and thus according to C1 the get-data returns a tag higher than the one written by $\wrt_1$. You also need to show that this holds even if we reconfigure.}.

\begin{lemma}
\label{lem:coverability}
In any execution $\EX$ of 
\coARES{}, all the coverability properties of Definition~\ref{def:coverability} are satisfied.
\end{lemma}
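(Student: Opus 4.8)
The plan is, after fixing the total order $<_{\EX}$ on $\wSet_{\EX,succ}$, to verify \textbf{Consolidation}, \textbf{Continuity} and \textbf{Evolution} one at a time, leaning on Lemmas~\ref{cover-correctness:validity:prop1}--\ref{cover-correctness:validity:prop3} and on Property~\ref{property:dap}. Define $<_{\EX}$ by declaring $\op_1<_{\EX}\op_2$ exactly when the version (tag) produced by $\op_1$ is strictly smaller, in the lexicographic tag order, than the version produced by $\op_2$. By Lemma~\ref{cover-correctness:validity:prop2} distinct successful writes produce distinct versions and tags are totally ordered, so $<_{\EX}$ is a well-defined total order on $\wSet_{\EX,succ}$. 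Two elementary observations will be reused: (i) a successful write $\op$ first discovers a maximum version $ver$ via $\act{get-data}$ and then propagates, via $\act{put-data}$, exactly the version it produces, whose timestamp is $ver.ts+1$ (line Alg.~\ref{algo:read_writeProtocol}:\ref{line:writer:flag_True:end}); hence each successful write increments the timestamp of the discovered maximum by exactly one; and (ii) by Lemma~\ref{cover-correctness:validity:prop3} and the validity of $\hist{\EX}$, every version in $Versions_{\EX}$ lies on a chain of successful writes rooted at $ver_0$.

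For \textbf{Consolidation}, take $\op_1=\trw{*}{ver_i}$ and $\op_2=\trw{ver_j}{*}$ in $\wSet_{\EX,succ}$ with $\op_1\bef_{\hist{\EX}}\op_2$. Since $\op_1$'s response precedes $\op_2$'s invocation, every $\act{put-data}(\tup{ver_i,\cdot})$ step of $\op_1$ completes before the first $\act{get-data}$ step of $\op_2$. Reusing the configuration-tracking argument of Lemma~\ref{cover-correctness:validity:prop2} (Case~$a$) --- which combines $C1$ of Property~\ref{property:dap} with the sequence-prefix and sequence-progress guarantees of \ares{}'s reconfiguration, so that $\op_2$'s first-phase $\act{get-data}$ sweep necessarily visits, directly or transitively through a concurrent $\act{recon}$'s $\act{put-data}$, the configuration into which $ver_i$ was ingested --- the maximum version $ver_j$ that $\op_2$ discovers satisfies $ver_j\geq ver_i$; this is the first conclusion. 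Moreover $\op_2$ produces a version whose timestamp equals $ver_j.ts+1$, strictly exceeding $ver_i.ts$, so the version produced by $\op_2$ is strictly larger in the tag order than $ver_i$, the version produced by $\op_1$; hence $\op_1<_{\EX}\op_2$.

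For \textbf{Continuity}, let $\op_2=\trw{ver}{ver_i}\in\wSet_{\EX,succ}$, so $ver_i\in Versions_{\EX}$. If $ver=ver_0$ we are done; otherwise, by the third clause of Definition~\ref{def:validity} there is a chain $u_0=ver_0,u_1,\ldots,u_k=ver_i$ of successful writes (i.e.\ $\trw{u_j}{u_{j+1}}\in\wSet_{\EX,succ}$ for $0\leq j<k$), and by Lemma~\ref{cover-correctness:validity:prop2} the unique step producing $u_k=ver_i$ is $\op_2$ itself, forcing $u_{k-1}=ver$; since $ver\neq ver_0$ we have $k\geq 2$, so $\trw{u_{k-2}}{ver}\in\wSet_{\EX,succ}$ gives a successful write $\op_1=\trw{*}{ver}$. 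As $\op_1$ produces $ver$ and $\op_2$ produces $ver_i>ver$ (validity, first clause), we get $\op_1<_{\EX}\op_2$. For \textbf{Evolution}, applying observation (i) step by step along the two chains from $ver$ in the statement, the timestamps telescope: $ver'.ts=ver.ts+(k-1)$ and $ver''.ts=ver.ts+(\ell-1)$; since $k<\ell$ this gives $ver'.ts<ver''.ts$, hence $ver'<ver''$ in the tag order.

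The main obstacle is \textbf{Consolidation}: establishing that $\op_2$'s first phase actually observes the version $ver_i$ installed by the preceding $\op_1$ in the presence of concurrent reconfigurations that may advance the last finalized configuration --- precisely the delicate case analysis already carried out in Lemma~\ref{cover-correctness:validity:prop2} (Case~$a$), which I would invoke rather than repeat. Everything else is timestamp bookkeeping together with the chosen definition of $<_{\EX}$.
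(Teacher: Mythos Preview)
Your proof is correct and follows the same overall decomposition as the paper (verify Consolidation, Continuity, Evolution in turn, using tags as versions and timestamp-increment bookkeeping for Evolution). Two minor points of contrast are worth recording: (i) you explicitly fix the total order $<_{\EX}$ by tag of the produced version and then discharge the $\op_1<_{\EX}\op_2$ clause of Consolidation, whereas the paper leaves that order and that clause implicit; (ii) for Continuity you route through the validity chain of Definition~\ref{def:validity}, while the paper appeals directly to term $C2$ of Property~\ref{property:dap} (the tag returned by $\act{get-data}$ was written by some $\act{put-data}$ or is $t_0$). Both routes work; yours is slightly more careful about the multi-configuration setting in Consolidation by reusing the case analysis of Lemma~\ref{cover-correctness:validity:prop2} rather than invoking $C1$ alone.
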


\begin{proof}
For \emph{consolidation} we need to show that for two write 
operations $\wrt_1=\cvrw{*}{\tg{1},chg}$ and $\wrt_2=\cvrw{\tg{2}}{*,chg}$,
if $\wrt_1\bef_{\EX}\wrt_2$ then $\tg{1} \leq \tg{2}$.    
According to $C1$ of Property~\ref{property:dap}, since the \act{get-data} of $\wrt_2$ appears after the \act{put-data} of $\wrt_1$, the \act{get-data} of $\wrt_2$ returns a tag higher than the one written by $\wrt_1$.

\emph{Continuity} is preserved as a write operation first invokes a \act{get-data} action for the latest tag before proceeding to \act{put-data} to write a new value. According to $C2$ of Property~\ref{property:dap}, the \act{get-data} action returns a tag already written by a \act{put-data} or the initial tag of the register.   

To show that \emph{evolution} is preserved, we take into account that the version of a register is given by
its tag, where tags are compared lexicographically.
A successful write $\op_1=\cvrw{\tg{}}{\tg{}'}$ generates a new tag $\tg{}'$ from $\tg{}$ such
that $\tg{}'.ts = \tg{}.ts + 1$ (line Alg.~\ref{algo:read_writeProtocol}:\ref{line:writer:flag_True:end}). Consider sequences of tags $\tg{1}, \tg{2},\ldots, \tg{k}$
and $\tg{1}', \tg{2}',\ldots, \tg{\ell}'$ such that $\tg{1}=\tg{1}'$.
Assume that $\cvrw{\tg{i}}{\tg{i+1}}$, for $1\leq i<k$, and $\cvrw{\tg{i}'}{\tg{i+1}'}$, for $1\leq i<\ell$,
are successful writes.
If $\tg{1}.ts=\tg{1}'.ts=z$, then $\tg{k}.ts=z+k$ and $\tg{\ell}'.ts=z+\ell$, and if $k < \ell$ then $\tg{k} < \tg{\ell}'$.
\end{proof}

The main result of this section follows:

\begin{theorem}
\coARES{} implements an atomic coverable object given that the DAPs  
implemented in any configuration $c$ satisfy Property \ref{property:dap}.
\end{theorem}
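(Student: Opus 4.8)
The plan is to establish the two halves of the claim separately and then glue them together: first that \coARES{} is linearizable (atomic), then that every execution of \coARES{} is valid and coverable, and finally to observe that the total order witnessing coverability can be chosen consistent with the linearization order, so that the two guarantees really combine into ``atomic coverable''.

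For atomicity I would reuse the correctness argument of \ares{} from \cite{ARES} essentially verbatim. The key point is that the reconfiguration machinery, the \act{read-config} action, and the propagation (second) phase of both operations are identical to \ares{}, and the first phase of a \act{cvr-read} is identical to an \ares{} read. A \act{cvr-write} differs only in that (i) it issues \act{get-data} instead of \act{get-tag} in its query phase — which is no weaker, and Property~\ref{property:dap}(C1) is stated for both primitives — and (ii) when it does not hold the latest version it ``downgrades'' to a read, propagating the discovered maximum $\tup{\tg{},v}$ unchanged instead of an incremented pair. Hence, at the level of the tags and values exchanged with configurations, every \coARES{} operation behaves exactly like some \ares{} read or write, and the three properties the traversal preserves (configuration uniqueness, sequence prefix, sequence progress) are untouched. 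Therefore the linearization built in \cite{ARES} — order operations by their $\tup{\tg{},v}$ pair, place writes before reads on ties, and respect real-time order — still works, giving atomicity.

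For coverability I would simply assemble the lemmas already proved. Fix an arbitrary execution $\EX$ and let $<_\EX$ be the order induced on $\wSet_{\EX,succ}$ by the versions (tags) the successful writes produce; since by Lemma~\ref{cover-correctness:validity:prop2} distinct successful writes carry distinct versions and versions are totally ordered, $<_\EX$ is a genuine total order. Lemma~\ref{cover-correctness:validity:prop1} gives the first clause of Validity ($ver<ver'$ for each successful write), Lemma~\ref{cover-correctness:validity:prop2} the second (distinct versions), and Lemma~\ref{cover-correctness:validity:prop3} the third (every version is derived from $ver_0$ through a chain of successful writes), so $\EX$ is valid. Lemma~\ref{lem:coverability} then delivers \emph{consolidation}, \emph{continuity}, and \emph{evolution} for this $<_\EX$; here I would just make sure the ``$\op_1<_\EX\op_2$'' conjunct of consolidation is covered, which it is: $\wrt_1\bef_\EX\wrt_2$ forces $\tg{1}\leq\tg{2}$, and since $\wrt_1$ is successful its produced tag $\tg{1}$ is strictly below the tag it propagated, which is in turn $\leq\tg{2}$, so $\tg{1}<\tg{2}$, i.e. $\op_1<_\EX\op_2$.

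The main obstacle I anticipate is the seam between the two halves: ensuring the linearization order used for atomicity and the version order $<_\EX$ used for coverability are mutually consistent (so we get a single execution that is atomic \emph{and} coverable, not two unrelated orderings), and, within that, carefully handling cross-configuration behaviour — a query phase may observe a tag planted by a concurrent \act{recon} that performed \act{put-data} before finalizing, so the claim that a later successful write strictly increments past an earlier one must route through the sequence-prefix property exactly as in the proof of Lemma~\ref{cover-correctness:validity:prop2}. Once those points are pinned down, the theorem follows by combining the atomicity argument with Lemmas~\ref{cover-correctness:validity:prop1}--\ref{cover-correctness:validity:prop3} and Lemma~\ref{lem:coverability}.
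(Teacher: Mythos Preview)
Your proposal is correct and follows essentially the same approach as the paper: derive atomicity from the \ares{} correctness argument (noting that the only changes to the write protocol are harmless), and obtain validity and coverability by assembling Lemmas~\ref{cover-correctness:validity:prop1}--\ref{cover-correctness:validity:prop3} and Lemma~\ref{lem:coverability}. You are more explicit than the paper about why atomicity transfers and about the compatibility of the linearization order with $<_\EX$, but the decomposition and the lemmas invoked are identical.
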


\begin{proof}
Atomicity follows from the fast that \ares{} implement an atomic object
if the DAPs satisfy Property~\ref{property:dap}.
Lemmas~\ref{cover-correctness:validity:prop1},~\ref{cover-correctness:validity:prop2} and~\ref{cover-correctness:validity:prop3} show that \coARES{} satisfies
validity (see Definition~\ref{def:validity}), and Lemma~\ref{lem:coverability}
that \coARES{} satisfies the coverability properties (see Definition \ref{def:coverability}). Thus the theorem follows. 
\end{proof}

\section{\fcoARES{}: Integrate \coARES{} with a Fragmentation approach}~\label{sec:fragmentation:impl}
\newcommand{\file}{f}
\newcommand{\block}{b}
The work in~\cite{SIROCCO_2021} developed a distributed storage framework, called \frfs{}, which utilizes coverable fragmented objects.
% \emph{Coverability}, presented in ~\cite{coverability}, is a variance of linearizability which is more suited for versioned objects like files. 
\frfs{} adopts a modular architecture, separating the object fragmentation process from the shared memory service allowing it to use different shared memory implementations. 

In this section we describe how 
\coARES{} can be integrated with  \frfs{} to obtain what we call \fcoARES{}, thus yielding a dynamic 
%version of \frfs{}
{consistent storage suitable for large objects}. Furthermore,  this enables to 
%add on the striping 
{combine the fragmentation} approach of \frfs{} 
%(that the object is broken into blocks) 
with a second level of striping when \ecdap{} is used with  \coARES{}, making this version of \fcoARES{} {more 
storage efficient at the replica hosts}.
%a two-level striping dynamic storage system.} 
A particular challenge of this integration is how the fragmentation approach should invoke reconfiguration operations, since \frfs{} in~\cite{SIROCCO_2021} considered only static (non-reconfigurable) systems. We first describe \fcoARES{} and then we present its (non-trivial) proof of correctness.

\subsection{Description}

% the main idea of operations
% The initial object is striped into blocks using our fragmentation approach and stored in a distributed fashion among the replica servers. 
% A write of a fragmented object performs update block operations only for the new or modified blocks of the object. A read of a fragmented object performs a sequence of read block operations (starting from block b0 and traversing the list of blocks) to obtain and return the value of the fragmented object. \at{While a reconfig of a fragmented object performs a sequence of reconfig block operations (starting from block b0 and traversing the list of blocks).}

% As we already mentioned, the type of erasure coding we use is $(n, k)$-Reed-Solomon code. Using an $[n, k]$ erasure code splits a value $v$ into $k$ elements and then creates $n$ \myemph{coded elements}, and stores one coded element per server.

\begin{figure*}[!htbp]
    \begin{minipage}{0.48\linewidth}
    \includegraphics[width=\linewidth, height=0.5\linewidth]{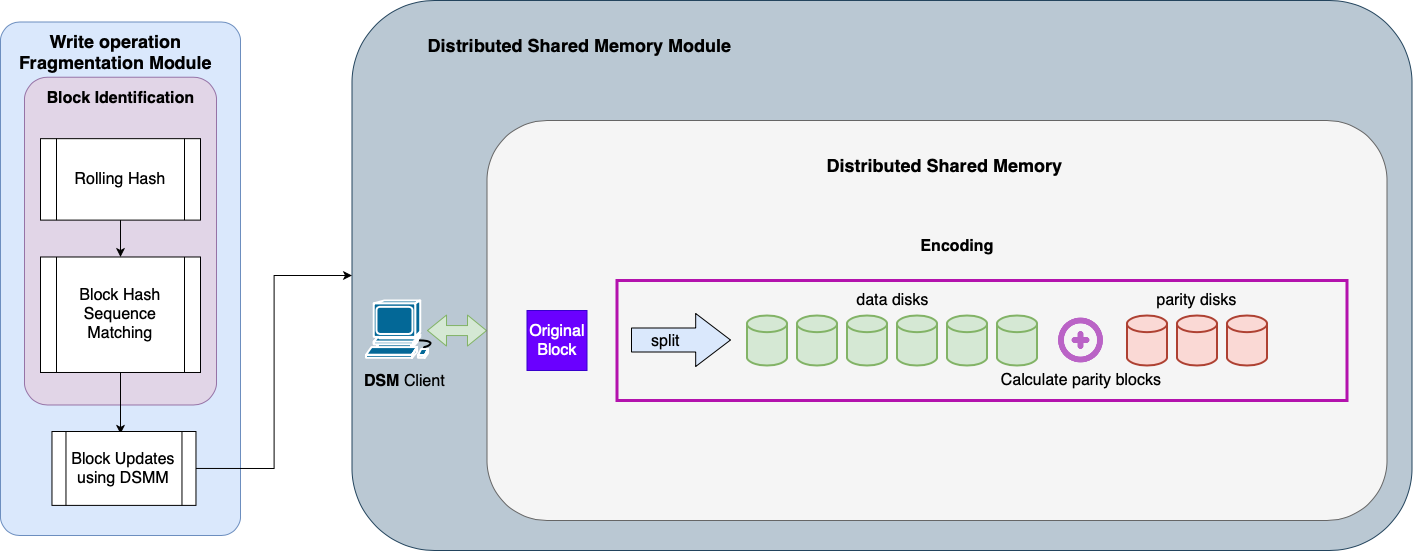}
    \caption{Update operation.}
    \label{fig:update}
    \end{minipage}
    \hfill
    \begin{minipage}{0.48\linewidth}
    \includegraphics[width=\linewidth, height=0.5\linewidth]{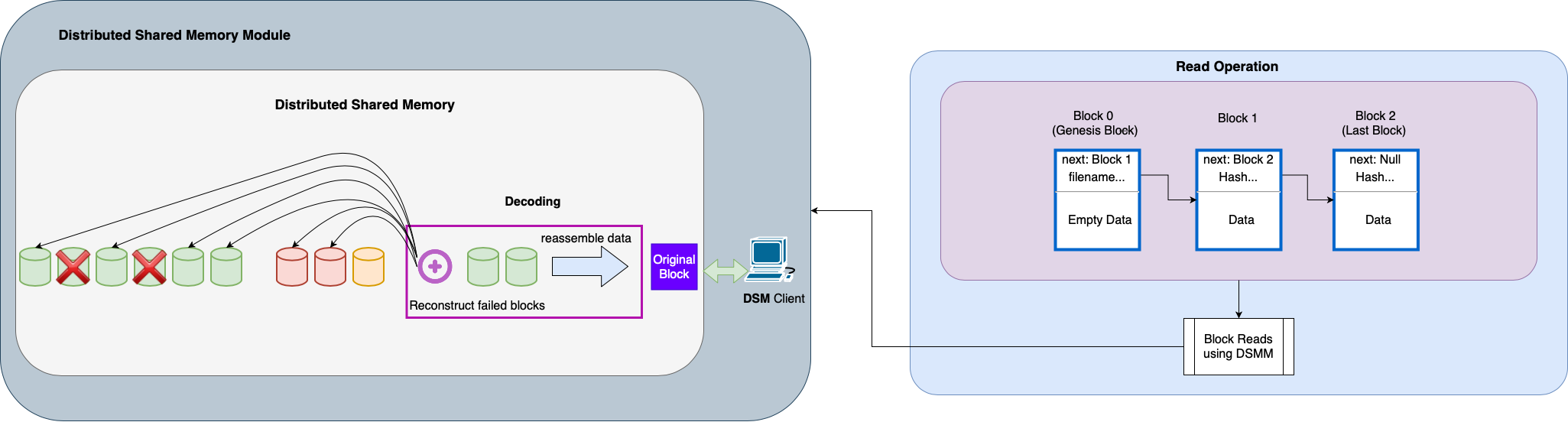}
    \caption{Read operation.}
    \label{fig:read}
    \end{minipage}\vspace{-1em}
\end{figure*}

We proceed with a description of the update, read and reconfig operations. From this point onward, we consider {\em files}, as an example of fragmented objects. To this respect, we view a file as a linked-list of data blocks. Here, the first block, i.e., the  {\em genesis block} $b_0$, is a special type of a block that contains specific file information (such as the number of blocks, etc); see~\cite{SIROCCO_2021} for more details. % in more detail. 

\myparagraph{Update Operation} (Fig.~\ref{fig:update}){\bf .}
The update operation spans two main modules: (i) the Fragmentation Module (FM), and (ii) the Distributed Shared Memory Module (DSMM). 
The FM uses a \emph\textbf{Block Identification (BI) module}, which draws ideas from the RSYNC (Remote Sync) algorithm~\cite{rsync}. The BI includes three main modules, the {\em Block Division}, the {\em Block Matching} and {\em Block Updates}.%\vspace{-.5em}

\begin{enumerate}[]%[leftmargin=3mm]
    \item{\em Block Division:} %Initially, t
    The BI splits a given file $f$ into data blocks based on its contents, using \emph{rabin fingerprints}~\cite{Rabin1981}. 
    
    BI has to match each hash, generated by the rabin fingerprint from the previous step, to a block identifier.

    \item {\em Block Matching:} At first, BI uses a string matching algorithm~\cite{stringMatching} to find the differences between the new hashes and the old hashes in the form of four \emph{statuses}: ($i$) equality, ($ii$) modified, ($iii$) inserted, ($iv$) deleted. 

    \item{\em Block Updates:} 
    Based on the hash statuses, the blocks of the fragmented object are updated. In the case of equality, no operation is performed. In case of modification, an $update$ operation is then performed to modify the data of the block. If new hashes are inserted after the hash of a block, then an $update$ operation is performed to create the new blocks after that. The deleted one is treated as a modification that sets an empty value. 
\end{enumerate}

Subsequently, the FM uses the DSMM as an external service to execute the $block$ $update$ operations on the shared memory. As we already mentioned, we use~\coARES{} as storage which is based on the $(n, k)$-Reed-Solomon code. %Using an $[n, k]$ erasure code 
It splits the value $v$ of a block into $k$ elements and then creates $n$ \myemph{coded elements}, and stores one coded element per server. 

\myparagraph{Read Operation} (Fig.~\ref{fig:read}){\bf .} When the system receives a read request from a client, the FM issues a series of read operations on the file's blocks, starting from the genesis block and proceeding to the last block by following the next block ids. As blocks are retrieved, they are assembled in a file. 

As in the case of the $update$ operation, the $read$ executes the $block$ $read$ operations on the shared memory. \coARES{} regenerates the value of a block using data from parity disks and surviving data disks.

\myparagraph{Reconfig Operation.} 
The specification of reconfig on the DSS is given in Algorithm \ref{code:smm}, while the specification of reconfig on a file (fragmented object) is given in Algorithm~\ref{code:BI}. 
%In \cite{SIROCCO_2021}, there are the corresponding specifications for read and write operations.\cgadd{[CG: }
%\at{Note that the variable $\mathcal{L}_\file$ denotes the linked-list of blocks associated with a fragmented object id $f$, and its initial value is a list that includes the \textit{genesis} block $b_{g}$.}
When the system receives a reconfig request from a client, the FM issues a series of reconfig operations on the file's blocks, starting from the genesis block and proceeding to the last block by following the next block ids (Algorithm~\ref{code:BI:reconfig}). 
The $reconfig$ operation executes the $block$ $reconfig$ operations on the shared memory (Algorithm~\ref{code:smm}) using $\act{dsmm-reconfig}$ operations. 

As shown in Theorem~\ref{thm:coaresf}, the blocks' sequence of a fragmented object remains connected despite the existence of concurrent read/write and reconfiguration operations.

%\cgadd{[CG: The code needs to be explained. Something like: As in \coARES{}, before a read takes places, config is read, here, since the file is a linked-list of blocks, before each read on a block, config must be read. Here one needs to intuitively explain why there is no problem when a reconfig takes place while still reading the file.]} \atcom{we show this in the proof}\cgadd{[CG: Yes, but when the reader reads this, the first thing he will wonder is this -- what happens if there is a config change while reading the blocks, since the read is not atomic. We can simply say that this is possible and allowed, and that as we show in the correctness proof, it does not cause any correctness issue -- something like that.]} \cgadd{[CG: Also, some notation in the code is not defined -- for example, what is Lf in line 5?]}\atcom{I add a sentence for the correctness.}

% \begin{figure*}[htbp]
%     \centering
%     \includegraphics[width=0.7\textwidth]{figures/write_op.png}
%     \caption{Update operation.}
%     \label{fig:update}
% \end{figure*}

% \begin{figure*}[htbp]
%     \centering
%     \includegraphics[width=0.7\textwidth]{figures/read_op.png}
%     \caption{Read operation.}
%     \label{fig:read}
% \end{figure*}

\begin{algorithm}[!htbp]
	\scriptsize
	\caption{\small DSM Module: Operations on a coverable block object $\block$ at client $\pr$}
	\label{code:smm}
% 	\vspace*{-5mm}
		%\begin{multicols}{2}
			\begin{algorithmic}[1]
				% \State {\bf State Variables:}
				% \State $ver_\block\in\Nat$ initially $0$; $val_\block\in\valSet$ initially $\bot$;
				% \State
				\Statex
				\textcolor{blue}{
				\Function{$\act{dsmm-reconfig}$}{$c$}$_{\block, \pr}$\label{code:ssm:reconfig}
				\State $\block.\act{reconfig}(c)$ 
				\EndFunction
				}
		    \end{algorithmic}\vspace{-.2em}
	    %\end{multicols}
% 	\vspace*{-4mm}
\end{algorithm}

\begin{algorithm}[!htbp]
			\caption{\small Fragmentation Module: BI and Operations on a file $\file$ at client $\pr$}
			\label{code:BI}

			\begin{multicols}{2}
			\begin{algorithmic}[1]
			\scriptsize{

			\State {\bf State Variables:}
% 			\State $H$ initially $\emptyset$; $\ell\in\Nat$;
			\State $\mathcal{L}_\file$ a linked-list of blocks,  initially $\tup{b_0}$;
% 			\State $bc_\file\in\Nat$ initially $0$;  

			\Statex
			
			\Function{\textcolor{blue}{$\act{fm-reconfig}$}}{\textcolor{blue}{c}}$_{\textcolor{blue}{\file, \pr}}$\label{code:BI:reconfig}

                \State \textcolor{blue}{$\block \gets val(\block_{0}).ptr$}
			    \State \textcolor{blue}{$\mathcal{L}_f \gets \tup{b_0}$} 
			    %\Comment{reset $\mathcal{L}_f$}
			    \While{\textcolor{blue}{$\block~not~$NULL}}
			        \State \textcolor{blue}{$\act{dsmm-reconfig}(c)_{\block,\pr}$}
			        \State \textcolor{blue}{$\block \gets val(\block).ptr$}
			    \EndWhile
			\EndFunction
			\Statex

			}
		\end{algorithmic}
	\end{multicols}
	\vspace*{-3mm}
\end{algorithm}
% o fragmentation manager vriskei tin lista me ta blocks 
% reconfig opws to read
% read/ write "c" gia kathe block 
% ws optimization to anafernw auto  

\subsection{Correctness of \fcoARES{}}~\label{sec:fragmentation:correctness}
When a \act{reconfig(c)} operation is invoked in \ares{}, a reconfiguration client is
requesting to change the configuration of the servers hosting the single R/W object. 
In the case of a file (fragmented object) $f$, which is composed of multiple blocks,
the fragmentation manager attempts to introduce the new configuration for every 
block in $f$. To this end, \fcoARES{}, as presented in Algorithm \ref{code:BI}, 
issues a $\act{dsmm-reconfig(c)_{b_i,p}}$ operation for each block $b_i\in f$. 
Concurrent write operations may introduce new blocks in the same file.
So, how can we ensure that any new value of the blocks are propagated in any recently introduced configuration? In the rest of this section we show that 
%This verify that each block is linearizable and the 
{\em fragmented coverability} (see Section~\ref{sec:model}) cannot be violated.

% Let us first examine in which configurations a write operation
% writes its value in a successful write. 
% \begin{lemma}
% In any execution $\EX$ of \fcoARES{}, if $\wrt$ is a 
% successful $\act{update}(*)_{\file,*}$ operation that writes
% a list of blocks $\{b_i,b_i^1,\ldots,b_i^n\}$
% \end{lemma}

Before we prove any lemmas, we first state a claim that follows
directly from the algorithm.

\begin{claim}
For any block $b\neq b_0$, where $b_0$ the genesis block, 
created by an $\act{fm-update}$ operation, it is initialized 
with a configuration sequence $cseq_b=cseq_0$, where $cseq_0$ 
is the initial configuration. 
\end{claim}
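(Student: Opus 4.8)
The plan is to prove this Claim by a straightforward inspection of the algorithm together with the initialization convention stated in the preceding text. The Claim asserts that any non-genesis block $b$ created during an $\act{fm-update}$ operation carries, at creation time, a configuration sequence $cseq_b$ equal to the initial configuration sequence $cseq_0 = \tup{\tup{c_0,F}}$. Since the excerpt's displayed algorithm (Algorithm~\ref{code:BI}) only explicitly shows $\act{fm-reconfig}$, the argument will have to lean on the textual description of the Block Updates sub-module and the description of $\act{cvr-write}$ in Section~\ref{sec:coverable:impl}, where each write/update client maintains its own $cseq[\,]$ state variable with the stated initialization $cseq[0]=\tup{c_0,F}$.

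First I would recall the mechanics of block creation: in the Block Updates step, when new hashes are inserted, an $update$ (hence \coARES{} \act{cvr-write}) operation is performed to create the new block objects, and each newly created block object is instantiated with its own local copy of the configuration-sequence state. The key observation is that a freshly created block object has not yet participated in any reconfiguration: no \act{reconfig} (nor \act{dsmm-reconfig}) has ever been invoked on it, because it did not exist before this $\act{fm-update}$. Therefore its $cseq_b$ is still exactly the value set by the block-object initialization routine, which by definition is $cseq_b[0]=\tup{c_0,F}$, i.e.\ $cseq_b = cseq_0$. I would phrase this as: the only place where a block's configuration sequence is altered is inside \act{read-config} / the reconfiguration protocol, which is reachable for block $b$ only after $b$ has been created and only via an operation invoked on $b$; at the instant of creation no such operation has completed, so the sequence retains its initialized value.

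The second ingredient, to make the statement precise, is that all blocks of a given fragmented object share the \emph{same} initial configuration sequence $cseq_0$ — this is a modelling convention (all blocks of a file start life in the same genesis configuration $c_0$), which the paper has implicitly fixed and which I would state explicitly as the meaning of "$cseq_0$ is the initial configuration." Combining the two points: (i) a new block inherits the default-initialized configuration sequence, and (ii) that default is $cseq_0$, gives $cseq_b = cseq_0$, which is exactly the Claim.

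I do not expect any real obstacle here — this is a "follows directly from the algorithm" claim, as the paper itself says. The only mild subtlety worth a sentence is ensuring that block creation and the first (possible) reconfiguration of that block are genuinely ordered: one must note that $\act{fm-reconfig}$ in Algorithm~\ref{code:BI} walks the block list and issues $\act{dsmm-reconfig}$ per block, so a block not yet linked into $\mathcal{L}_f$ cannot be touched by a concurrent $\act{fm-reconfig}$, and hence its $cseq_b$ cannot have been advanced before creation completes. That is the one place where concurrency could in principle interfere, and pointing out that the traversal only reaches already-linked blocks closes the gap.
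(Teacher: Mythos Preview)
Your proposal is correct and aligns with the paper's own treatment: the paper does not give a formal proof of this Claim at all, merely stating that it ``follows directly from the algorithm'' and then adding a remark about the assumption that a quorum in $cseq_0$ stays correct. Your elaboration (initialization convention plus the observation that a not-yet-linked block cannot be touched by $\act{fm-reconfig}$) is more detailed than anything the paper supplies, but is exactly the kind of direct inspection the paper is gesturing at.
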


Notive that we assume that a single quorum remains correct 
in $cseq_0$ at any point in the execution. This may change in 
practical settings by having an external service to maintain 
and distribute the latest $cseq$ that will be used in a created block.

We begin with a lemma that states that for any block in the 
list obtained by a read operation, there is a successful update
operation that wrote this block. 
\begin{lemma}
\label{lem:update:block}
In any execution $\EX$ of \fcoARES{}, if $\rd$ is a $fm-read_{f,*}$
operation returns a list $\mathcal{L}$, then for any block $b\in\mathcal{L}$, 
there exists a successful $\act{fm-update}(*)_{\file,*}$ operation that either precedes or is concurrent to $\rd$ 
% and invokes $\act{sm-create}(val(b))_b$ operation to a configuration $c'=cseq_b[j]$ for $j\leq i$.
\end{lemma}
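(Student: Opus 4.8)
The statement to prove is Lemma~\ref{lem:update:block}: if a $\act{fm-read}$ operation $\rd$ on a file $f$ returns a list $\mathcal{L}$, then every block $b\in\mathcal{L}$ was written by some successful $\act{fm-update}$ operation that precedes or is concurrent with $\rd$. My plan is to induct on the structure of $\mathcal{L}$ as a linked list, following the pointer chain from the genesis block, and to trace each block's appearance back through the DSMM layer to an \ecdap{}/\abddap{} \act{put-data} that carried it.

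First I would set up the base case: the genesis block $b_0$ is, by the initialization in Algorithm~\ref{code:BI}, always present in $\mathcal{L}_f$, and by convention we treat the creation of $f$ (which establishes $b_0$ with $val(b_0)=\varepsilon$) as the initializing ``update'' — or, if the statement is meant to range only over blocks $b\neq b_0$ as in the preceding Claim, the base case is vacuous and the induction starts at the first block reachable from $b_0$. Then, for the inductive step, suppose $b\in\mathcal{L}$ with $b\neq b_0$. Since $\rd$ assembles $\mathcal{L}$ by starting at $b_0$ and repeatedly following $val(b').ptr$, the block $b$ was reached because some block $b'$ that precedes it in $\mathcal{L}$ had $val(b').ptr$ pointing to $b$ at the time $\rd$ performed its $\act{read}()_{b'}$. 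The read on $b'$ is, under \coARES{}, a $\act{cvr-read}$, whose returned value comes from a \act{get-data} that (by C2 of Property~\ref{property:dap}, applied in whichever configuration the value was retrieved) must have been written by a \act{put-data} — and \act{put-data} operations in the read/write protocol only propagate tag-value pairs that originated in some client's $\act{fm-update}$ (or were transferred by a \act{recon}, which in turn got them from a \act{put-data}, so by the configuration-sequence properties of \ares{} — sequence prefix and sequence progress — we can chase the value back to an originating update). So a successful $\act{fm-update}$ operation, say $\upd$, is the one that created $b$ and set the pointer in $b'$ to it.

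Next I would argue the temporal relation: $\upd$ precedes or is concurrent with $\rd$. The value $\upd$ wrote to $b'$ (the pointer to $b$) was observed by $\rd$'s read of $b'$; therefore the \act{put-data} phase of $\upd$ on $b'$ did not complete after $\rd$'s \act{get-data} on $b'$ completed — equivalently, $\upd$'s write of the pointer does not follow $\rd$ in real time, so either $\upd \bef \rd$ or they are concurrent. One subtlety: $b'$ itself may not be $b_0$, and a priori $b'$ might not be in $\mathcal{L}$ — but since $\rd$ reached $b$ via $b'$, $b'$ is on the traversal chain, hence in $\mathcal{L}$, so the induction hypothesis applies to $b'$ if needed, and more importantly the chain is finite (files are finite sequences of blocks), so the descent terminates at $b_0$.

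\textbf{Main obstacle.} The hard part will be the reconfiguration interaction: a block's pointer value may not have been propagated directly by the originating $\act{fm-update}$ into the configuration where $\rd$'s \act{get-data} read it, but rather carried forward by one or more intervening \act{recon} operations (each of which does \act{get-data} on the old last-finalized config then \act{put-data} on the new one). I expect to invoke the three \ares{} invariants quoted in Section~\ref{sec:ares} — configuration uniqueness, sequence prefix, and sequence progress — together with C1 and C2 of Property~\ref{property:dap}, to show that chasing the value through the (necessarily finite) chain of reconfigurations and \act{put-data}/\act{get-data} steps always bottoms out at a genuine $\act{fm-update}$, and that this update is not real-time-after $\rd$. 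Care is also needed because $\act{dsmm-reconfig}$ on a file (Algorithm~\ref{code:BI:reconfig}) walks the block list itself, so a concurrent reconfiguration might not yet have reached block $b$ when $b$ was created — but this only affects liveness/viability of the configuration, not the existence of the originating update, which is all this lemma asserts.
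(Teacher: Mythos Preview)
The paper's own ``proof'' of this lemma is a one-line deferral: it simply says the argument follows the proof of Lemma~4 in the cited technical report~\cite{BFS_arxiv} (the arXiv companion to the \frfs{} paper). So there is no in-paper argument to compare your proposal against line by line.

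Your outline is sound and almost certainly reconstructs what the cited lemma does. The pointer-chasing induction along $\mathcal{L}$, combined with C2 of Property~\ref{property:dap} to trace each observed value back to an originating \act{put-data}, and then the real-time argument (if $\rd$'s \act{get-data} on $b'$ saw the pointer to $b$, the \act{put-data} that wrote it cannot be entirely after $\rd$), is exactly the right skeleton. One minor simplification: you are working harder than necessary by descending all the way to the DAP layer and worrying about reconfiguration chains. Since Section~\ref{sec:coverable:correctness} has already established that each block individually is an atomic coverable object under \coARES{}, you can argue directly at the block-object level: the value $\rd$ reads from $b'$ is, by linearizability of $b'$, the value written by some \act{cvr-write} that is not linearized after $\rd$'s \act{cvr-read}; that \act{cvr-write} belongs to an $\act{fm-update}$ (the only source of block writes in the FM), and since linearization respects real-time order, that update is not real-time after $\rd$. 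The reconfiguration bookkeeping you flag as the ``main obstacle'' is already absorbed into the block-level atomicity proof and need not be re-argued here.
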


\begin{proof}
% Given the uniqueness property of the reconfiguration of \fcoARES{} \cite{ARES}, then if $i=j$, $cseq_b[i]=cseq_b[j]=c$ for both 
% operations. Then 
This lemma follows the proof of Lemma 4 presented in \cite{BFS_arxiv}.
%
% It remains to examine the case where the  $\act{fm-update}(*)_{\file,*}$
% operation completes successfully in a different configuration than 
% the one the read operation invoked in, i.e. $i\neq j$. There are 
% two cases to consider: (i) either $j<i$, or (ii) $j > i$.
\end{proof}

In the following lemma we show that a reconfiguration moves a 
version of the object larger than any version written be a preceding
write operation to the installed configuration. 

\begin{lemma}
\label{lem:write:recon}
Suppose that $\rd$ is a $\act{dsmm-reconfig}(c_2)_{b,*}$
operation and $\wrt$ a successful $\act{cvr-write}(v)_{b,*}$ operation
that changes the version of $b$ to $ver$,
s.t. $\wrt\bef\rd$ in an execution $\EX$ of \fcoARES{}. 
Then $\rd$ invokes $\daputdata{c_2}{\tup{ver',*}}$ 
in $c_2$, s.t. $ver'\geq ver$. 
\end{lemma}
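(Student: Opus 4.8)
The plan is to track how the version $ver$ that $\wrt$ writes into block $b$ survives until the data-transfer phase of $\rd$. First I would recall the anatomy of the two operations. A successful $\act{cvr-write}(v)_{b,*}$, in the last iteration of its while loop, completes a $\daputdata{c_L}{\tup{ver,v}}$ into the last configuration $c_L$ of the configuration sequence $cseq_\wrt$ it currently holds, and then performs a $\act{read-config}$ that returns a sequence of the same length (which is why $\wrt$ terminates). The operation $\rd=\act{dsmm-reconfig}(c_2)_{b,*}$, which drives the \ares{} reconfiguration on $b$, performs a $\act{read-config}$, proposes $c_2$ to the consensus of the last configuration, then runs a $\act{get-data}$ sweep over the configurations indexed $\mu,\dots,\nu$ (from the last finalized one up to the last one) of the sequence it now holds, and finally issues $\daputdata{c_2}{\tup{\tg{},v'}}$ with $\tg{}$ the maximum tag collected by the sweep; in particular $ver'=\tg{}$. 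Since $\wrt\bef\rd$, every step of $\wrt$ precedes $\rd$'s first $\act{read-config}$; by the \emph{configuration-uniqueness} and \emph{sequence-prefix} invariants of \ares{}, $cseq_\wrt$ is a prefix of the sequence $\rd$ holds, so $c_L$ sits in it at index $L$ with $L\le\nu$. Hence it suffices to show that the maximum version collected by $\rd$'s sweep is at least $ver$.

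I would then split on the position of $\mu$. If $\mu\le L$, the sweep includes a $\dagetdata{c_L}$; since $\rd$ starts after $\wrt$ completes, this $\act{get-data}$ is invoked after $\wrt$'s $\daputdata{c_L}{\tup{ver,v}}$ has completed, so term C1 of Property~\ref{property:dap} yields a returned version $\ge ver$, and we are done. If $\mu>L$, then $c_L$ lies strictly before the last finalized configuration $c_\mu$ used by $\rd$, and I would argue by induction on $j=L,L+1,\dots,\mu-1$ that the value carried into $c_{j+1}$ by the reconfiguration that adds and finalizes it has version $\ge ver$: each such reconfiguration runs its own $\act{get-data}$ sweep (covering $c_j$) before the $\act{put-data}$ that transfers the value into $c_{j+1}$ and before the subsequent finalization, so C1 applied at $c_j$ carries the bound forward, with the base case $j=L$ using that $\wrt$'s $\daputdata{c_L}{\tup{ver,v}}$ completed before $\rd$ began, and before $c_{L+1}$ was ever observable. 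Then $\rd$'s $\dagetdata{c_\mu}$, which lies in its sweep since $\mu\le\nu$, returns version $\ge ver$, and again $ver'\ge ver$.

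The main obstacle is the case $\mu>L$: making the base step and the induction rigorous requires pinning down, via \ares{}'s reconfiguration semantics, the real-time orderings between $\wrt$'s $\daputdata{c_L}{\tup{ver,v}}$, the finalization of $c_{L+1}$, and the $\act{get-data}$ sweep of the reconfiguration installing each $c_{j+1}$. Concretely, one needs that a configuration $c_{i+1}$ cannot be observed (even as proposed) by a traversal of $c_i$ before $\langle c_{i+1},\cdot\rangle$ has been recorded on a quorum of $c_i$, which $\wrt$'s last $\act{read-config}$ would have intersected had it happened earlier; and that a configuration is finalized only after the $\act{put-data}$ that transfers the value into it has completed. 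These are essentially the \emph{configuration-uniqueness}, \emph{sequence-prefix} and \emph{sequence-progress} invariants of \ares{} together with its value-transfer argument, so I would structure this part by invoking those results from \cite{ARES} rather than re-deriving them; the induction on $j$ then goes through by applying C1 at each $c_j$. A minor point to also handle: if $\rd$'s proposal loses the consensus at the last configuration, $\rd$'s $\act{put-data}$ targets the decided configuration instead of literally $c_2$, but the argument is unchanged with $c_2$ replaced by that configuration.
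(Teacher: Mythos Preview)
Your proposal is correct and follows essentially the same approach as the paper: use the sequence-prefix invariant to place $c_L$ (the last configuration of $cseq_\wrt$) inside $\rd$'s sequence, then case-split on whether $\rd$'s last finalized index $\mu$ is at or before $L$ (direct application of C1 on $c_L$) or strictly after $L$ (propagate $ver$ forward through the intervening reconfigurations' data-transfer steps). Your explicit induction on $j=L,\ldots,\mu-1$ in the second case is more rigorous than the paper's treatment, which simply asserts that the reconfiguration finalizing $c_\mu$ ``moved a version $ver''\ge ver$ in the same way as described in case~(i)'' without spelling out the chain; your version also correctly flags the real-time ordering obligations (that $c_{L+1}$ could not have been visible to $\wrt$'s final $\act{read-config}$, and that finalization follows the transfer $\act{put-data}$) and defers them to the \ares{} invariants, which is exactly how the paper handles them as well.
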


\begin{proof}
Let $cseq_\wrt$ be the last configuration sequence returned by 
the $\act{read-config}$ action at $\wrt$ (Alg. \ref{algo:read_writeProtocol}:25), and $cseq_\rd$ the configuration
sequence returned by the first $\act{read-config}$ action at $\rd$
(see Alg. 2:8 in \cite{ARES_arxiv}). By the 
 prefix property of the reconfiguration protocol, $cseq_\wrt$ will 
 be a prefix of $cseq_\rd$. 
 
 Let $c_\ell$ the last configuration in $cseq_\wrt$, and 
 $c_1$ the last finalized configuration in $cseq_\rd$.
 There are two cases to examine: (i) $c_1$ appears before 
 $c_\ell$ in $cseq_\rd$, and (ii) $c_1$ appears before 
 $c_\ell$ in $cseq_\rd$.
 
 If (i) is the case then during the $\act{update-config}$ action,
 $\rd$ will perform a $\dagetdata{c_\ell}{}$ action. By term 
 $C1$ in Property \ref{property:dap}, the $\dagetdata{c_\ell}{}$
 will return a version $ver''\geq ver$. Since the $\rd$ function 
 will execute $\daputdata{c_2}{\tup{ver',*}}$, s.t. $ver'$ is 
 the maximum discovered version, then $ver'\geq ver''\geq ver$.
 
 In case (ii) it follows that the reconfiguration operation that proposed
 $c_1$ has finalized the configuration. So either that reconfiguration 
 operation moved a version $ver''$ of $b$ s.t. $ver''\geq ver$ in the 
 same way as described in case (i) in $c_1$, or the write operation would 
 observe $c_1$ during a $\act{read-config}$ action. 
 In the latter case $c_1$ will appear in $cseq_\wrt$ 
 and $\wrt$ will invoke a $\daputdata{c_\ell}{\tup{ver,*}}$ s.t.
 either $c_\ell=c_1$ or $c_\ell$ a configuration that appears after 
 $c_1$ in $cseq_\wrt$. Since $c_1$ is the last finalized configuration
 in $cseq_\rd$, then in any of the cases described $\rd$ will invoke
 a $\dagetdata{c_\ell}$. 
 Thus, it will discover and put in $c_2$ a version $ver'\geq ver$ 
 completing our proof.
\end{proof}

Next we need to show that any sequence returned by any read operation is connected, despite any reconfiguration operations that may be executed 
concurrently.

\begin{lemma}
\label{lem:connected}
In any execution $\EX$ of \fcoARES{}, if $\act{fm-read}_{f,p}$ is a read operation 
on $f$ that returns a list of blocks $\mathcal{L}=\{b_0, b_1,\ldots,b_n\}$, then 
it must be the case that (i) $b_0.ptr= b_1$, (ii) $b_i.ptr=b_{i+1}$, for $i\in[1,n-1]$,
and (iii) $b_n.ptr=\bot$.
\end{lemma}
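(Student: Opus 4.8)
The plan is to prove Lemma~\ref{lem:connected} by induction on the structure of the block list together with an appeal to the coverability guarantees established for \coARES{} and the earlier lemmas of this section. The key observation is that a block's pointer field $val(b).ptr$ is part of the \emph{value} of the coverable block object $b$, so it is written, read, and ordered exactly like any other data stored in $b$. Thus the connectivity of the returned list $\mathcal{L}$ reduces to a statement about which tag-value pairs the \act{fm-read} operation reads for each block it visits.

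\textbf{Step 1: Pointer consistency along the traversal.} First I would argue that the \act{fm-read} operation, when it retrieves block $b_i$ via a \act{cvr-read}$_{b_i,p}$ on the DSMM, obtains a tag-value pair $\tup{\tg{},v}$ satisfying Property~\ref{property:dap} (in particular $C2$): the value $v$ (and hence the pointer $v.ptr$) was written by some \act{put-data}, i.e., it corresponds to a state of $b_i$ actually produced by a successful \act{cvr-write}, or it is the initial state. Using Lemma~\ref{lem:update:block}, for every block in $\mathcal{L}$ there is a successful \act{fm-update} that created/wrote it, so $b_i$ is a genuine block of $f$. The FM follows the next-pointer it reads from $b_i$ to determine the next block $b_{i+1}$ to query; hence by construction $b_i.ptr = b_{i+1}$ for the pointer value \emph{as read}. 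The genesis block $b_0$ is, by the Claim and the algorithm, always present and always points to the first real block, giving (i); and the traversal halts precisely when it reads a $\bot$ pointer, giving (iii).

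\textbf{Step 2: The pointer read is ``current enough.''} The subtle point — and the \textbf{main obstacle} — is that a concurrent \act{fm-update} could be inserting a new block $b'$ \emph{after} some $b_i$ while the read is in progress, or a reconfiguration could be moving block values between configurations; we must show the read never returns a list that is internally disconnected (e.g., skips a block so that $b_i.ptr$ points to something not in $\mathcal{L}$, or lands on a block whose pointer is stale relative to what a later read would see). Here I would invoke: (a) fragmented coverability / linearizability of each individual block $b_i$ (from Definition~\ref{def:fragatomic} together with the atomicity of \coARES{} established in Theorem~2), which forces the \act{cvr-read} on $b_i$ to return a value no older than any value propagated by a \act{cvr-write} on $b_i$ that completed before this read's \act{get-data} on $b_i$ began; and (b) Lemma~\ref{lem:write:recon}, which guarantees that a \act{dsmm-reconfig} carries forward a version $\geq$ any version written by a preceding \act{cvr-write}, so reconfigurations never ``lose'' a pointer update. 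Combining these, once the read has committed to visiting $b_i$ with a particular value, the pointer it reads is consistent with a legitimate state of the file, and following it lands on a block that either is already in $\mathcal{L}$-to-be or is $\bot$; an update that linearizes after this block's read simply is not reflected, which is permitted.

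\textbf{Step 3: Assemble the induction.} Formally I would induct on $i$: the base case is $b_0$ (present by the Claim, pointer read gives $b_1$, establishing (i)); the inductive step assumes $b_0,\ldots,b_i$ have been retrieved with $b_{j}.ptr = b_{j+1}$ for $j<i$, and shows the read's next query uses exactly $b_i.ptr$, so either $b_i.ptr = b_{i+1}$ for the newly retrieved $b_{i+1}$ (continuing), or $b_i.ptr=\bot$, terminating with $b_n := b_i$ and establishing (iii). Property $C1$ of Property~\ref{property:dap} plus the sequence-prefix property of the reconfiguration protocol are what guarantee that no step of the traversal reads a strictly older version of a block than an earlier traversal step's view would be consistent with, so the list cannot ``fork.'' I expect the write-up to be short modulo citing Lemmas~\ref{lem:update:block} and~\ref{lem:write:recon} and the atomicity/coverability of \coARES{}, with the only real care needed in stating precisely what ``connected'' must survive under concurrent insertion and reconfiguration.
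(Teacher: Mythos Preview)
Your approach is essentially correct for the literal statement of the lemma, but it differs markedly from the paper's proof, and you should be aware of what you are sidestepping.

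Your central observation in Step~1 is the right one: since \act{fm-read} builds $\mathcal{L}$ by repeatedly issuing \act{cvr-read} on the block whose id was just obtained as the previous block's pointer, properties (i), (ii), and (iii) hold \emph{by construction of the traversal}. Given that the hypothesis already assumes the read returns a finite list, termination at a $\bot$ pointer is automatic. This makes the lemma nearly immediate, and your Step~2 and Step~3 machinery (induction, Property~\ref{property:dap}~C1, fragmented linearizability) is largely defensive rather than load-bearing for connectivity per~se.

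The paper takes a different and heavier route. It argues by contradiction, fixes the successful $\act{fm-update}(b,D)$ that created the offending block (via Lemma~\ref{lem:update:block}), and then exploits a structural fact about updates that you never invoke: within a single update that inserts blocks $b_1,\ldots,b_k$ after $b$, the blocks are written in \emph{reverse order}, i.e., $b_{j+1}.\act{cvr-write}$ completes before $b_j.\act{cvr-write}$ is invoked, and $b$ is rewritten last. Combined with Lemma~\ref{lem:write:recon} and the prefix property of the reconfiguration protocol, this lets the paper do a three-case analysis (the block in question is the head $b$, the tail $b_k$, or an interior $b_j$) and conclude that any pointer the read follows lands on a block whose written value is already visible. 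In effect the paper is establishing something stronger than bare connectivity of the returned list: it is arguing that the pointer target is a genuine, already-written block consistent with some update's chain, which feeds into the later use of the lemma for fragmented coverability.

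So: your argument suffices for Lemma~\ref{lem:connected} as stated and is more economical. The paper's argument is longer but extracts the reverse-order-write invariant of the FM, which is the real reason a traversal never dereferences a pointer into an unwritten block; if you later need that fact (e.g., to argue validity of the returned contents, not just the shape of the list), you would have to import it separately.
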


\begin{proof}
 Assume by contradiction that there exist some $b_i\in\mathcal{L}$, s.t. $val(b_i).ptr\neq b_{i+1}$ (or $val(b_0).prt \neq b_1$). By Lemma \ref{lem:update:block}, a block $b_i$ may appear in the list returned by a read operation only if it was created by a successful update operation,
 say $\op=\act{update}(b, D)_{\file,*}$. Let $D = \tup{D_0, \ldots, D_k}$ and $\blockSet = \tup{b_1,\ldots, b_k}$ 
 be the set of $k-1$ blocks created in $\op$, with $b_i\in \blockSet$. Let us assume w.l.o.g. that all those blocks 
 appear in $\mathcal{L}$ as written by $\op$ (i.e., without
 any other blocks between any pair of them).
 
 By the design of the algorithm $\op$ generates a single linked path from $b$ to $b_k$, by pointing $b$ to 
 $b_1$ and each $b_j$ to $b_{j+1}$, for $1\leq j<k$. Block $b_k$ points 
 to the block pointed by $b$ at the invocation of $\op$, say $b'$. 
 So there exists a path $b\bef b_1\bef\ldots \bef b_i$ 
 that also leads to $b_i$. According again to the algorithm, $b_{j+1}\in \blockSet$ is created and written before $b_j$, for $q\leq j <k$. So when the $b_j.\act{cvr-write}$ is invoked, the operation 
 $b_{j+1}.\act{cvr-write}$ has already been completed, 
 and thus when $b$ is written successfully all the blocks in 
 the path are written successfully as well. 

By the prefix property of the reconfiguration protocol it follows that for each $b_j$ written by $\op$, $\rd$ will observe
 a configuration sequence $b_j.cseq_\rd$, s.t. 
 $b_j.cseq_\op$ is a prefix of $b_j.cseq_\rd$, and hence $c_\op$ appears in $b_j.cseq_\rd$. If $c_\op$ appears after the last finalized configuration $c_\ell$ in $b_j.cseq_\rd$, then the read operation will invoke 
 $\dagetdata{c_\op}$ and by the coverability property and property C1, 
 will obtain a version $ver'\geq ver$. In case $c_\op$ appears before 
 $c_\ell$ then a new configuration was invoked after or concurrently to 
 $\op$ and then by 
 Lemma \ref{lem:write:recon} it follows that the version of $b$ in $c_\ell$ is again $ver'\geq ver$.
 So we need to examine the following three cases for $b_i$:
 (i) $b_i$ is $b$, (ii) $b_i$ is $b_k$, and (iii) $b_i$ is 
 one of the blocks $b_j$, for $1\leq j <k$. 
 
 \case{i}
 If $b_i$ is the block $b$ then we should examine if $b_i.ptr\neq b_1$. Let $ver$
 the version of $b$ written by $\op$ and $ver'$ the version of 
 $b$ as retrieved by $\rd$. If $ver=ver'$ then $\rd$ retrieved the 
 block written by $\wrt$ as the versions by Lemma \ref{cover-correctness:validity:prop2} are unique.
 Thus, $b_i.ptr = b_1$ in this case contradicting our assumption. 
 In case $ver'>ver$ then there should be a successful update operation $\wrt'$
 that written block $b$ with $ver'$. There are two cases to
 consider based on whether $\wrt'$ introduced new blocks or not.
 If not then the $b.ptr = b_1$ contradicting our assumption. If 
 it introduced a new list of blocks $\{b_1',\ldots,b_k'\}$, then 
 it should have written those blocks before writing $b$. In that 
 case $\rd$ would observe $b.ptr=b_1'$ and $b_1'$ would have been part
 of $\mathcal{L}$ which is not the case as the next block from $b$ in 
 $\mathcal{L}$ is $b_1$, leading to contradiction. \vspace{-.8em}
 
 \case{ii}
 The case (ii) can be proven in the same way as case (i) for each block
 $b_j$, for $1\leq j <k$.\vspace{-.8em}
 
 \case{iii}
%  So, if now $b_i$ is different than $b_k$ by the construction of the update then both $b_i$ and $b_{i+1}$ are in the list with $val(b_i).ptr=b_{i+1}$ contradicting our assumption. 
 If now $b_i=b_k$, then we should examine if $b_i.ptr \neq b'$. 
 Since $b$ was pointing to $b'$ at the invocation 
 of $\op$ then $b'$ was either ($i$) created during the 
 update operation that also created $b$, or ($ii$) was
 created before $b$. In both cases $b'$ was written before $b$. 
 In case ($i$), by Lemma \ref{lem:update:block}, the update operation that created
 $b$ was successful and thus $b'$ must be created as well. In case ($ii$) it follows that $b$ is the last inserted block of an update and is assigned to point to $b'$. 
 %With a simple induction one may show that the update operation that created $b'$ must precede the update that created $b$. 
 Since no block is 
 deleted, then $b'$ remains in $\mathcal{L}$ when $b_i$ is created and thus $b_i$ points to an existing block.
 Furthermore, since $\op$ was successful, then it successfully written $b$ and hence only the blocks 
 in $\blockSet$ were inserted between $b$ and $b'$ at the response of $\op$. 
 In case the version of $b_i$ was $ver'$ and larger than the version written 
 on $b_k$ by $\op$ then either $b_k$ was not extended and contains new data,
 or the new block is impossible as $\mathcal{L}$ should have included the 
 blocks extending $b_k$. So $b'$ must be the 
 next block after $b_i$ in $\mathcal{L}$ at the response of $\op$ and there is a path between $b$ and $b'$. This completes the proof.
\end{proof}

We conclude with the main result of this section.
% \cgadd{[CG: The current proof of Theorem 16 makes use of only Lemma 15. And the current proof of Lemma 15 does not use any of the other lemmas. So, these lemmas aren't needed?]}

\begin{theorem}
\label{thm:coaresf}
     \fcoARES{} implements an {\em atomic coverable fragmented object}.
\end{theorem}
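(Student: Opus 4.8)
The plan is to split the statement into three obligations and then observe that, together with termination, they are exactly what ``atomic coverable fragmented object'' means in the sense of Section~\ref{sec:model}: (a) every individual block of $f$ behaves as an atomic coverable register; (b) operations on distinct blocks do not interfere, so fragmented linearizability (Definition~\ref{def:fragatomic}) holds; and (c) the block list of $f$ stays connected and only grows along the real-time order. Obligation (a) is almost immediate: by Algorithm~\ref{code:BI} each block is stored through \coARES{} and maintains its own configuration sequence exactly as \coARES{} prescribes, so the main theorem of Section~\ref{sec:coverable:correctness} applies block-wise, giving that each block is an atomic coverable object whenever the DAPs in every configuration satisfy Property~\ref{property:dap}.

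For (b) I would reuse the structure of the static correctness argument of~\cite{SIROCCO_2021}: for each $b\in val(f)_H$ take the linearization $\sigma_b$ guaranteed by atomicity of $b$; it is sequential and obeys the block's sequential specification by construction. The real-time clause of Definition~\ref{def:fragatomic} follows because an $\act{fm-read}$/$\act{fm-update}$ returns only after all of its per-block operations complete, so $\pi_1\bef\pi_2$ on $f$ forces every $b$-operation of $\pi_1$ to precede, in real time, every $b$-operation of $\pi_2$, and atomicity of $b$ then orders them the same way in $\sigma_b$. Coverability of each block carries over from Lemmas~\ref{cover-correctness:validity:prop1}--\ref{lem:coverability}, so this yields fragmented coverability: concurrent updates on different blocks all prevail, while concurrent updates on the same block conflict exactly as coverability dictates.

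Obligation (c) is where reconfiguration makes the argument non-trivial, and I expect it to be the main obstacle. Lemma~\ref{lem:connected} already gives that any list $\mathcal{L}$ returned by an $\act{fm-read}$ is one connected chain from $b_0$ to a block with null pointer; I would combine it with Lemma~\ref{lem:update:block} (every returned block was produced by a successful update) and Lemma~\ref{lem:write:recon} (a $\act{dsmm-reconfig}$ carries forward a version at least as large as that of any preceding successful write). Concretely, for $\rd_1\bef\rd_2$ on $f$ I would show that every block observed by $\rd_1$, with a version no smaller than the one $\rd_1$ read, is still reachable when $\rd_2$ traverses the list, even across intervening reconfigurations: the \ares{} sequence-prefix and sequence-progress properties lift block-wise (each block carries its own $cseq$), and Lemma~\ref{lem:write:recon} guarantees that every finalized reconfiguration forwards the current $\tup{\tg{},v}$ into the installed configuration, so no committed block update is ever stranded. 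Hence $\rd_2$ returns a supersequence of $\rd_1$, and analogous reasoning handles update/read and update/update pairs. The delicate point to nail down is blocks created with the initial configuration $cseq_0$: using the standing assumption that a quorum of $cseq_0$ stays alive together with the prefix property, one must argue such a block's value is carried into later configurations before the block becomes visible in any read --- which is what the case analysis of Lemma~\ref{lem:connected} supplies.

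Finally, termination: an $\act{fm-reconfig}$ (resp.\ $\act{fm-read}$, $\act{fm-update}$) performs a finite traversal of the current block list, issuing one $\act{dsmm-reconfig}$ (resp.\ block read/update) per block, and each such block-level operation terminates because the underlying \coARES{} operations terminate whenever the DAPs complete; since the list is finite at every point of the execution, the file-level operation completes. Combining (a), (b), (c) and termination gives the theorem.
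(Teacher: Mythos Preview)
Your proposal is correct and follows essentially the same approach as the paper: per-block atomic coverability from the \coARES{} correctness results of Section~\ref{sec:coverable:correctness}, combined with the connectivity guarantee of Lemma~\ref{lem:connected}, yields fragmented coverability. The paper's own proof is far terser than yours --- it is literally two sentences citing Section~\ref{sec:coverable:correctness} and Lemma~\ref{lem:connected} --- so your obligations (b), the supersequence argument in (c), and the explicit termination discussion go well beyond what the authors spell out, but they are consistent with (and justified by) the same lemmas the paper relies on.
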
  

\begin{proof}
By the correctness proof in Section~\ref{sec:coverable:correctness} follows that every block operation in \fcoARES{} satisfies atomic coverability and together with Lemma~\ref{lem:connected}, which shows the connectivity of blocks, it follows that \fcoARES{} implements a coverable fragmented object satisfying the properties of  {\em fragmented coverability} as defined in Section~\ref{sec:model}. 
% Also, Lemma \ref{lem:conf} ensures that a new pending configuration obtains all the blocks of the file. Thus, \fcoARES{} implements a \emph{valid} fragmented object.
\end{proof}

% \subsection{Correctness of \fcoARES{}}~\label{sec:fragmentation:correctness}
% \input{correctness}

\section{EC-DAP Optimization}
\label{sec:dap:optimize}
% To reduce the operational latency of the read/write operation in DSMM layer, we apply an optimization in the implementation of the DAP which is called by the read/write operation. 
In this section, we present an optimization in the implementation of the erasure coded DAP, \ecdap{}, to reduce the operational latency of the read/write operations in DSMM layer. {As we show in this section, this optimized \ecdap{}, which we refer to as \ecdapopt{}, satisfies all the items in Property \ref{property:dap}, and thus can  be  used  by  any  
%tag-ordered erasure-coded 
algorithm that utilizes the DAPs,} like any variant of \ares{}. %The specification of the optimized DAP is given in Alg.~\ref{algo:casopt}, and the servers' responses in Alg.~\ref{algo:casopt:server}.
We first present the optimaization and then prove its correctness.

\subsection{Description}

The main idea of the optimization stems from the 
%our previous 
work~\cite{SIROCCO_2021} which avoids unnecessary object transmissions 
between the clients and the servers that host the replicas. 
%in the storage (coverable \abd{}~\cite{coverability}) of \frfs{}. The 

%in both the fragmented and non-fragmented algorithms (without \frfs{} %level), especially in large files. So a similar technique can be applied to any tag-based algorithm like the \ec{}. We performed experiments to ensure that the optimization in \ecdap{} is behaving as expected. 
% The results show that when the \ecdap{} is used in the \ARESec{} framework, both read and write latencies have significant reductions (in half). However, when the \ecdap{} is used in the \fARESec{} framework, we noticed a significant decrease of the read latency (in half), while the write latency is already very low.}

In summary, we apply the following optimization: in the \act{get-data} primitive, {each server sends only the tag-value pairs with a larger or equal tag than the client's tag}. 
In the case where the client is a reader, it performs the \act{put-data} action (propagation phase), only if the maximum tag is higher than its local one. \ecdapopt{} is presented in Algorithms~\ref{algo:casopt} and \ref{algo:casopt:server}.
%, we present %both the old and 
%the optimized version of EC-DAP. 
Text in blue annotates {the changed or} newly added code, whereas struck out blue text annotates code that has been removed from the original implementation.

\begin{algorithm*}[!htbp]
				\begin{algorithmic}[2]
					{\scriptsize
					\begin{multicols}{2}
							\State{ at each process $\pr_i\in\idSet$}\vspace{-.5em}
							
				 			% \State {\bf State Variables:}
				 			% we need this?
                    %         \State  \textcolor{blue}{$\tgb{c}\in\N^+\times\wSet$ initially $\tup{0,\bot}$;}
		                  %  \State \textcolor{blue}{$\valb{c}\in V$, init. $\bot$;}

				% 			\Statex
				% 			\Procedure{c.get-tag}{}
				% 			\State {\bf send} $(\text{{\sc query-tag}})$ to each  $s\in \servers{c}$
				% 			\State {\bf until}   $\pr_i$ receives $\tup{t_s}$ from $\left\lceil \frac{n + k}{2}\right\rceil$ servers in $\servers{c}$
				%             \State $t_{max} \gets \max(\{t_s : \text{ received } t_s \text{ from } s \})$
				% 			\State {\bf return} $t_{max}$
				% 			\EndProcedure

							\Statex
							
							\Procedure{c.get-data}{}
				% 			\State the only difference with EC is that includes the version of the object in the QUERY-LIST message 
				            \State {\bf send} $(\text{{\sc query-list}},$$\textcolor{blue}{\tgb{c}})$ to each  $s\in \servers{c}$
								\State {\bf until}    $\pr_i$ receives $List_s$ from each server $s\in\srvSet_g$\WRP s.t. $|\srvSet_g|=\left\lceil \frac{n + k}{2}\right\rceil$ and  $\srvSet_g\subset \servers{c}$ 
								\State  {\color{blue} \sout{$Tags_{*}^{\geq k} = $ set of tags that appears in  $k$ lists}}	\label{line:getdata:max:begin}
								\State  $Tags_{dec}^{\geq k} =$ set of tags that appears in $k$ lists with values
								\label{line:getdata:max:begin:dec}
								\State  {\color{blue} \sout{$t_{max}^{*} \leftarrow \max Tags_{*}^{\geq k}$}}
                                \State  $t_{max}^{dec} \leftarrow \max Tags_{dec}^{\geq k}$ \label{line:getdata:max:end}
                                
                                % thelw na to kanw blue kai sout
                                \State {\color{blue} \sout{\bf{if} $t_{max}^{dec} =  t_{max}^{*}$ \bf{then}}} \label{line:old-if-statement}
								 
                                \color{blue}
							    \If{$\tgb{c} = t_{max}^{dec}$}   \label{line:getdata:check_empty_values:begin}
                                    \State
                                    \textcolor{blue}{$t \leftarrow $ $\tgb{c}$}
                                        \State  \textcolor{blue}{$v \leftarrow $ $\valb{c}$}
                                        \State \textcolor{blue}{{\bf return}
						$\tup{t,v}$}
			                        \ElsIf{$Tags_{dec}^{\geq k} \neq \bot$}
			                        \State
                         \textcolor{blue}{$t \leftarrow $ $t_{max}^{dec}$}
			                        \State  \textcolor{black}{$v \leftarrow $ decode value for $t_{max}^{dec}$}
			                        \State \textcolor{blue}{{\bf return} $\tup{t,v}$}
			                    \EndIf
			                    \color{black}

							\EndProcedure
							
							\Statex				
							
							\Procedure{c.put-data}{$\tup{\tg{},v})$}\label{line:putdata:begin}
							    \textcolor{blue}{
							    \If{$\tg{} > \tgb{c}$} \label{line:putdata:decide}
							    \textcolor{black}{
								\State $\Coded = [(\tg{}, e_1), \ldots, (\tg{}, e_n)]$, $e_i = \Phi_i(v)$
								\State {\bf send} 
							$(\text{{\sc PUT-DATA}},
							\tup{\tg{},e_i})$ to each $s_i \in \servers{c}$
								\State {\bf until} $\pr_i$ receives {\sc ack} from $\left\lceil \frac{n + k}{2}\right\rceil$ servers in $\servers{c}$
								\State \textcolor{blue}{$\tgb{c} \gets \tg{}$}
								\State \textcolor{blue}{$\valb{c} \gets v$}
								}
								\EndIf}

							\EndProcedure\label{line:putdata:end}\vspace*{-1em}
							%\EndPart
	%						
	%						
	%					
%	%						
					\end{multicols}
				}
				\end{algorithmic}	
				\caption{\ecdapopt{} implementation %for \ec{} Algorithm 
					%for  template $A_1$ to implement 
					%for  \ARESec{}.
					}\label{algo:casopt}
				\vspace{-.5em}
			\end{algorithm*}
	\begin{algorithm*}[!ht]
	\begin{algorithmic}[2]
		{\scriptsize
		\begin{multicols}{2}
				\State{at each server $s_i \in \mathcal{S}$ in configuration $c_k$}\vspace{-.6em}
				\Statex
				\State{\bf State Variables:}
					\Statex $List \subseteq  \mathcal{T} \times \mathcal{C}_s$, initially   $\{(t_0, \Phi_i(v_0))\}$
					\State{\bf Local Variables:}
					\Statex \textcolor{blue}{$List' \subseteq  \mathcal{T} \times \mathcal{C}_s$, initially $\bot$}
				%}\EndPart
% 		    \Statex
% 			\Receive{{\sc query-tag}}{$s_i,c_k$}
% 				\State $\tg{max} = \max_{(t,c) \in List}t$
% 				\State Send $\tg{max}$ to $q$
% 			    \EndReceive
			\Statex

			\Receive{{\sc query-list}, \textcolor{blue}{$tg_b$}}{$s_i,c_k$}
			    %   $List'$
			 %   \textcolor{blue}{
			 %   \For{$\tg{},v$ in $List$}
			 %   \If{$tg_b \geq \tg{}$}
			 %   \State $List' \gets List' \cup \{ \tup{\tg{}, \bot}  \}$ 
			 %   \Else
			 %   \State \textcolor{black}{$List' \gets List' \cup \{ \tup{\tg{}, e_i}  \}$}
			 %   \EndIf
			 %   \EndFor}
				% \State Send $List'$ to $q$
				\textcolor{blue}{%update
			    \For{$\tg{},v$ in $List$}
			 %   \If{$\tg{} \geq tg_b$}\label{line:server:querylist-condition:start}
			 %   \State \textcolor{black}{$List' \gets List' \cup \{ \tup{\tg{}, e_i}  \}$}\label{line:server:querylist-condition:true}
			 %   \EndIf
			    \If{$\tg{} > tg_b$}\label{line:server:querylist-condition1:start}
			    \State \textcolor{black}{$List' \gets List' \cup \{ \tup{\tg{}, e_i}  \}$}\label{line:server:querylist-condition1:true}
			    \ElsIf{$\tg{} = tg_b$}\label{line:server:querylist-condition2:start}
			    \State{$List' \gets List' \cup \{ \tup{\tg{}, \bot}\}$}\label{line:server:querylist-condition2:true}
			    \EndIf\label{line:server:querylist-condition:end}
			    \EndFor}
				\State Send $List'$ to $q$
			\EndReceive
		    \Statex
			\Receive{{\sc put-data}, $\tup{\tg{},e_i}$}{$s_i,c_k$}
				\State $List \gets List \cup \{ \tup{\tg{}, e_i}  \}$ 
				\If{$|List| > \delta+1$}
					\State $\tg{min}\gets\min\{t: \tup{t,*}\in List\}$
				%	\Statex
                                              \Statex  ~~~~~~~~/* remove the coded value %and retain the tag */
					%\State $List \gets List \backslash~\{\tup{\tg{},e}: \tg{}=\tg{min} ~\wedge~\tup{\tg{},e}\in List\} \cup \{  (  \tg{min}, \bot)  \}$\label{line:server:removemin}
					\State $List \gets List \backslash~\{\tup{\tg{},e}: \tg{}=\tg{min} ~\wedge \tup{\tg{},e}\in List\}$
					\State {\color{blue} \sout{$List \gets List  \cup \{  (  \tg{min}, \bot)  \}$}}\label{line:server:removemin}
				\EndIf
				\State  Send {\sc ack} to $q$
		    \EndReceive\vspace*{-1em}
			
				\end{multicols}
			}

	\end{algorithmic}	
	\caption{The response protocols at  any server $s_i \in {\mathcal S}$ in \ecdapopt{} for client requests.}\label{algo:casopt:server}
				\vspace{-.5em}
\end{algorithm*}

Following %As presented in Section 5 of~
\cite{ARES}, each server $s_i$ stores a state variable,  $List$,  which is a set of up to $(\delta + 1)$ (tag, coded-element) pairs; $\delta$ is the maximum number of concurrent put-data operations, i.e., the number of writers. In \ecdapopt{}, we need another two state variables, the tag of the configuration ($\tgb{c}$) and its associated value ($\valb{c}$). 
%\cgadd{[CG: What is $\delta$ When text is added from some other document, we need to ensure that all necessary information is transferred or explained! Otherwise it does not make sense to the reader. Is like copying a method from a program, but not bring the other necessary methods or functions used -- you will get a compilation or runtime error!]}\atcom{I refer to it below, but I also added it here as it is the first time it is mentioned.}

We now proceed with the details of the optimization. Note that the $\dagettag{c}$ primitive remains the same as the original, that is, the client discovers the highest tag  among the servers' replies in $\servers{c}$ and returns it.

\myparagraph{Primitive $\dagetdata{c}$:} A  client, during the execution of a  $\dagetdata{c}$ primitive, queries all the servers in $\servers{c}$ for their $List$, and awaits responses from $\left\lceil \frac{n+k}{2} \right\rceil$ servers. Each server generates a new list ($List'$) where it adds every (tag, coded-element) from the $List$, if the tag is higher %\at{or equal} 
than the $\tgb{c}$ of the client and the (tag, $\bot$) if the tag is equal to $\tgb{c}$; otherwise %it adds the pair (tag, $\bot$) 
it does not add the pair, as the client already has %this version or
a newer version. 
Once the client receives $Lists$ from $\left\lceil \frac{n+k}{2} \right\rceil$ servers, it selects the highest tag $t$, such that: $(i)$ its corresponding value $v$ is decodable from the coded elements in the lists; and $(ii)$ $t$ is the highest tag seen from the responses of at least $k$ $Lists$ 
			(see lines Alg.~\ref{algo:casopt}:\ref{line:getdata:max:begin:dec}--\ref{line:getdata:max:end}) and returns the pair $(t, v)$. 
Note that in the case where any of the above conditions is not satisfied, the corresponding read operation does not complete. The main difference with the original code is that in the case where  %variable $\tgb{c}$ is the same or newer than the highest decodable tag ($t^{dec}_{max}$), %all the coded-elements have $\bot$ values 
variable $\tgb{c}$ is the same as the highest decodable tag ($t^{dec}_{max}$), the client already has the latest decodable version and does not need to decode it again (see line Alg.~\ref{algo:casopt}:\ref{line:getdata:check_empty_values:begin}).

\myparagraph{Primitive $c.\act{put-data}(\tup{t_w, v})$:} This primitive %$c.\act{put-data}(\tup{t_w, v})$ 
is executed only when the incoming $t_w$ is greater than  $\tgb{c}$ (line Alg.~\ref{algo:casopt}:\ref{line:putdata:decide}). In  this case, the client computes the coded elements and sends the  pair  $(t_w, \Phi_i(v))$ to each server $s_i\in\servers{c}$. Also, the client has to update its state ($\tgb{c}$ and $\valb{c}$). If the condition does not hold, the client does not perform any of the above, as it already has the latest version, and so the servers are up-to-date. When a server $s_i$ receives a message $(\text{\sc put-data}, t_w, c_i)$, it adds the pair in its local $List$ and trims the pairs with the smallest tags exceeding the length $(\delta+1)$ (see line Alg.~\ref{algo:casopt:server}:\ref{line:server:removemin}).

\myparagraph{Remark.} {Experimental results conducted on Emulab
%in this work 
show that by using \ecdapopt{} over \ecdap{} we gain significant reductions especially on read latencies, which concern the majority of operations in practical systems (see Fig.~\ref{fig:plots_filesize_emulab} in Section~\ref{sec:evaluation}).} The great benefits are observed especially in the fragmented
variants of the algorithm and when the objects are large, as read 
operations avoid the transmission of many unchanged blocks.

\subsection{Correctness of \ecdapopt}
\label{sec:dap:optimize:correctness}
% It is clear that the $DAP$ still satisfies the {\em atomicity} as in the old $\act{get-data}$ according to Property~\ref{property:dap}. Thus, it remains to show that the $DAP$ implementation satisfies the {\em liveness}.
To prove the correctness of  \ecdapopt{}, we need to show that it is \textit{safe}, i.e., it
ensures the necessary Property~\ref{property:dap}, and 
\textit{live}, i.e., it allows each operation 
to terminate. %\at{This optimized $DAP$, like any $DAP$, can be used by any tag-based implementation algorithm $A$, like any variant of \ares{}. 
In the following proof, we will not refer to the $\act{get-tag}$ access primitive that the \ecdap{} algorithm uses~\cite{ARES}, as the optimization has no effect on this operation, so it should preserve safety as shown in \cite{ARES_arxiv}. 

For the following
proofs we fix the configuration to $c$ as it 
suffices that the DAPs preserve Property \ref{property:dap} in any single configuration. Also we assume an $[n,k]$ MDS code, $|c.Servers| = n$ of which no more 
than $\frac{n-k}{2}$ may crash, and that $\delta$ is the
maximum number of $\act{put-data}$ operations concurrent with any $\act{get-data}$ operation.

We first prove Property 1-C2 as it is later being 
used to prove Property 1-C1.

\begin{lemma}[C2]
\label{lem:p1:c2}
Let $\EX$ be an execution of an algorithm $A$ that
uses the \ecdapopt.
If $\phi$ is a $\dagetdata{c}$ that returns $\tup{\tg{\pi}, v_\pi } \in \tsSet \times \valSet$, 
 then there exists $\pi$ such that $\pi$ is a $\daputdata{c}{\tup{\tg{\pi}, v_{\pi}}}$ and $\phi$ did not complete before the invocation of $\pi$. 
 If no such $\pi$ exists in $\EX$, then $(\tg{\pi}, v_{\pi})$ is equal to $(t_0, v_0)$.
\end{lemma}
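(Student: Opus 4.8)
The plan is to trace the pair $\tup{\tg{\pi},v_\pi}$ returned by $\phi$ back to a concrete $\act{put-data}$ operation (or to the initial value $(t_0,v_0)$) and then to check the timing requirement. First I would note that, since $\phi$ returns, it exits \act{get-data} (Alg.~\ref{algo:casopt}) through one of its two \textbf{return} statements, and in both of them the returned tag equals $t_{max}^{dec}=\max Tags_{dec}^{\geq k}$. By the way $Tags_{dec}^{\geq k}$ is assembled from the $\lceil\frac{n+k}{2}\rceil$ $List$ responses that $\phi$ collected from a set $\srvSet_g$ of servers, $t_{max}^{dec}$ is witnessed by at least $k$ servers of $\srvSet_g$: either (a) $t_{max}^{dec}$ equals the client's cached tag $\tgb{c}$ and those $k$ servers reported the placeholder $\tup{\tgb{c},\bot}$ while $\phi$ uses its own $\valb{c}$ as the value; or (b) $t_{max}^{dec}>\tgb{c}$ and those $k$ servers returned genuine coded elements for $t_{max}^{dec}$, from which $v_\pi$ is decoded.

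Next I would locate $\pi$. A server inserts a non-$\bot$ pair $\tup{t,e_i}$ into $List$ only at initialization (so $t=t_0$, $e_i=\Phi_i(v_0)$) or on receipt of a $(\textsc{put-data},\tup{t,e_i})$ message, and such a message is sent exclusively inside \act{put-data}, during an operation $\act{put-data}(\tup{t,v})$ with $e_i=\Phi_i(v)$ (Alg.~\ref{algo:casopt} and Alg.~\ref{algo:casopt:server}). Hence in case (b) each of the $k$ witnessing servers got its coded element for $t_{max}^{dec}$ from some $\act{put-data}(\tup{t_{max}^{dec},\cdot})$; invoking the standard \ares{}-style property that a tag uniquely identifies the write that introduced it --- so all coded elements stored under a fixed tag are $\Phi_i$ of one and the same value $v$ --- the $[n,k]$-MDS property makes the decoded value exactly $v=v_\pi$, and I take $\pi:=\act{put-data}(\tup{\tg{\pi},v_\pi})$. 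In case (a), the pair $\tup{\tgb{c},\valb{c}}$ was installed by $\phi$'s client, consecutively and only inside \act{put-data}, after that invocation received acknowledgements from $\lceil\frac{n+k}{2}\rceil$ servers; so either $\tgb{c}=t_0$ and $\valb{c}=v_0$, or there is an operation $\pi:=\act{put-data}(\tup{\tgb{c},\valb{c}})$ of that client, and tag uniqueness again forces $\valb{c}$ to agree with whatever the $k$ placeholder-reporting servers hold under tag $\tgb{c}$. If, on the other hand, no $\act{put-data}$ carrying tag $\tg{\pi}$ occurs anywhere in $\EX$, then no server ever holds a non-$\bot$ pair, and no client a cached pair, with a tag above $t_0$; this forces $t_{max}^{dec}=t_0$ and $v_\pi=v_0$, which is precisely the degenerate conclusion of the lemma.

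It remains to establish the ordering. Fix a server $s\in\srvSet_g$ that witnessed $\tg{\pi}$ for $\phi$ (reporting the coded element in case (b), or $\tup{\tgb{c},\bot}$ in case (a)). In case (b) the $(\textsc{put-data},\tup{\tg{\pi},e})$ message delivered to $s$ must have arrived before $s$ processed $\phi$'s $\textsc{query-list}$ request --- otherwise $s$ could not have reported $\tg{\pi}$ --- which happened before $\phi$ collected $s$'s reply, hence before $\phi$ returned; since that message is sent only after $\pi$ is invoked, $\pi$ is invoked before $\phi$ completes. In case (a), if $\pi$ exists it is a prior operation of $\phi$'s own client (by well-formedness the client runs one operation at a time), hence invoked before $\phi$ completes; and if it does not exist, the degenerate conclusion already applies. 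Either way, $\phi$ does not complete before the invocation of $\pi$, as claimed.

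I expect the main obstacle to be the extra bookkeeping introduced by the optimization: the value $\phi$ returns may now be served from the client's cached $\valb{c}$ rather than freshly decoded coded elements, and servers answer $\textsc{query-list}$ with $\bot$-placeholders for tags the client already holds. The delicate point is therefore to justify that $\valb{c}$ was itself installed by a genuine $\act{put-data}(\tup{\tgb{c},\valb{c}})$ --- which rests on $\tgb{c}$ and $\valb{c}$ being updated together and only within \act{put-data}, after its write quorum has acknowledged --- and to reconcile that cached value, via tag uniqueness, with the placeholder information servers now transmit.
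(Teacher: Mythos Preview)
Your proof is correct and considerably more thorough than the paper's. The paper's own argument is essentially a one-line reduction: it observes that servers' $List$ variables are still initialized to $\{(t_0,\Phi_s(v_0))\}$ and that new tags enter $List$ only via \textsc{put-data} messages, so the proof of C2 for \ecdapopt{} ``is identical with that of \ecdap{}.'' It does not separately treat the cached-return branch, the $\bot$-placeholder replies, or the timing clause; all of that is delegated to the original \ecdap{} proof by appeal to the invariant on $List$.

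Your route differs in that you unfold the optimization explicitly. You split on the two \textbf{return} branches of Alg.~\ref{algo:casopt}, trace the cached pair $\tup{\tgb{c},\valb{c}}$ back to the client's own prior \act{put-data} (using that these two state variables are updated together and only after the write quorum acknowledges), and separately verify the ordering requirement in each branch. This buys you a self-contained argument that does not rely on the reader having internalized the original \ecdap{} proof, and it surfaces the one genuinely new point --- that the client-side cache is itself populated only by a completed \act{put-data} --- which the paper's reduction leaves implicit. Conversely, the paper's approach is shorter and highlights the structural reason the optimization is harmless for C2: it only \emph{filters} what servers transmit, never injecting tags, so the provenance invariant on $List$ is untouched.
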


\begin{proof}
It is clear that the proof of property $C2$ of \ecdapopt{} is identical with that of \ecdap. This happens as the  initial value of the $List$ variable in each servers $s$ in $\mathcal{S}$ is still $\{ (t_0, \Phi_s(v_{\pi}) )\}$, and the new tags are still added to the $List$ only via $\act{put-data}$ operations.
Thus, each server during a $\act{get-data}$ operation includes only written tag-value pairs from the $List$ to the $List'.$ 
\end{proof}

\begin{lemma}[C1]
\label{lem:p1:c1}
Let $\EX$ be an execution of an algorithm $A$ that
uses the \ecdapopt.
If $\phi$ is  $\daputdata{c}{\tup{\tg{\phi}, v_\phi}}$, for $c \in \confSet$, $\tup{\tg{\phi}, v_\phi} \in\tsSet\times\valSet$, % and $v_1 \in \valSet$,
 and $\pi$ is $\dagetdata{c}$ 
 %in $\EX$ such that 
 that returns $\tup{\tg{\pi}, v_{\pi}} \in \tsSet \times \valSet$ and $\phi\bef \pi$ in $\EX$, then $\tg{\pi} \geq \tg{\phi}$.
\end{lemma}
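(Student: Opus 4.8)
The plan is to adapt the proof of property~C1 for the original \ecdap{} given in~\cite{ARES_arxiv}, verifying that the two changes of \ecdapopt{} do not invalidate it: (a)~on a {\sc query-list} a server attaches a real coded element only to tags strictly larger than the querying client's $\tgb{c}$, attaches the placeholder $(\tg{},\bot)$ to the tag equal to $\tgb{c}$, and omits all smaller tags; and (b)~a $\dagetdata{c}$ may now terminate by returning the querying client's cached pair $\tup{\tgb{c},\valb{c}}$. Throughout I fix the configuration $c$, with $|\servers{c}|=n$, an $[n,k]$ MDS code, quorums of size $\lceil\frac{n+k}{2}\rceil$, and use that any two such quorums intersect in at least $2\lceil\frac{n+k}{2}\rceil-n\geq k$ servers. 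Since $\tgb{c}$ is assigned only inside the write phase of a $\act{put-data}$ (from line Alg.~\ref{algo:casopt}:\ref{line:putdata:decide} onward), the first step is to normalise the hypothesis: among all $\act{put-data}$ operations completing before $\pi$ is invoked, take one whose tag is maximal; if it is a write-phase operation keep it, otherwise (it is a no-op) its client had set $\tgb{c}$ to exactly that tag via a write-phase $\act{put-data}$ that completed still earlier, which we take instead. Either way we obtain a $\act{put-data}$ $\psi$ that executed its write phase, stored $\tup{\tg{\psi},e_i}$ at a quorum $S_\psi$, completed before $\pi$, and has $\tg{\psi}\geq\tg{\phi}$ with $\tg{\psi}$ maximal among all $\act{put-data}$ completing before $\pi$. (If $\tg{\psi}=t_0$ the claim is immediate, since $\pi$ always returns a tag $\geq t_0$.) It then suffices to show $\tg{\pi}\geq\tg{\psi}$.

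The core is the claim that every server $s\in S_\psi\cap S_\pi$ (where $S_\pi$ is the set of $\lceil\frac{n+k}{2}\rceil$ servers that answered $\pi$, so $|S_\psi\cap S_\pi|\geq k$) still holds $\tg{\psi}$ in its $List$ at the instant it answers $\pi$'s {\sc query-list}. After $s$ inserts $\tup{\tg{\psi},e_s}$, its $List$ changes only through later $\act{put-data}$ operations, which add pairs and, on overflow past $\delta+1$ entries, evict the \emph{smallest} tag. By maximality of $\tg{\psi}$, every $\act{put-data}$ completing before $\pi$ carries a tag $\leq\tg{\psi}$, so any tag larger than $\tg{\psi}$ that reaches $s$ before it answers $\pi$ comes from a $\act{put-data}$ concurrent with $\pi$, of which there are at most $\delta$. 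Hence $s$ has seen at most $\delta$ tags above $\tg{\psi}$, so $\tg{\psi}$ stays among the $\delta+1$ largest tags it has seen and is never evicted.

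With this claim I would conclude by a case split on how $\tgb{c}$ of $\pi$'s client compares with $\tg{\psi}$, showing $t_{max}^{dec}\geq\tg{\psi}$ in each case; since $\pi$, whenever it returns, returns the tag $t_{max}^{dec}$ in both branches (the If-branch being taken only when $\tgb{c}=t_{max}^{dec}$), this gives $\tg{\pi}=t_{max}^{dec}\geq\tg{\psi}\geq\tg{\phi}$. If $\tg{\psi}>\tgb{c}$: each server in $S_\psi\cap S_\pi$ reports $\tg{\psi}$ with its real coded element, so $\tg{\psi}$ is decodable from at least $k$ received lists, hence $\tg{\psi}\in Tags_{dec}^{\geq k}$ and $t_{max}^{dec}\geq\tg{\psi}$. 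If $\tg{\psi}=\tgb{c}$: each such server reports $(\tg{\psi},\bot)$ and the querying client supplies the value from its cache, so $\tg{\psi}=\tgb{c}\in Tags_{dec}^{\geq k}$ (this is exactly the reading of ``decodable'' that makes the If-branch reachable), hence $t_{max}^{dec}\geq\tg{\psi}$. If $\tg{\psi}<\tgb{c}$: then $t_{max}^{dec}\geq\tgb{c}>\tg{\psi}$, because every tag any server reports is $\geq\tgb{c}$ and $\pi$ returns only when $Tags_{dec}^{\geq k}\neq\emptyset$. This finishes the case analysis and hence the proof.

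The step I expect to be the main obstacle is the survival claim in the second paragraph: it is the only place where the bounded-concurrency parameter $\delta$ and the server-side trimming rule must be combined, and it hinges on choosing $\psi$'s tag maximal among \emph{all} $\act{put-data}$ finishing before $\pi$ (not just $\phi$), so that no old, slow writer can contribute more than $\delta$ tags above $\tg{\psi}$ and evict it. The remaining work — checking that the new $(\cdot,\bot)$ placeholders and the cached-return path behave as expected — is routine but must be carried out branch by branch, which is why C1 needs this dedicated argument even though C2 (Lemma~\ref{lem:p1:c2}) carries over verbatim from \ecdap{}.
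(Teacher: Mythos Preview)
Your proof is correct and takes a genuinely different route from the paper's.

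The paper works directly with $\tg{\phi}$ and splits on whether $\tg{\phi}$ appears (with value) in at least $k$ of the returned $List'$s. When it does not, the paper further splits on whether the querying client's cached tag $\tg{\ell}$ lies above or below $\tg{\phi}$; in the sub-case $\tg{\ell}\le\tg{\phi}$ it argues that eviction of $\tg{\phi}$ forces some completed $\act{put-data}$ with a strictly larger tag $\tg{}'$ to exist, and then asserts that $\tg{}'$ is decodable in $\pi$. This last step is where the paper's argument is thinnest: it does not re-verify survival for $\tg{}'$, effectively leaving an implicit induction. Your normalisation to the \emph{maximal} completed tag $\tg{\psi}$ closes this loop upfront: once $\tg{\psi}$ is maximal, every larger tag reaching a server before it answers $\pi$ must come from a $\act{put-data}$ concurrent with $\pi$, of which there are at most~$\delta$, so $\tg{\psi}$ is never evicted and the decodability of some $\geq\tg{\psi}$ tag follows cleanly without recursion. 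What the paper's direct approach buys is that it avoids the preliminary step of producing $\psi$ (your handling of the no-op case), but at the cost of the looser endgame.

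One small observation: your third case $\tg{\psi}<\tgb{c}$ is in fact vacuous. The cached tag $\tgb{c}$ of $\pi$'s client was set by that client's last write-phase $\act{put-data}$, which by well-formedness completed before $\pi$ and is therefore among the operations over which $\tg{\psi}$ is maximal; hence $\tgb{c}\le\tg{\psi}$ always. The argument you give for that branch is harmless but unnecessary. Your reading of ``decodable'' in the $\tg{\psi}=\tgb{c}$ case (counting the $(\tgb{c},\bot)$ placeholders toward $Tags_{dec}^{\geq k}$ because the client supplies the value from its cache) is indeed the only reading under which the If-branch of Algorithm~\ref{algo:casopt} is ever reachable, and matches the paper's prose description of the optimisation.
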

\begin{proof}
% First, we prove the $C1$ property of DAP for an algorithm $A$.
% $\phi$ is   $\daputdata{c}{\tup{\tg{\phi}, v_\phi}}$ and  $\pi$ is a $\dagetdata{c}$. 
Let $p_{\phi}$ and $p_{\pi}$ denote the processes that invokes $\phi$ and $\pi$ in $\EX$. Let $S_{\phi} \subset \mathcal{S}$ denote the set of $\left\lceil \frac{n+k}{2} \right \rceil$ servers that responds to $p_{\phi}$, during $\phi$, and by $S_{\pi}$ the set of $\left\lceil \frac{n+k}{2} \right \rceil$ servers that responds to $p_{\pi}$, during $\pi$.
%Let $T_1$ be a point in execution $\EX$ after the completion of $\phi$ and before the invocation of $\pi$. 

Per Alg.~\ref{algo:casopt:server}:11, every server $s\in S_\phi$, inserts the tag-value pair 
received in its local $List$. Note that once a tag is added to $List$, its associated tag-value pair will be removed only when the $List$ exceeds the length $(\delta+1)$ and the tag is the smallest in the $List$ (Alg.~\ref{algo:casopt:server}:12--14). 

When replying to $\pi$, each server in $S_{\pi}$ includes a tag in $List'$, only if the tag is larger or equal to the tag associated to the last value decoded by $p_{\pi}$ (lines Alg.~\ref{algo:casopt:server}:\ref{line:server:querylist-condition1:start}--\ref{line:server:querylist-condition:end}). Notice that as $|S_{\phi}| = |S_{\pi}| =\left\lceil \frac{n+k}{2} \right \rceil $,  the servers in $| S_{\phi} \cap S_{\pi} | \geq k$ reply to both $\pi$ and $\phi$. So 
there are two cases to examine: (a) the pair 
$\tup{\tg{\phi}, v_\phi}\in Lists'$ of at least $k$ servers
$S_{\phi} \cap S_{\pi}$ replied to $\pi$, and 
(b) the $\tup{\tg{\phi}, v_\phi}$ appeared in fewer than $k$ servers in $S_{\pi}$.\vspace{-.8em}

\case{a}
In the first case, since $\pi$ discovered $\tg{\phi}$
in at least $k$ servers  it follows by the algorithm that the value associated with $\tg{\phi}$
will be decodable. Hence $t^{dec}_{max}\leq\tg{\phi}$ and $\tg{\pi}\geq\tg{phi}$.\vspace{-.8em}

\case{b}
In this case $\tg{\phi}$ was discovered in 
less than $k$ servers in $S_\pi$. Let $\tg{\ell}$ denote
the last tag returned by $p_\pi$. We can 
break this case in two subcases: (i)  
$\tg{\ell}>\tg{\phi}$, and (ii) $\tg{\ell}\leq\tg{\phi}$.

In case (i), no $s\in S_\pi$  included $\tg{\phi}$ in 
$List'_s$ before replying to $\pi$. By Lemma \ref{lem:p1:c2},
the $\daputdata{c}{\tup{\tg{\ell},*}}$ was invoked before 
the completion of the $\dagetdata{*}$ operation from 
$p_\pi$ that returned $\tg{\ell}$. It is also true that
$p_\pi$ discovered $\tup{\tg{\ell},*}$ in more than $k$ 
servers since it managed to decode the value. Therefore,
in this case $t^{dec}_{max}\geq\tg{\ell}$ and thus 
$\tg{\pi}>\tg{\phi}$.

% In case (ii) a server $s\in S_{\phi} \cap S_{\pi}$
% will not include the code associated with $\tg{\phi}$ iff $|Lists'_s| > \delta+1$,
% and therefore the local $List$ of $s$ associated $\tg{\phi}$
% with $\bot$ as the smallest tag in the list. In this 
% case $\pi$ will either discover a larger tag $\tg{}'$
% that is decode-able or will observe that the maximum 
% tag in more than $k$ servers is $\tg{\phi}$ as 
% $|S_\pi\cap S_\phi|\geq k$. Thus, either $t^{dec}_{max}\geq\tg{}'>\tg{\phi}$ and $\tg{\pi}>\tg{\phi}$,
% or $\pi$ will not decode any tag smaller than $\tg{\phi}$ 
% since $\tg{\phi}$ will be at least the first not decodable 
% tag. 

%%%%%%% Alternative proof
In case (ii), a server $s\in S_{\phi} \cap S_{\pi}$
will not include $\tg{\phi}$ iff $|Lists'_s| = \delta+1$,
and therefore the local $List$ of $s$ removed $\tg{\phi}$
as the smallest tag in the list. According to our assumption
though, no more than $\delta$ $\act{put-data}$ operations 
may be concurrent with a $\act{get-data}$ operation. 
Thus, at least one of the $\act{put-data}$ operations
that wrote a tag $\tg{}'\in Lists'_s$ must have completed
before $\pi$. Since $\tg{}'$ is also written in $|S'| = \frac{n+k}{2}$ servers then $|S_\pi\cap S'|\geq k$ and 
hence $\pi$ will be able to decode the value associated
to $\tg{}'$, and hence $t^{dec}_{max}\geq\tg{\ell}$ and 
$\tg{\pi}>\tg{\phi}$, completing the proof of this lemma.
% \nncom{[NN: In this case i think we can leave the 
% tag $\tg{\pi}$ in the List and show that the algorithm 
% will not violate safety even if we do not consider the 
% parameter $\delta$..for now i used $\delta$]}
% Therefore, during $\pi$, any server in $S_{\phi}\cap S_{\pi}$ responds with $List'$ either containing the tag $t_{\phi}$ to $p_{\pi}$ or containing only tags larger than $t_{\phi}$. 
% Because $|S_{\phi}| = |S_{\pi}| =\left\lceil \frac{n+k}{2} \right \rceil $ implies $| S_{\phi} \cap S_{\pi} | \geq k$, and hence $t^{dec}_{max}$ at $p_{\pi}$, during $\pi$ is at least as large as $t_{\phi}$, i.e., $t_{\pi} \geq t_{\phi}$. %
% If the $t_{\phi}$ is not in the $List'$ means either that more than $\delta$ $\act{put-data}$ operations appear after the completion of $\phi$ and before the completion of $\pi$, so the $t_{\phi}$ is removed from the initial $List$, or $p_\pi$ has a larger local tag that the $t_{\phi}$ (got it from a previous recent $\act{get-data}$), and while the servers in $S_{\pi}$ may have the $t_{\phi}$ in its initial $List$ they do not include it in the $List'$ as it is smaller than the $p_\pi$'s one. 
\end{proof}

\begin{theorem}[Safety]
    Let $\EX$ be an execution of an algorithm $A$ that contains a set $\Pi$ of complete $\act{get-data}$ and $\act{put-data}$ operations of Algorithm~\ref{algo:casopt}. %\ref{code:BI}. 
    Then every pair of 
    operations $\phi,\pi\in \Pi$ satisfy Property \ref{property:dap}.
\end{theorem}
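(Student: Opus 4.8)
The plan is to observe that this theorem is just the assembly of the two preceding lemmas. Property~\ref{property:dap} consists of conditions C1 and C2, each quantified over all ordered pairs $\phi,\pi$ of complete DAP operations in $\EX$. So I would fix an arbitrary such pair from $\Pi$ and argue by cases on the operation types, checking that the relevant sub-statement of Property~\ref{property:dap} is either vacuous or already established.

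\emph{Case $\phi$ is a $\dagetdata{c}$.} The only obligation here is C2, which is verbatim the statement of Lemma~\ref{lem:p1:c2}; applying that lemma discharges it (covering both the case where a matching $\daputdata{c}{}$ exists and the case where $\tup{\tg{\pi},v_\pi}=(t_0,v_0)$). When $\phi$ is instead a $\daputdata{c}{}$, C2 imposes no constraint, so this is the only nonvacuous instance of C2. \emph{Case $\phi$ is a $\daputdata{c}{\tup{\tg{\phi},v_\phi}}$, $\pi$ is a $\dagetdata{c}$, and $\phi\bef\pi$.} Then C1 for this pair is exactly Lemma~\ref{lem:p1:c1}, giving $\tg{\pi}\geq\tg{\phi}$. \emph{Case $\phi$ is a $\daputdata{c}{}$, $\pi$ is a $\dagettag{c}$, and $\phi\bef\pi$.} Here C1 concerns the $\act{get-tag}$ primitive, which \ecdapopt{} inherits unchanged from \ecdap{}, so the corresponding argument of \cite{ARES_arxiv} applies without modification and yields $\tg{\pi}\geq\tg{\phi}$. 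In every remaining pairing (e.g.\ $\phi$ not a $\daputdata{c}{}$, or $\phi\not\bef\pi$) both C1 and C2 are vacuous. Since these cases are exhaustive, every pair in $\Pi$ satisfies Property~\ref{property:dap}, which is the theorem.

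There is no genuine obstacle in this step — it is purely a matter of bookkeeping. The one point requiring care is to remember that Lemmas~\ref{lem:p1:c2} and~\ref{lem:p1:c1} only speak about $\act{get-data}$ and $\act{put-data}$, so the $\act{get-tag}$ instance of C1 must be invoked from the original \ecdap{} analysis rather than re-derived, exactly as flagged in the paragraph preceding Lemma~\ref{lem:p1:c2}. (Liveness/termination is treated separately and does not enter this safety argument: each primitive simply waits for $\lceil\frac{n+k}{2}\rceil$ responses, which is always attainable under the stated bound of at most $\frac{n-k}{2}$ crashes, and decodability of the returned value follows from the $[n,k]$-MDS property together with the overlap of $k$ servers across any two response quorums used in Lemma~\ref{lem:p1:c1}.)
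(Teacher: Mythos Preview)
Your proposal is correct and matches the paper's approach exactly: the paper's proof is the single line ``Follows directly from Lemmas~\ref{lem:p1:c2} and~\ref{lem:p1:c1},'' and your case analysis is just an explicit unpacking of that invocation. One small note: the theorem restricts $\Pi$ to $\act{get-data}$ and $\act{put-data}$ operations only, so your $\act{get-tag}$ case is not actually needed here (though the observation is correct and is indeed how the paper handles $\act{get-tag}$ elsewhere).
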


% \begin{proof}
% \nn{Showing that every $\act{get-data}$ returns a value 
% written by $\act{put-data}$ is easy and follows from the algorithm.
% To show that $\act{get-data}$ returns a tag higher than a preceding
% $\act{piut-data}$ operation we need to check the intersection of 
% the set of servers that reply to both operations. Since every operation wiats for $\frac{n+k}{2}$ servers to reply then 
% the intersection of the set of servers that replied to $\phi$ (put-data) and those that replied to $\pi$ (get-data) is $|\srvSet_\phi\cap\srvSet_\pi|\geq k$. Thus $\pi$ will either 
% decode the value written by $\phi$ or of another larger tag 
% (the one that pushed $t_\phi$ out of the list).}
% \end{proof}

\begin{proof}
Follows directly from Lemmas \ref{lem:p1:c2}  and \ref{lem:p1:c1}. 
\end{proof}

% different with Property 1 of the ARES paper: in this paper we do not use $\dagettag{c}$ function since the write operation uses $\dagetdata{c}$ during its first round 
 
% {\bf $\act{get-data}$ operation}:
% As in the old $\act{get-data}$, the proposed $\act{get-data}$ operation $\pi$ waits until the reception of $\left\lceil \frac{n+k}{2} \right\rceil$ responses during the{\GetData} phase. In the case where the $\pi$ does not have the latest version, the function completes as the old one, decoding the value. Otherwise, $\pi$ gets $bot$ values from servers since it already has the latest written value and does not need to decode it. Thus, the termination of $\pi$ is always guaranteed. 

% {\bf $\act{put-data}$ operation}: The $\act{put-data}$ is completed as the old function only when it has an older version than the given one. Otherwise, it terminates without performing anything. 

Liveness requires that any $\act{put-data}$ and $\act{get-data}$ operation defined by \ecdapopt{}  terminates. The following theorem captures the main result
of this section. 

% Liveness is different 
\begin{theorem}[Liveness]
    Let $\EX$ be an execution of an algorithm $A$ %$\coARES{}$
    that utilises the \ecdapopt{}. 
    %optimized DAPs in a configuration $c$, 
     Then any $\act{put-data}$ or 
$\act{get-data}$ operation $\op$ invoked in $\EX$ will 
eventually terminate.
% either terminate in $\EX$ or in the execution $\EX\circ\EX'$, where $\EX'$ is a
% finite execution fragment that extents $\EX$.
    % The optimized $DAP$ satisfies the \emph{Liveness} of the original $DAP$ presented in \cite{ARES}.
\end{theorem}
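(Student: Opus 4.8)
The plan is to treat the two primitives separately and, for each, to first argue that the client's waiting loop terminates because enough correct servers respond, and then argue that the information the client collects always suffices to produce a return value. Throughout I would use the standing assumptions stated just above the lemmas of this section: $|\servers{c}|=n$, at most $\frac{n-k}{2}$ servers of $c$ crash, and at most $\delta$ $\act{put-data}$ operations are concurrent with any $\act{get-data}$ operation.

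For $\act{put-data}(\tup{\tg{},v})$ the argument is short. If $\tg{}\le\tgb{c}$, then by line Alg.~\ref{algo:casopt}:\ref{line:putdata:decide} the primitive returns immediately without any communication, so it terminates. Otherwise the client broadcasts {\sc put-data} to the $n$ servers of $c$ and blocks until it receives $\left\lceil\frac{n+k}{2}\right\rceil$ {\sc ack}s; since at most $\frac{n-k}{2}$ servers crash, at least $\left\lceil\frac{n+k}{2}\right\rceil$ servers of $c$ are correct, and each such server replies with {\sc ack} after inserting the pair in its $List$ (Alg.~\ref{algo:casopt:server}). Hence the loop unblocks and $\act{put-data}$ returns.

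For $\act{get-data}$ the same counting shows the client eventually receives $List'$ from at least $\left\lceil\frac{n+k}{2}\right\rceil$ correct servers (every correct server answers a {\sc query-list} with a, possibly empty, $List'$), so the waiting condition is met. What remains, and is the crux, is to show that one of the two return branches of Alg.~\ref{algo:casopt} is always enabled, i.e. that $t_{max}^{dec}$ is well defined and equals $\tgb{c}$, or else $Tags_{dec}^{\geq k}\neq\bot$. I would exhibit a tag $t^*\ge\tgb{c}$ that occurs in at least $k$ of the received lists, carrying coded elements whenever $t^*>\tgb{c}$. Take $t^*$ to be the largest of $\tgb{c}$ and all tags written by $\act{put-data}$ operations that completed before the invocation of this $\act{get-data}$; this set is non-empty (it contains $t_0$), and since $\tgb{c}$ is updated only by the invoking client's own completed $\act{put-data}$ operations, $t^*\ge\tgb{c}$. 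The $\act{put-data}$ that deposited $t^*$ (or the server initialization, if $t^*=t_0$ and no $\act{put-data}$ wrote it) placed $t^*$ at $\left\lceil\frac{n+k}{2}\right\rceil$ servers, so by quorum intersection at least $k$ of the responders to this $\act{get-data}$ held $t^*$ in their $List$. None of those $k$ servers trimmed $t^*$ before replying: a server removes only its smallest tag, and only while $|List|>\delta+1$ (Alg.~\ref{algo:casopt:server}:\ref{line:server:removemin}); any tag that could evict $t^*$ is either smaller (removed first) or larger and written by a $\act{put-data}$ concurrent with this $\act{get-data}$, of which there are at most $\delta$, so $|List|\le\delta+1$ while $t^*$ is present. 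Hence $t^*$ (or a tag $\ge t^*$ with the same property) appears in at least $k$ of the received lists; if it strictly exceeds $\tgb{c}$ those occurrences carry coded elements and it is decodable, so the second branch returns, and otherwise $t_{max}^{dec}=\tgb{c}$ and the first branch returns $\tup{\tgb{c},\valb{c}}$. In either case $\act{get-data}$ terminates.

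I expect the main obstacle to be exactly this last step: showing that the optimization's server-side filtering — which discards every tag strictly below $\tgb{c}$ and sends only a $\bot$ placeholder for the tag equal to $\tgb{c}$ — can never strip away \emph{all} decodable candidates, and pinning down the precise meaning of $Tags_{dec}^{\geq k}$ at the boundary where the maximal surviving tag equals $\tgb{c}$ (there the client falls back to its cached $\valb{c}$ rather than decoding). This forces one to couple the $\delta$-concurrency bound on $|List|$ with the observation that $\tgb{c}$ never exceeds the largest already-completed tag, so the filtering only ever drops stale entries. Modulo this bookkeeping, the argument mirrors the $\ecdap{}$ liveness proof of \cite{ARES_arxiv}, with the client's cached $\tgb{c}$ playing the role previously played by the tag reconstructed from the servers.
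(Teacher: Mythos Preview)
Your approach is essentially the same as the paper's: both treat $\act{put-data}$ by quorum counting, and for $\act{get-data}$ both exhibit a witness tag from the last completed $\act{put-data}$, use the $\left\lceil\frac{n+k}{2}\right\rceil$-quorum intersection to place it at $\geq k$ responders, and invoke the $\delta$-bound to show it is not garbage-collected. The one substantive difference is that the paper additionally proves, via a contradiction using $k>n/3$, that $t_{max}^{dec}$ is the overall maximum tag discovered; you correctly omit this, since that step was needed only for the struck-out check $t_{max}^{dec}=t_{max}^{*}$ in the original \ecdap{} and is irrelevant to termination of the optimized branches you analyze.
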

	
% this has differences, all the above are remained unchange 
\begin{proof}
Given that no more than $\frac{n-k}{2}$ servers may fail, then 
from Algorithm \ref{algo:casopt} %(lines 22-28)
(lines Alg.~\ref{algo:casopt}:\ref{line:putdata:begin}--\ref{line:putdata:end})
, it is easy to 
see that there are at least $\frac{n+k}{2}$ servers 
that remain correct and reply to the $\act{put-data}$ operation. Thus, any $\act{put-data}$ operation completes. 
% \nn{Liveness depends on the assumption on $\delta$. You will 
% assume an execution with a $\act{get-data}$ operation $\rd$ and 
% a $\act{put-data}$ operations $\wrt$ that write a value associated to tag $t_\wrt$, and $\wrt\bef\rd$. Then show that $\rd$ will either 
% see $t_\wrt$ in the list of the intersecting servers or otherwise some other put data operation 
% $\wrt'$ completed before $\rd$ with a tag $t_{\wrt'} > t_\wrt$ since
% no more than $\delta$ ops can be concurrent with $\rd$.}

%\nncom{[NN: what is T2 below?]}

Now we prove the liveness property of any $\act{get-data}$ operation $\pi$. 
Let 
$p_{\omega}$ and $p_{\pi}$ be the processes that invokes the $\act{put-data}$ operation $\omega$ and $\act{get-data}$ operation $\pi$. Let $S_{\omega}$ be the set of 
$\left\lceil \frac{n+k}{2} \right \rceil$ servers that responds to $p_{\omega}$, in the $\act{put-data}$ operations, in $\omega$.
Let $S_{\pi}$ be the set of $\left\lceil \frac{n+k}{2} \right \rceil$ servers that responds to  $p_{\pi}$ during the  $\act{get-data}$ step of $\pi$. Note that in $\EX$ at the point execution $T_1$, just before the execution of  $\pi$, none of the write operations in 
$\Lambda$ is complete. Let $T_2$ denote the earliest point of time when $p_{\pi}$ receives all the $\left\lceil \frac{n+k}{2} \right\rceil$ responses. Also, the set $\Lambda$ includes all the $\act{put-data}$ operations that starts before $T_2$ such that $tag(\lambda) > tag(\omega)\}$. Observe that, by algorithm design, the coded-elements corresponding to  $t_{\omega}$ are garbage-collected from the $List$ variable of a server only if more than $\delta$ higher tags are introduced by subsequent writes into the server.  Since the number of concurrent writes  $|\Lambda|$, s.t.  $\delta > | \Lambda |$ the corresponding value of tag $t_{\omega}$ is not garbage collected in $\EX$, at least until execution point $T_2$  in  any of the servers in $S_{\omega}$.
Therefore, during the execution fragment between the execution points $T_1$ and $T_2$ of the execution $\EX$, the tag and coded-element pair is present in the $List$ variable of every server in $S_{\omega}$ that is active. As a result, the tag and coded-element pairs, $(t_{\omega}, \Phi_s(v_{\omega}))$ exists in the $List$ received from any $s \in S_{\omega} \cap S_{\pi}$ during operation $\pi$. Note that since $|S_{\omega}| = |S_{\pi}| =\left\lceil \frac{n+k}{2} \right \rceil $ hence $|S_{\omega} \cap S_{\pi} | \geq k$ and hence 
$t_{\omega} \in Tags_{dec}^{\geq k} $, the set of decode-able tag, i.e., the value $v_{\omega}$ can be decoded by $p_{\pi}$ in $\pi$, which demonstrates that $Tags_{dec}^{\geq k}  \neq \emptyset$. 

Next we want to argue that $t_{max}^{dec}$ is the maximum tag that $\pi$ discovers via a contradiction: 
we assume a tag $t_{max}$, which is the maximum tag $\pi$ discovers, but it is not decode-able, i.e., $t_{max} \not\in Tags_{dec}^{\geq k}$ and $t_{max} > t_{max}^{dec}$.   
Let $S^k_{\pi} \subset S$ be any subset of $k$ servers that responds with $t_{max}$ in their $List'$ variables to $p_{\pi}$. Note that since $k >  n/3$ hence $|S_{\omega} \cap S^k_{\pi}|  \geq \left\lceil \frac{n+k}{2} \right \rceil +  \left\lceil \frac{n+1}{3} \right \rceil \geq 1$, i.e., $S_{\omega} \cap S^k_{\pi} \neq \emptyset$. Then $t_{max}$  must be in some servers in $S_{\omega}$ at $T_2$ and since $t_{max} > t_{max}^{dec} \geq t_{\omega}$. 
Now since $|\Lambda| < \delta$ hence $(t_{max}, \Phi_s(v_{max}))$ cannot be removed from any server at $T_2$  because there are not enough concurrent write operations (i.e., writes in $\Lambda$) to garbage-collect the coded-elements corresponding to tag $t_{max}$. Also since $\pi$ cannot have a local tag larger than $t_{max}$, 
according to the lines Alg.~\ref{algo:casopt:server}:\ref{line:server:querylist-condition1:start}--\ref{line:server:querylist-condition:end}
each server in $S_{\pi}$ includes the $t_{max}$ in its replies. In that case, $t_{max}$ must be in $Tag_{dec}^{\geq k}$, a contradiction.
\end{proof}

% \subsection{Correctness proof of the optimized $DAP$}
% \label{sec:dap:optimize:correctness}
% \input{correctness_DAP}

\section{Experimental Evaluation}~\label{sec:evaluation}
Distributed systems are often evaluated on an emulation or an overlay testbed. Emulation testbeds give users full control over the host and network environments, their experiments are repeatable, but their network conditions are artificial. 
%On the other hand, 

The environmental conditions of overlay testbeds are not repeatable and provide less control over the experiment, however they provide real network conditions and thus provide better insight on the performance of the algorithms in a real deployment. We used  Emulab~\cite{emulab} as an emulation testbed and Amazon Web Services (AWS) EC2~\cite{ref_url_AWS-EC2} as an overlay testbed. %\vspace{-.3em} 
%to test the algorithms in a highly adverse, uncontrolled, real-time environment.
% AWS EC2 is a web service that provides scalability and performance, and allows users to rent virtual computers on which to run their applications. 
%
\subsection{Evaluated Algorithms and Experimental Setup}
%\cgadd{[CG: The paragraph that we had on the algrotihms evaluated would have been impossible to understand by the readers. Firstly, we throw out of the blue CoABD, which has not been mentioned once in the paper so far. Then, up to this point we talked about CoARES with erasure code and CoARESF with erasure code and fragmentation (two level of stripping). Suddenly we throw new names which are not consistent with the discussion in all previous sections (including Introduction). 
%So, here, we need to make things very very clear to the reader. As an attempt, I have added the labeled paragraph "Algorithms Evaluated", where each algorithm is briefly explained. See if what I have written is correct and accurate.]}

\myparagraph{Evaluated Algorithms}. We have implemented and  evaluated the performance of the following algorithms:
\begin{itemize}
    \item \vmwABD{}. This is the coverable version of the traditional, static ABD algorithm~\cite{ref_article_ABD, ref_article_MWMRABD}, as presented in~\cite{coverability}. It will be used as an overall baseline.
    \item \fvmwABD{}. This is the version of \vmwABD{} that provides fragmented coverability, as presented in~\cite{SIROCCO_2021}. It can be considered as a baseline algorithm of the \frfs{} framework. 
    \item \ARESabd{}. This is a version of \coARES{} that uses the \abddap{} implementation~\cite{ARES} (cf. Section~\ref{sec:ares}).
    %i.e., it does not use Erasure Code. 
    It can be considered as the dynamic (reconfigurable) version of \vmwABD{}.
    \item \fARESabd{}. This is \fcoARES{} together with the \abddap{} implementation, i.e., it is the fragmented version of \ARESabd{}.
    \item \ARESec{}. This is a version of \coARES{} (see Section Section~\ref{sec:coverable:impl}) that uses the \ecdapopt{} implementation (see  Section~\ref{sec:dap:optimize}), 
    % i.e., this is the \coares{} algorithm presented in Section~\ref{sec:coverable:impl} with the EC-DAP optimization presented in Section~\ref{sec:dap:optimize}.
    \item \fARESec{}. This is the two-level striping  algorithm presented in Section~\ref{sec:fragmentation:impl} when used with the \ecdapopt{} implementation of Section~\ref{sec:dap:optimize}, i.e., it is the fragmented version of \ARESec{}.
\end{itemize}

Note that we have implemented all the above algorithms using the same baseline code and communication libraries.All the modules of the algorithms are written in Python, and the asynchronous communication between layers is achieved by using DEALER and ROUTER sockets, from the ZeroMQ library~\cite{ref_url_zmq}.
%\cgadd{[CG: We might want to add that it was done in Python and using zeroMQ.]}

\renewcommand{\abdbased}{ABD-based}
\renewcommand{\ecbased}{EC-based}

{In the remainder, for ease of presentation, and when appropriate, we will be referring to algorithms \vmwABD{}(F) and \ARESabd{}(F) as the 
\abdbased{} algorithms and to algorithms
\ARESec{}(F) as the \ecbased{} algorithms.}

\remove{
Our experimental evaluation focused  on evaluating and comparing the performance of \fcoARES{} against \coARES{}. }
%the performance of \frfs{} using different storage emulations against the original ones without the fragmentation. 

\myparagraph{Distributed Experimental Setup on Emulab:}
All physical nodes were placed on a single LAN using a DropTail queue without delay or packet loss. We used nodes with one 2.4 GHz 64-bit Quad Core Xeon E5530 ``Nehalem" processor and 12 GB RAM. Each physical machine runs one server or client process. This guarantees a fair communication delay between a client and a server node. We have an extra physical node, the controller, which orchestrates the experiments. A client's physical machine has one Daemon that listens for its requests.

\myparagraph{Distributed Experimental Setup on AWS:}
For the File Sizes and Block Sizes experiments, we create a cluster with 8 node instances. All of them have the same specifications, their type is t2.medium with \SI{4}{\giga\byte} RAM, 2 vCPUs and \SI{20}{\giga\byte} storage. For the Scalability experiments, we create a cluster with 11 node instances. Ten of them have the same specifications, their type is t2.small with \SI{2}{\giga\byte} RAM, 1 vCPU and \SI{20}{\giga\byte} storage, and one is of type t2.medium. In all experiments one medium node has also the role of controller to orchestrate the experiments. In order to
guarantee a fair communication delay between a client and a server node, we placed at most one server process on each physical machine. Each instance with clients %on AWS
has one Daemon to listen for clients' requests.

We used an external implementation of Raft~\cite{Raft} consensus algorithms, which was used for the service reconfiguration and was deployed on top of small RPi devices. Small devices introduced further delays in the system, reducing the speed of reconfigurations and creating harsh conditions for longer periods in the service. 
\remove{
\at{We use optimization to reduce the latency of a reconfiguration on FM. Instead of performing a reconfig operation for each block, we use the DSMM to perform a reconfig operation for the entire file. During a single reconfig, the DSMM gathers all the blocks from the previous configurations and transfers them to the new one. } 
\cgadd{[CG: Have we shown that this optimization does not jeopardize correctness?]}\atcom{No}
}

For the deployment and remote execution of the experimental tasks on both Emulab and AWS, the controller used {\em Ansible}~\cite{ansible}, a tool to automate different IT tasks. More specifically, we used {\em Ansible Playbooks}, scripts written in YAML format. These scripts get pushed to target nodes, do their work (over SSH) and get removed when finished. 

\subsection{Overview of the experiments}

\myparagraph{Node Types:}
During the experiments, we use four distinct types of nodes, writers, readers, reconfigurers and servers. Their main role is listed below:
%\wSet \cup\rdSet\cup\recSet$
\begin{itemize}
\item {\bf writer $w \in \wSet \subseteq \cSet$:} 
a client that sends write requests to all servers and waits for a quorum of the servers to reply. 

\item {\bf reader $r \in \rdSet \subseteq \cSet$:} 
a client that sends read requests to servers and waits for a quorum of the servers to reply.

\item {\bf reconfigurer $g \in \recSet \subseteq \cSet$:} 
a client that sends reconfiguration requests to servers and waits for a quorum of the servers to reply. This type of node is used only in any variant of \ares{} algorithm. 

\item {\bf server $s \in \srvSet$:} a server listens for read and write and reconfiguration requests, it updates its object replica according to the DSMM implementation and replies to the process that originated the request. 
\end{itemize}
% \nn{add the information about the deployment}
% \at{I have added this:}
% The tasks of each writer/reader/reconfigurer node include the job execution of the Client Layer, Daemon and DSM Client as shown in Fig~\ref{fig:architecture}. While each server node has the tasks of the DSM Server.  

\myparagraph{Performance Metric:} 
The metric for evaluating the algorithms is {\em operational latency}. This includes both communication and computational delays. The operational latency is computed as the average of all clients' average operational latencies. 
% The performance is measured at two levels of the system, at the client's level and the shared memory level (DSMM). 
The performance of \vmwABD{}  shown in the Emulab results can be used as a reference point in the following experiments since the rest algorithms combine ideas from it. The Emulab results are compiled as averages over five samples per each scenario. However, the AWS results are complied as averages over three samples for the Scalability scenario, while the rest scenarios run only once.      

\subsection{Experimental Scenarios}
\label{ssec:scenarios}
Here we describe the scenarios we constructed and the settings for each of them. 
% We have implemented experiments where operations write and read data to a specific file and have measured the performance under various scenarios. 
We executed every experiment of each scenario in two steps. First, we performed a boot-up phase where a single client writes a file of a specific initial size and the other readers and writers are informed about it. Second, operations write and read data to this file concurrently and we have measured the performance under various scenarios. 
During all the experiments, as the writers kept updating the file, its size increased (we generate text files with random bytes strings).
%\cgadd{[CG: We need to say somewhere, I think here is a good place, what type of files we used -- e.g., text files or binary files -- I think we used plain text files.]}

\myparagraph{Parameters of algorithms:} 
% It is worth mentioning that t
The quorum size of the \ecbased{} algorithms is $\left\lceil \frac{n+k}{2} \right\rceil$, while the quorum size of the \abdbased{} algorithms is  $\left\lfloor \frac{n}{2} \right\rfloor+1$. The parameter $n$ is the total number of servers, $k$ is the number of encoded data fragments,  and $m$ is the number of parity fragments, i.e. $n-k$. In relation to  \ecbased{} algorithms, we can conclude that the parameter $k$ is directly proportional to the quorum size. But as the value of $k$ and quorum size increase, the size of coded elements decreases. Also, a high number of $k$ and consequently a small number of $m$ means less redundancy with the system tolerating fewer failures. When $k=1$ we essentially converge to replication. Parameter $\delta$ 
in \ecbased{} algorithms is the maximum number of concurrent put-data operations, i.e., the number of writers.  

\myparagraph{Distributed Experiments:} 
 For the distributed experiments (in both testbeds) we use a \emph{stochastic} invocation scheme in which readers and writers pick a random time uniformly distributed (discrete) between intervals to invoke their next operations. Respectively the intervals are $[1...rInt]$ and $[1..wInt]$, where $rInt, wInt = 3sec$. If there is a reconfigurer, it invokes its next operation every $15sec$ and performs a total of 5 reconfigurations. 
 %\cgadd{[CG: Are the above times (secs) the same for Emulab and AWS?]}\at{Yes}

\myparagraph{%In particular, 
We present three main types of scenarios:}
\begin{itemize}[noitemsep,topsep=-8
pt]
    \item Performance VS. Initial File Sizes: examine performance when using
  different initial file sizes.
    \item Performance VS. Scalability of nodes under concurrency: examine performance as the number of service participants increases. 
    \item Performance VS. Block Sizes: examine performance under 
  different block sizes (only for fragmented algorithms).\vspace{-.3em}
\end{itemize}

\subsection{Performance VS. Initial File Sizes}

The first scenario is made to measure the performance of algorithms when the writers update a file whose size gradually increases. 
% compare the read, write and recon latency of the systems when using different initial file sizes. 
%The goal is to see how the block strategy benefits the performance of the algorithms. in distributing the data across replica servers. 
We varied the $f_{size}$ from \SI{1}{\mega\byte} to \SI{512}{\mega\byte} by doubling the 
file size in each experimental run. The performance of some experiments is missing as the non-fragmented algorithms crashed when testing larger file sizes due to an out-of-memory error. The maximum, minimum and average block sizes (\emph{rabin fingerprints} parameters) were set \SI{1}{\mega\byte}, \SI{512}{\kilo\byte} and \SI{512}{\kilo\byte} respectively.  

\paragraph{Emulab parameters} We have 5 writers, 5 readers and 11 servers. We run twice the \ecbased{} algorithms with different value of parity,
one with $m$=1 and one with $m$=5. Thus, the quorum size of the \ecbased{} algorithms with $m$=1 is $\left\lceil \frac{11+10}{2} \right\rceil=11$, while the quorum size of \ecbased{} algorithms with $m$=5 is $\left\lceil \frac{11+6}{2} \right\rceil=9$. The quorum size of \abdbased{} algorithms is $\left\lfloor \frac{11}{2} \right\rfloor+1=6$. In total, each writer performs 20 writes and each reader 20 reads.

\paragraph{AWS parameters} We have 1 writer, 1 reader and 6 servers. We run twice the \ecbased{} algorithms with different value of parity, one with $m$=1 and one with $m$=4. Thus the quorum size of the \ecbased{} algorithms with $m$=1 is $6$ 
%$\left\lceil \frac{6+5}{2} \right\rceil=6$
, while the quorum size of \ecbased{} algorithms with $m$=4 is $4$. %$\left\lceil \frac{6+2}{2} \right\rceil=4$.
The quorum size of \abdbased{} algorithms is $4$.
%$\left\lfloor \frac{6}{2} \right\rfloor+1=4$.
In total, each writer performs 50 writes and each reader 50 reads.

\begin{figure*}%[bp!]
{\small \centering
\begin{tabular}{cc}
	\includegraphics[scale=0.5,width=0.5\textwidth,height=45mm]{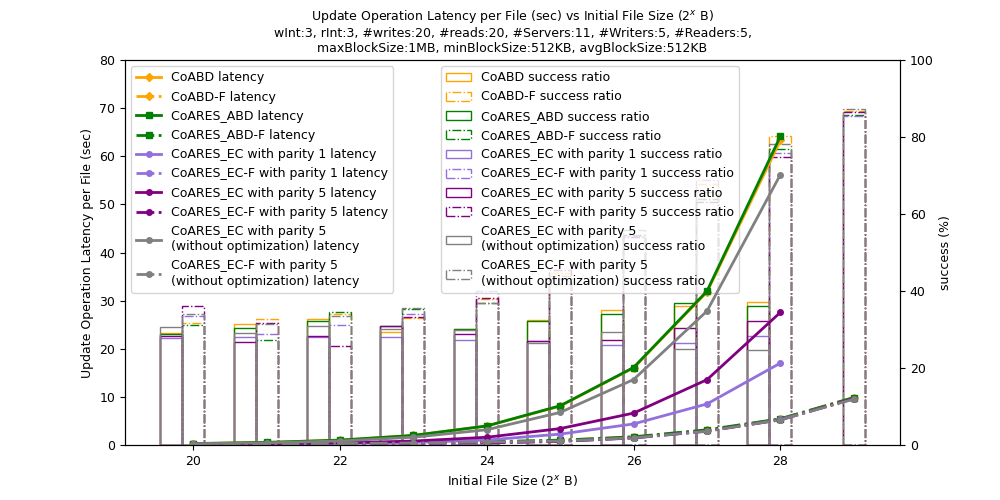}
	&
    \includegraphics[scale=0.5,width=0.5\textwidth,height=45mm]{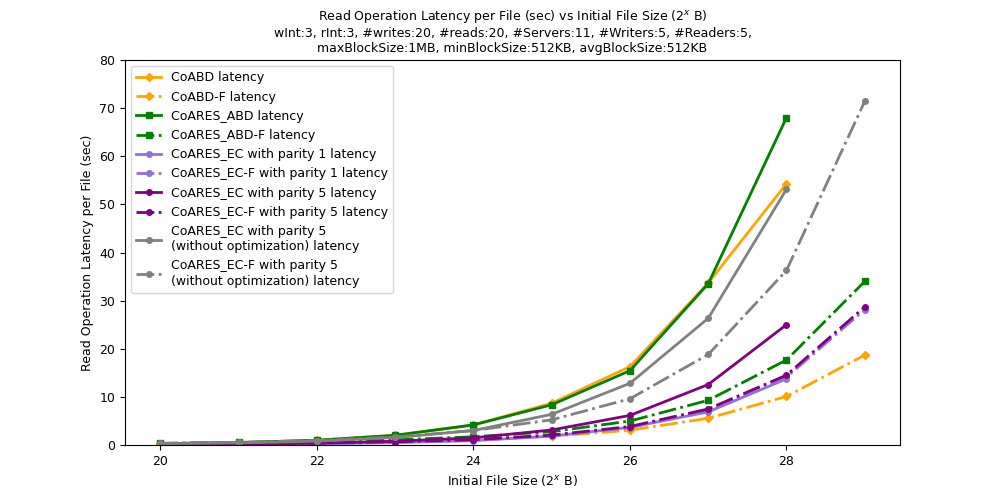}\\
    (a) & (b) \\ 
\end{tabular}\vspace{-1em}
}
\caption{
Emulab results for File Size experiments.
}
\label{fig:plots_filesize_emulab}
\end{figure*}

\begin{figure*}%[bp!]
{\small \centering
\begin{tabular}{cc}
	\includegraphics[scale=0.5,width=0.5\textwidth,height=45mm]{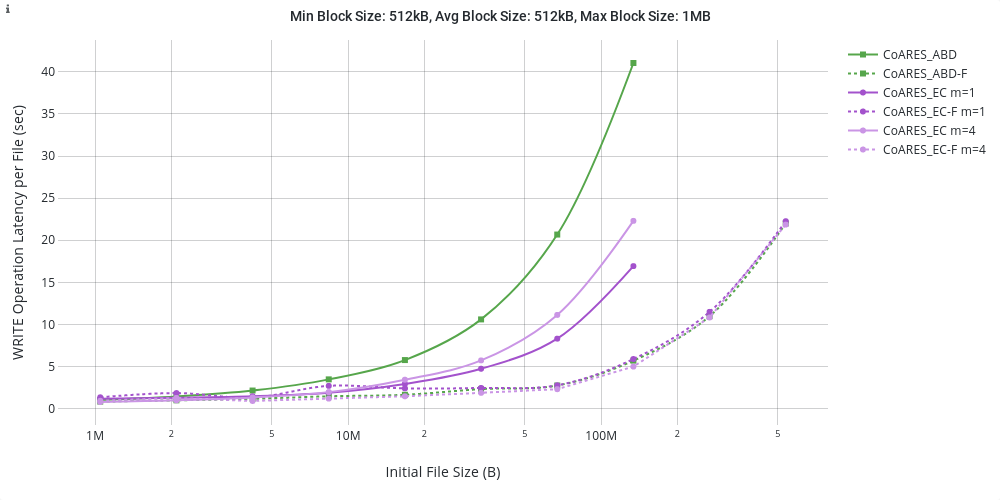}
	&
    \includegraphics[scale=0.5,width=0.5\textwidth,height=45mm]{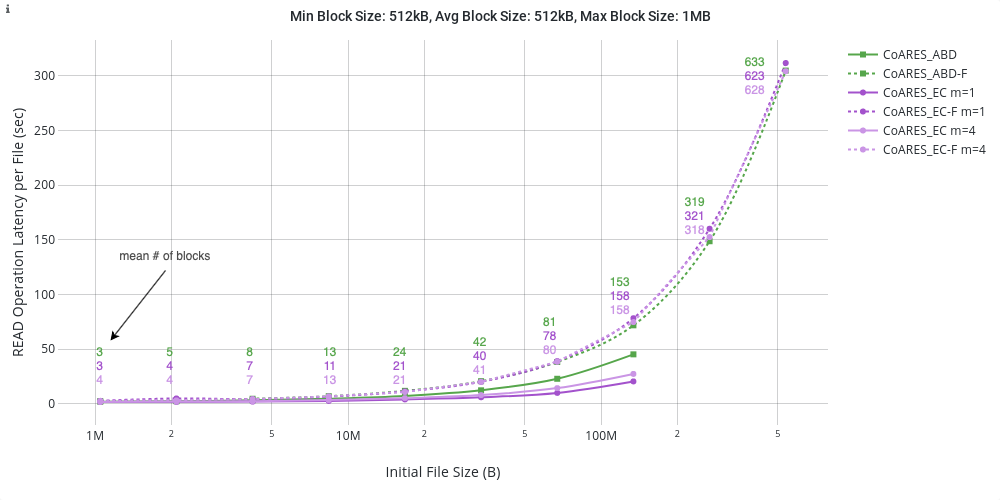}\\
    (a) & (b) \\ 
    \includegraphics[scale=0.5,width=0.5\textwidth,height=45mm]{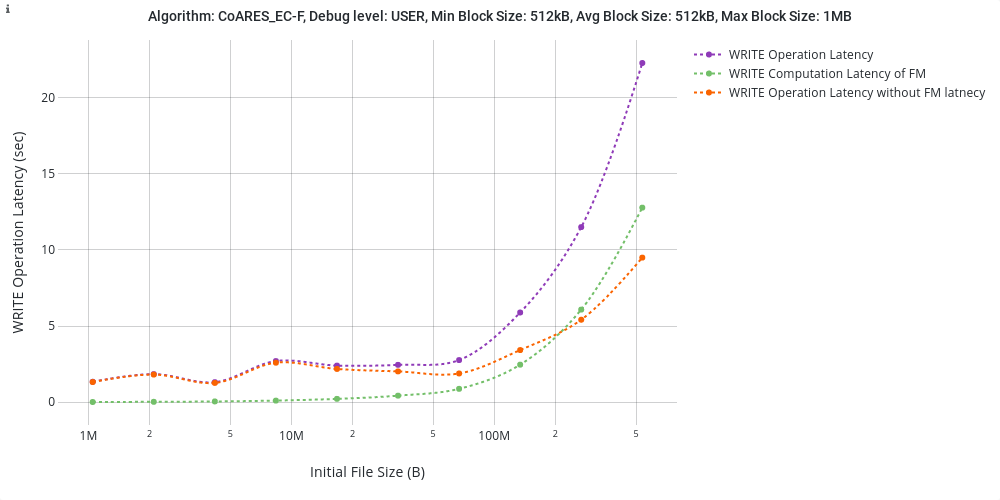}
    &
    \includegraphics[scale=0.5,width=0.5\textwidth,height=45mm]{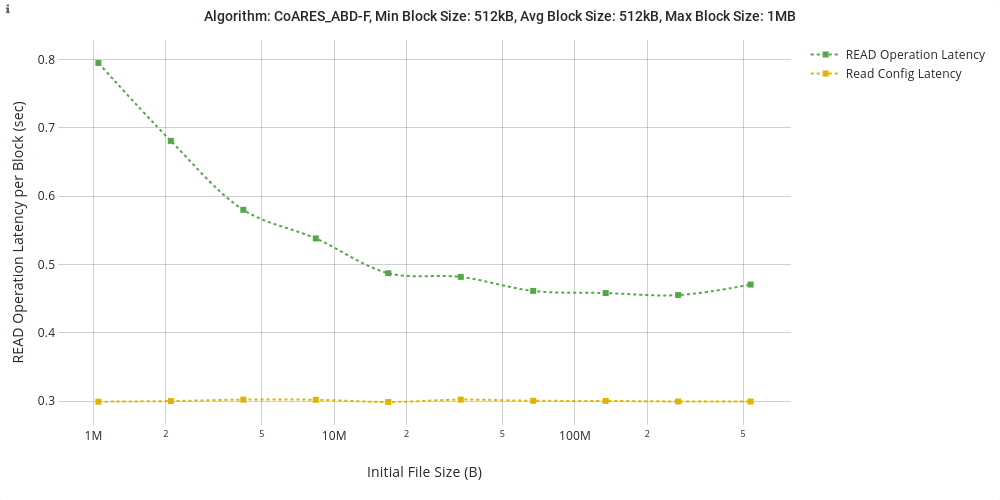}\\
	(c) & (d) \\ 
\end{tabular}\vspace{-1em}
}
\caption{
AWS results for File Size experiments.
}
\label{fig:plots_filesize_AWS}\vspace{-1em}	
\end{figure*}
	
We measure the read and write operation latencies for both original and the fragmented variant of algorithms; the results can be seen on Figs.~\ref{fig:plots_filesize_emulab} and~\ref{fig:plots_filesize_AWS}. As shown in Figs.~\ref{fig:plots_filesize_emulab}(a) and~\ref{fig:plots_filesize_AWS}(a), the fragmented algorithms that use the {\em FM} achieve significantly smaller write latency, when the file size increases, which is a result of the block distribution strategy. In Fig.~\ref{fig:plots_filesize_emulab}(a), the lines of fragmented algorithms are very closed to each other. The fact that the \fARESec{}{} with $m$=1 (Fig.~\ref{fig:plots_filesize_AWS}(a)) at smaller file sizes does not benefit so much from the fragmentation, is because the client waits more responses for each block request compared to \abdbased{} algorithms with fragmentation. 
However, the update latency exhibited in non-fragmented algorithms appears to increase linearly with the file size. This was expected, since as the file size increases, it takes longer latency to update the whole file. Also, the successful file updates achieved by fragmented algorithms are significantly higher as the file size increases since the probability of two writes to collide on a single block decreases as the file size increases (Fig.~\ref{fig:plots_filesize_emulab}(a)). On the contrary, the non-fragmented algorithms do not experience any improvement as it always manipulates the file as a whole.

The Block Identification (BI) computation latency contributes significantly to the increase of fragmented algorithms' update latency in larger file sizes, as shown in Fig.~\ref{fig:plots_filesize_AWS}(c). We have set the same parameters for the \emph{rabin fingerprints} algorithm for all the initial file sizes, which may have favored some file sizes but burdened others.

As shown in Fig.~\ref{fig:plots_filesize_emulab}(b), all the fragmented algorithms have smaller read latency than the non-fragmented ones. This happens since the readers in the shared memory level transmit only the contents of the blocks that have a newer version. While in the non-fragmented algorithms, the readers transmit the whole file each time a newer version of the file is discovered. This explains the increasing curve of non-fragmented compared to their counterpart with fragmentation.

On the contrary, the read latency of \coARES{} in the corresponding AWS experiment (Fig.~\ref{fig:plots_filesize_AWS})(b) has not improved with the fragmentation strategy. This is due to the fact that the AWS testbed provides real network conditions.
The \fcoARES{} read/write operation has at least two more rounds of communication to perform than \fvmwABD{} in order to read the configuration before each of the two phases. As we can see in Fig.~\ref{fig:plots_filesize_AWS}(d), the read-config operations of \fARESabd{} during a block read operation have a stable overhead in latency. Thus, when the FM module sends multiple read block requests, waiting each time for a reply, the client has this stable overhead for each block request. 
The average number of blocks read in each experiment is shown in the Fig.~\ref{fig:plots_filesize_AWS}(b). 
It is also worth mentioning that the decoding of the read operation in \ecbased{} algorithms is slower than the encoding of the write as it requires more computation. It would be interesting to examine whether the multiple read block requests in \frfs{} could be sent in parallel, reducing the overall communication delays. 

\ecbased{} algorithms with $m$=5, $k$=6 in Emulab and with $m$=4, $k$=2 in AWS results in the generation of smaller number of data fragments and thus bigger sizes of fragments and higher redundancy, compared to \ecbased{} algorithms with $m$=1. As a result, with a higher number of $m$ (i.e. smaller $k$) we achieve higher levels of fault-tolerance, but with wasted storage efficiency. The write latency seems to be less affected by the number of $m$ since the encoding is faster as it requires less computation. %\vspace{-.3em} 

{In Figs.~\ref{fig:plots_filesize_emulab}(a)(b), we can {additionally observe} the write and read latency of \ARESec{} and \fARESec{} (with $m$=5) when {\ecdap{} is used instead of \ecdapopt{}} in the DSMM layer. Both algorithms, {when using the optimization (i.e., \ecdapopt{}) incur} significant  reductions on the read latency (in half), especially for large files. {Furthermore,} the write latency of \ARESec{} is significantly reduced (in half); {there is no much gain for the write latency of \fARESec{}, which was expected since it is already very low due to fragmentation (the optimization was aiming the read latency anyway).}}

% As shown in Figs.~\ref{fig:plots_filesize_emulab}(b), the read latency of \fvmwABD{} is much smaller than of \vmwABD{}. This happens since the readers in the shared memory level transmit only the contents of the blocks that have a newer version. While in \vmwABD{}, the readers transmit the whole file each time a newer version of the file is discovered.
% This explains the increasing curve of \vmwABD{} compared to its counterpart with fragmentation, \fvmwABD{}. However, the fact that the read latency of \ares{} has not improved with the fragmentation strategy may be due to its complex implementation.
% The \ares{} read/write operation has two more rounds of communication to perform than \fvmwABD{} in order to read the configuration before each of the two phases. 
% As we can see in Fig.~\ref{fig:plots_filesize_AWS}(d), the read-config operations of \ares{} during a read/write operation have a stable overhead in latency. Thus when the {\em FM} module sends multiple read block requests, waiting each time for a reply, the \ares{} client has this stable overhead for each block request. 
% It is also worth mentioning that the decoding of the read operation in \ec{} is slower than the encoding of the write as it requires more computation. It would be interesting to examine whether the multiple read block requests in \frfs{} could be sent in parallel, reducing the overall communication delays.

\subsection{Performance VS. Scalability of nodes under concurrency}

This scenario is constructed to compare the read, write and recon 
latency of the algorithms, as the number of service participants increases.
In both Emulab and AWS, we varied the number of readers $|R|$ and the number of writers $|W|$ from 5 to 25, while the number of servers $|S|$ varies from 3 to 11. In AWS, the clients and servers are distributed in a round-robin fashion. 
We calculate all possible combinations of readers, writers and servers where the number of readers or writers is kept to 5. In total, each writer performs 20 writes and each reader 20 reads.
The size of the file used is \SI{4}{\mega\byte}.
The maximum, minimum and average block sizes %(\emph{rabin fingerprints} parameters)
were set to \SI{1}{\mega\byte}, \SI{512}{\kilo\byte} and \SI{512}{\kilo\byte} respectively. For each number of servers, we set different parity for \ecbased{} algorithms in order to achieve the same fault-tolerance with \abdbased{} algorithms in each case, except in the case of 3 servers (to avoid replication). With this however, the \ec{} client has to wait for responses from a larger quorum size. The parity value of the \ecbased{} algorithms is set to $m$=1 for 3 servers, $m$=2 for 5 servers, $m$=3 for 7 servers, $m$=4 for 9 servers and $m$=5 for 11 servers. 

\begin{figure*}
{\small \centering
\begin{tabular}{ccc}
    \includegraphics[width=0.33\textwidth,height=45mm]{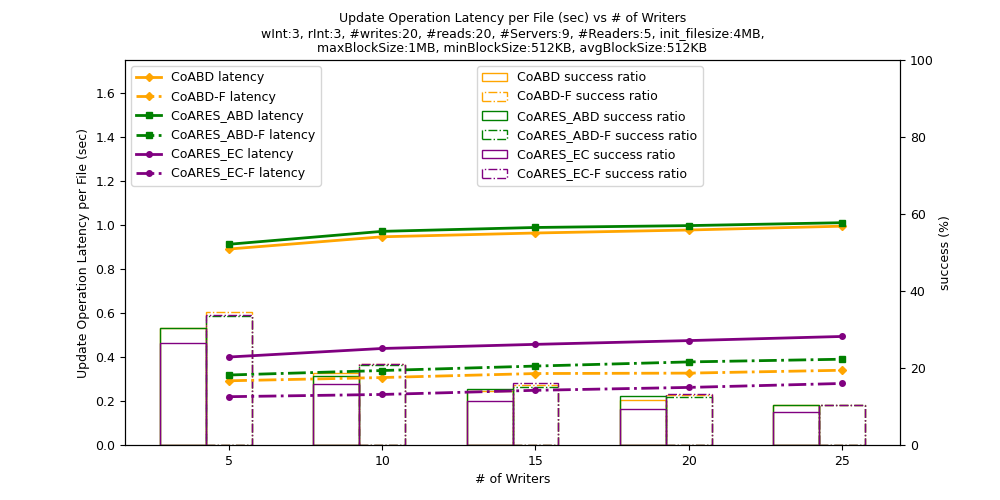}
    &
    \includegraphics[width=0.33\textwidth,height=45mm]{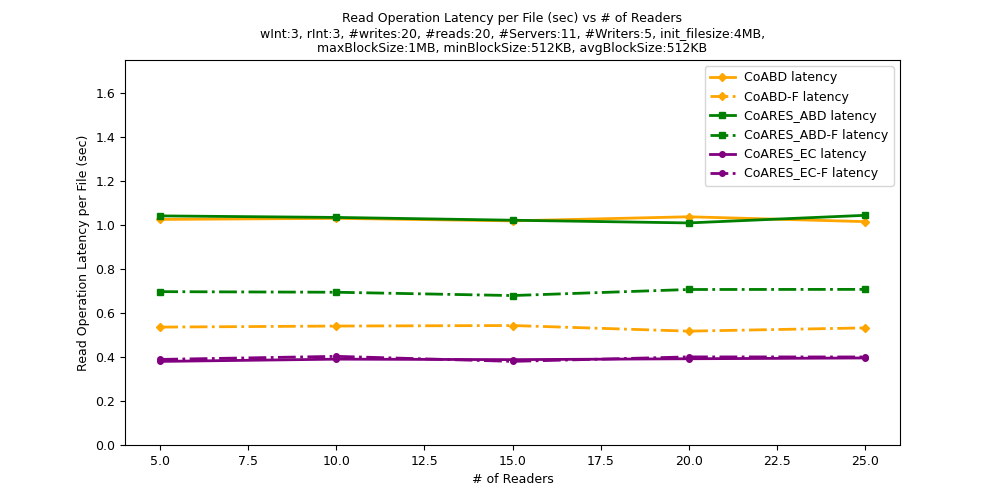}
	&
	\includegraphics[width=0.33\textwidth,height=45mm]{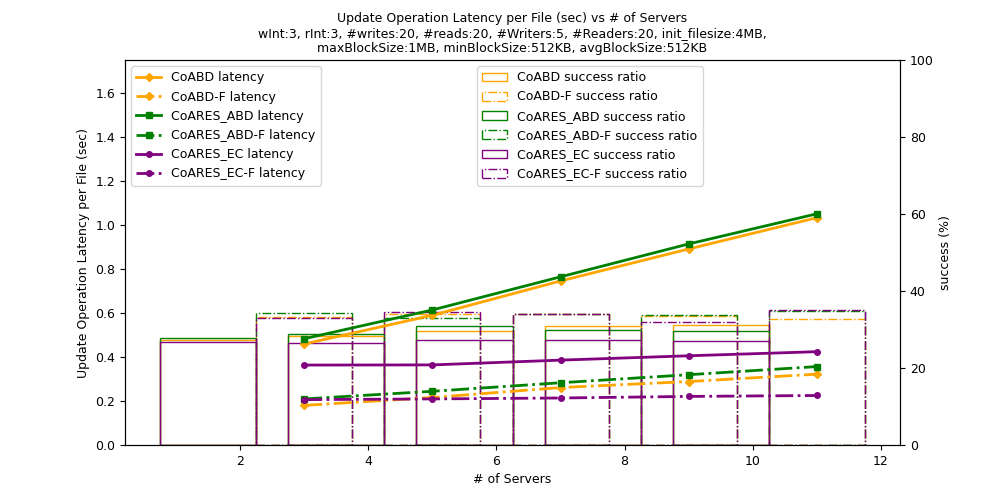}\\
	(a) & (b) & (c)\\
\end{tabular}\vspace{-1em}
}
\caption{
Emulab results for Scalability experiments.
}
\label{fig:plots_scalability_emulab}
\end{figure*}

\begin{figure*}
{\small \centering
\begin{tabular}{ccc}
    \includegraphics[width=0.33\textwidth,height=45mm]{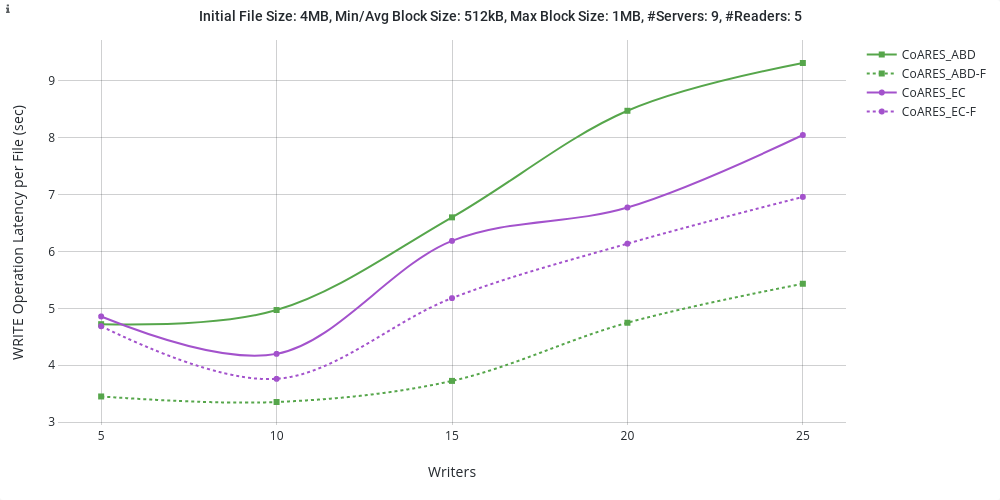}
    &
	\includegraphics[width=0.33\textwidth,height=45mm]{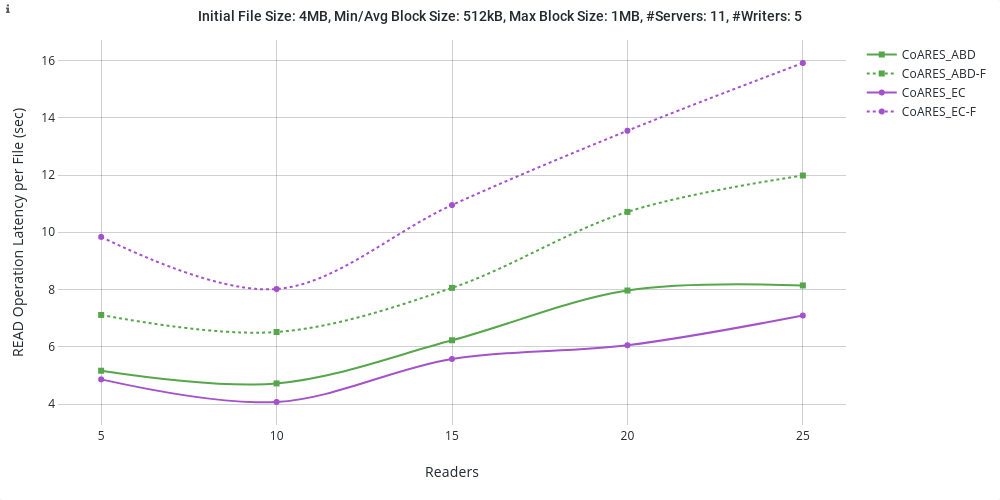}
	&
	\includegraphics[width=0.33\textwidth,height=45mm]{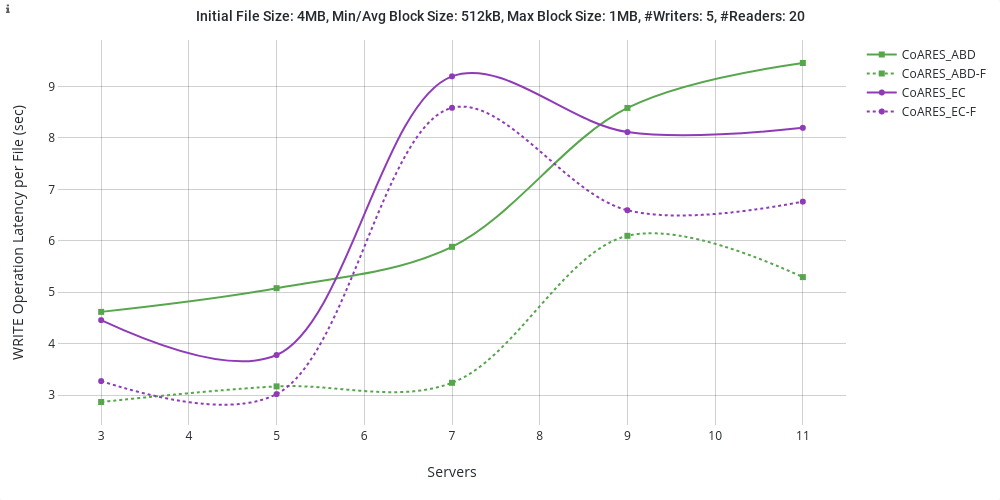}\\
    (a) & (b) & (c)\\
    
\end{tabular}\vspace{-1em}
}
\caption{AWS results for Scalability experiments.}
\label{fig:plots_scalability_AWS}\vspace{-1em}
\end{figure*}

The results obtained in this scenario are presented in Fig.~\ref{fig:plots_scalability_emulab} and Fig.~\ref{fig:plots_scalability_AWS} for Emulab and AWS respectively. As expected, \ARESec{} has the lowest update latency among non-fragmented algorithms because of the striping level. Each object is divided into $k$ encoded fragments that reduce the communication latency (since it transfers less data over the network) and the storage utilization. The fragmented algorithms perform significantly better update latency than the non-fragmented ones, even when the number of writers increases (see Figs.~\ref{fig:plots_scalability_emulab}(a),~\ref{fig:plots_scalability_AWS}(a)).
This is because the non-fragmented writer updates the whole file, while each fragmented writer updates a subset of blocks which are modified or created. We observe that the update operation latency in algorithms \vmwABD{} and \ARESabd{} increases even more as the number of servers increases, while the operation latency of \ARESec{} decreases or stays the same (Figs.~\ref{fig:plots_scalability_emulab}(c),~\ref{fig:plots_scalability_AWS}(c)). This is because when increasing the number of servers, the quorum size grows but the message size decreases. Therefore, while both non-fragmented \abdbased{} algorithms and \ARESec{} wait for responses from more servers, \ARESec{} gains the advantage of decreased message size. However, when going from 7 to 9 servers, we find that there is a decrease in latency. This is due the choice of the parity value (parameter of \ecbased{} algorithms) selected for 7 servers. 
% check it again 

Due to the block allocation strategy in fragment algorithms, more data are successfully written (cf. Fig.~\ref{fig:plots_scalability_emulab}(a),~\ref{fig:plots_scalability_emulab}(b)), explaining the slower \fcoARES{} read operation (cf. Figs.~\ref{fig:plots_scalability_emulab}(b),~\ref{fig:plots_scalability_AWS}(b)).

We built four extra experiments in Emulab to verify the correctness of the variants of \ares{} when reconfigurations coexist with read/write operations. 
The four experiments differ in the way the reconfigurer works; three experiments are based on the way the reconfigurer chooses the next storage algorithm and one in which the reconfigurer changes concurrently the next storage algorithm and the quorum of servers. 
In these experiments the number of servers $|S|$ is fixed to 11 and there is one reconfigurer. 
All of the scenarios below are run for both \coARES{} and \fcoARES{}.

\begin{itemize}
\item\textbf{Changing to the Same Reconfigurations}:
We execute two separate runs, one for each $DAP$. We use only one reconfigurer which requests recon operations that lead to the same shared memory emulation and server nodes. 

\item\textbf{Changing Reconfigurations Randomly}:
The reconfigurer chooses randomly between the two $DAP_{s}$.

% \item\textbf{Changing Reconfigurations Alternately}:
% The reconfigurer switches between the two storage algorithms. 

\item\textbf{Changing Reconfigurations with different number of servers}:
The reconfigurer switches between the two $DAP_{s}$ and at the same time chooses randomly the number of servers between $[3, 5, 7, 9, 11]$. 
\end{itemize}

% \begin{figure}
% {\small \centering
% \begin{tabular}{c}
%     % \includegraphics[scale=0.5,width=0.5\textwidth,height=45mm]{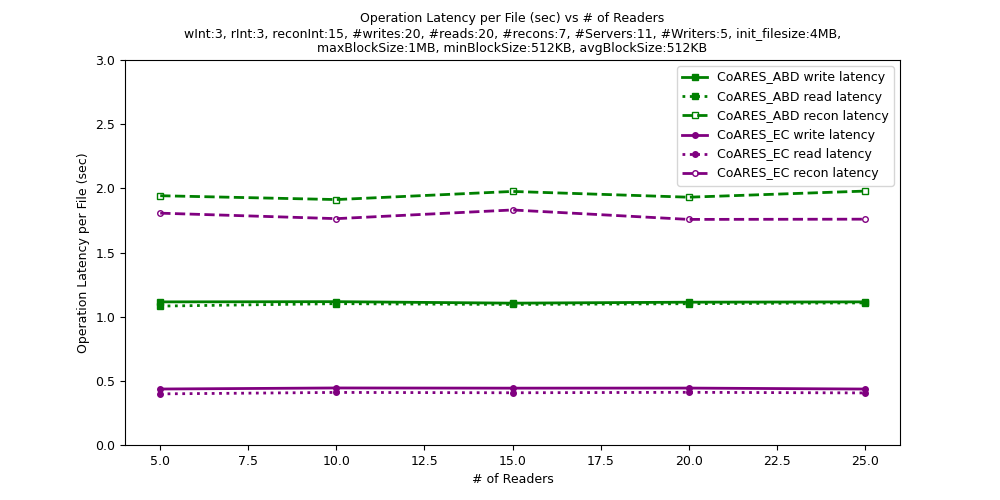}
%     % &
%     \includegraphics[scale=0.5,width=0.5\textwidth,height=45mm]{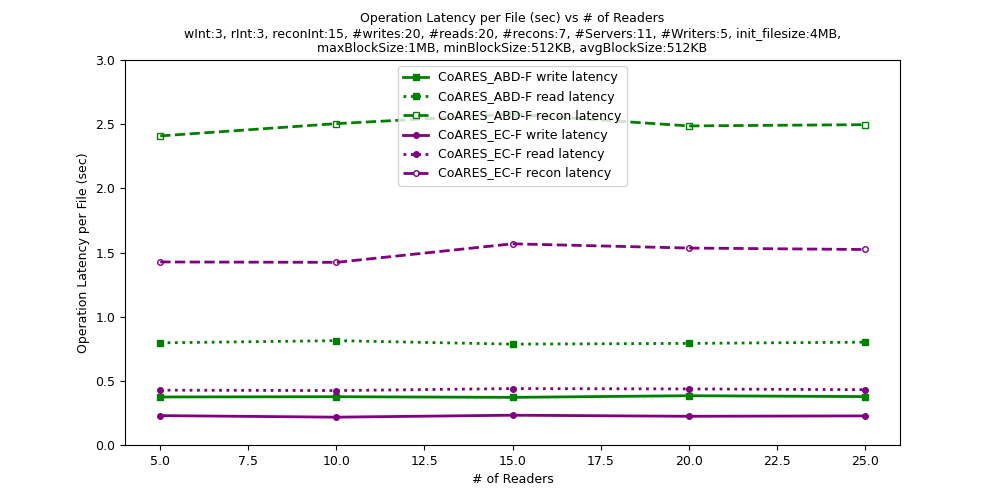}\\
%     % (a) & (b)\\
% \end{tabular}
% }
% \caption{Emulab results when Changing to the Same $DAP_{s}$.}
% \label{fig:plots_recon_fixed_emulab}
% \end{figure}

% \begin{figure}
% {\small \centering
% \begin{tabular}{c}
%     %  \includegraphics[scale=0.5,width=0.5\textwidth,height=45mm]{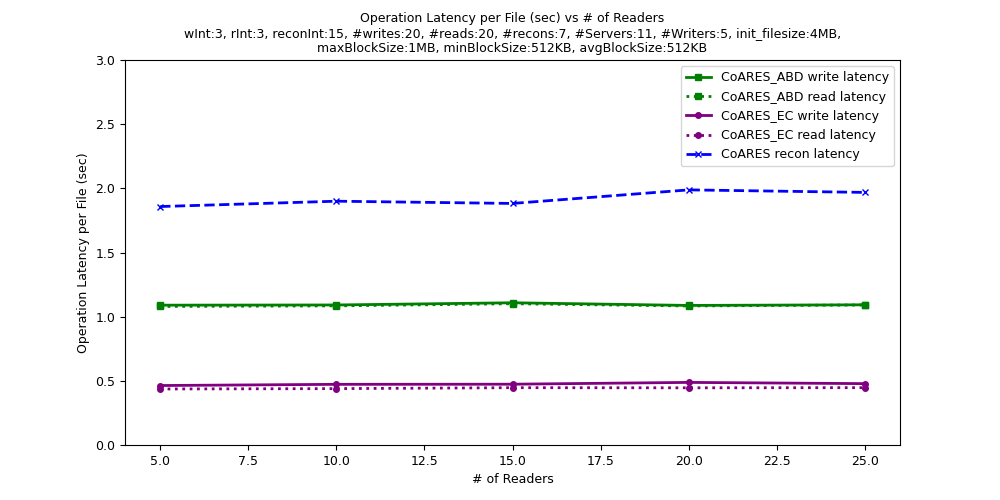}
%     %  &
%     \includegraphics[scale=0.5,width=0.5\textwidth,height=45mm]{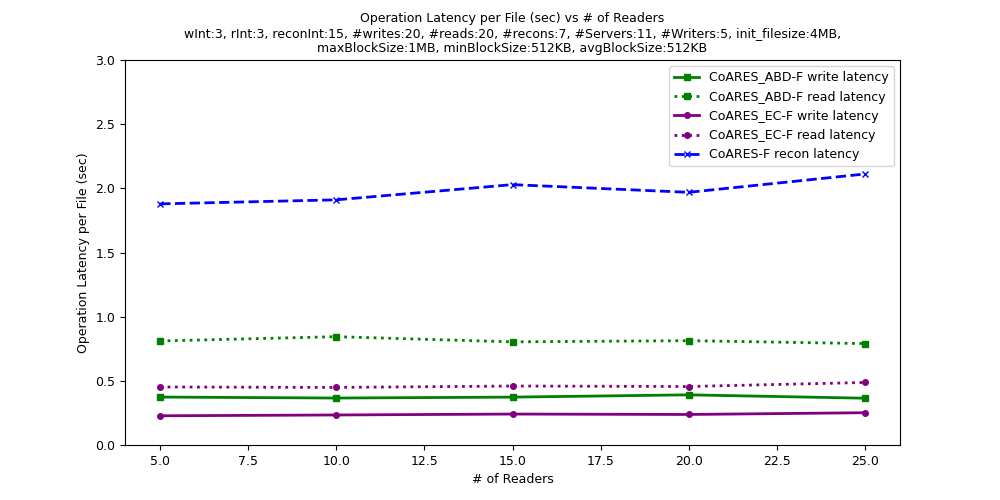}\\
%     % (a) & (b)\\
% \end{tabular}
% }
% \caption{Emulab results when Changing $DAP_{s}$ Randomly.}
% \label{fig:plots_reconrandom_emulab}
% \end{figure}

\begin{figure*}
    \begin{minipage}{0.5\linewidth}
    \centering
    \includegraphics[width=\linewidth,height=0.5\linewidth]{figures/EMULABplots/scalabilityreconfixed_fast/readers_scalability_fragmentation_scalabilityreconfixed_fast_servers11.png}
    \caption{Emulab results when Changing to the Same $DAP_{s}$.}
    \label{fig:plots_recon_fixed_emulab}
    \end{minipage}
    \hfill
    \begin{minipage}{0.5\linewidth}
    \centering
    \includegraphics[width=\linewidth,height=0.5\linewidth]{figures/EMULABplots/scalabilityreconrandom_fast/readers_scalability_fragmentation_scalabilityreconrandom_fast_servers11.png}
    \caption{Emulab results when Changing $DAP_{s}$ Randomly.}
    \label{fig:plots_reconrandom_emulab}
    \end{minipage}
\end{figure*}

% \begin{figure*}
% {\small \centering
% \begin{tabular}{cc}
%     \includegraphics[scale=0.5,width=0.5\textwidth,height=45mm]{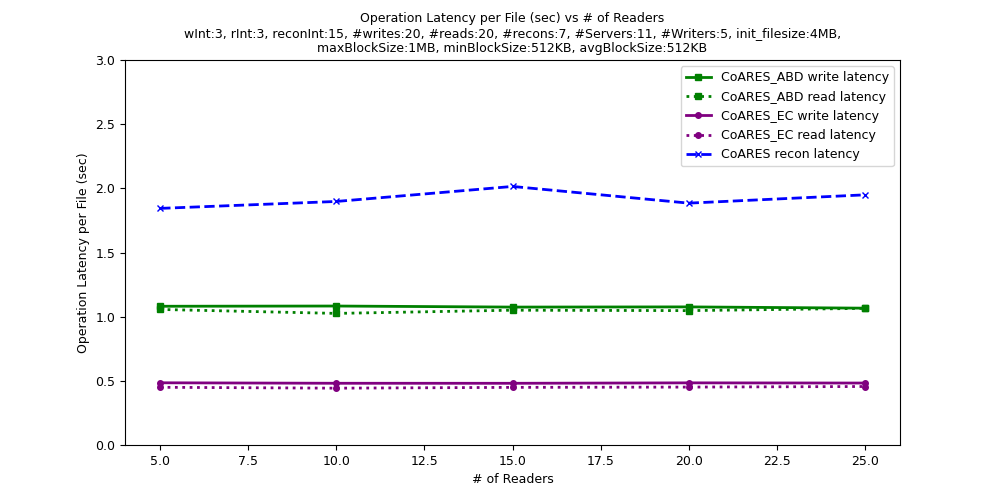}
%     &
%     \includegraphics[scale=0.5,width=0.5\textwidth,height=45mm]{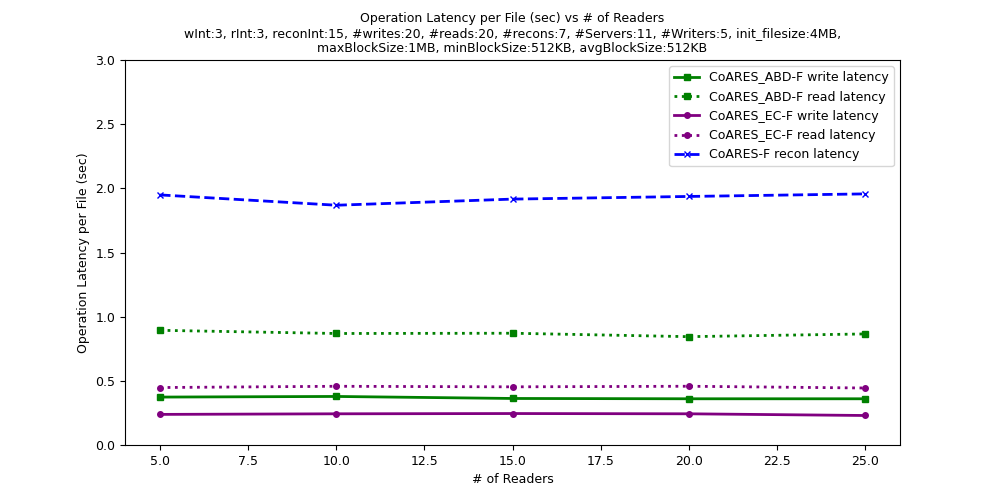}\\
%     (a) & (b)\\
% \end{tabular}
% }
% \caption{
% Emulab results when Changing $DAP_{s}$ Alternately.
% }
% \label{fig:plots_reconchanging_emulab}
% \end{figure*}

\begin{figure*}
{\small \centering
\begin{tabular}{cc}
    \includegraphics[scale=0.5,width=0.5\textwidth,height=45mm]{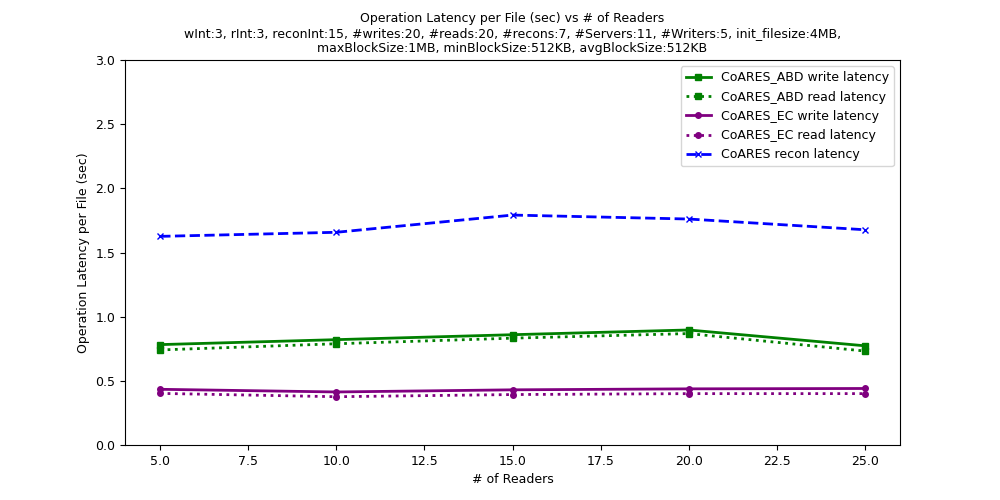}
    &
    \includegraphics[scale=0.5,width=0.5\textwidth,height=45mm]{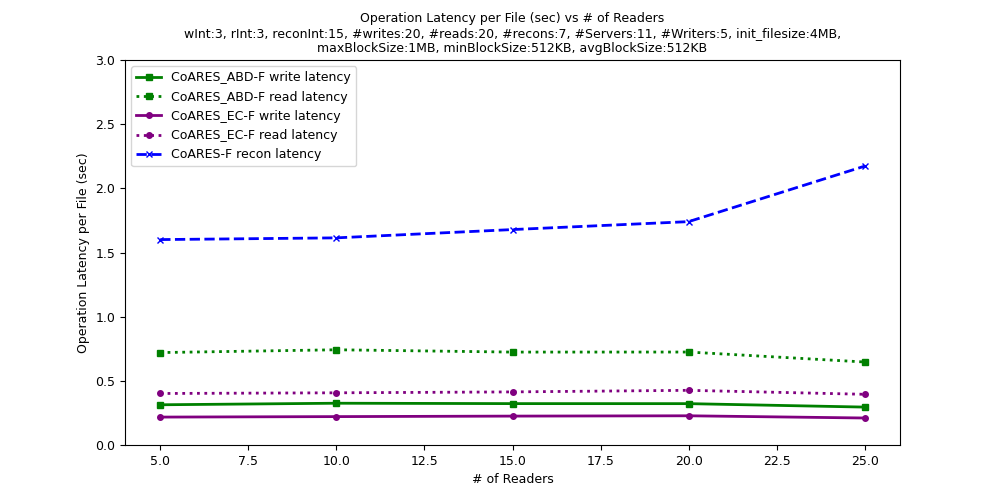}\\
    (a) & (b)\\
\end{tabular}

}
\caption{Emulab results when Changing $DAP_{s}$ Alternately and Servers Randomly.}
\label{fig:plots_reconservers_emulab}\vspace{-1em}
\end{figure*}

\begin{figure*}%[bp!]
{\small \centering
\begin{tabular}{cc}
	\includegraphics[scale=0.5,width=0.5\textwidth,height=45mm]{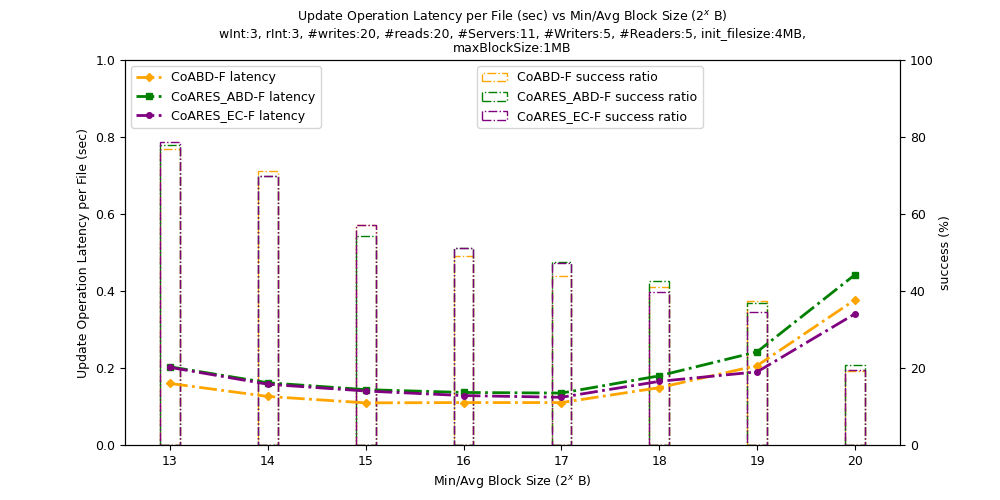}
	&
    \includegraphics[scale=0.5,width=0.5\textwidth,height=45mm]{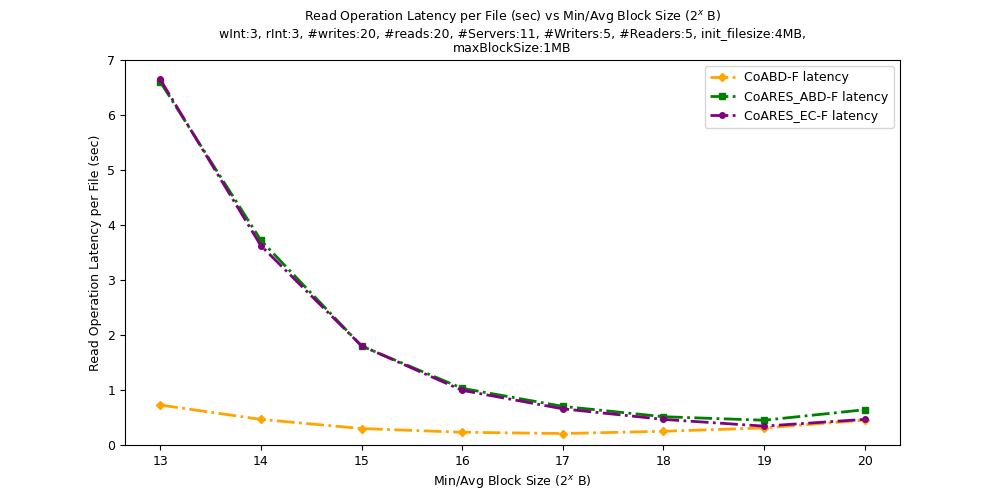} \\
	(a) & (b) \\ 
\end{tabular}\vspace{-1em}
}
\caption{Emulab results for  Min/Avg Block Sizes' experiments.}
\label{fig:plots_minavgblocksize_emulab}
\end{figure*}

\begin{figure*}%[tbp!]
{\small \centering
\begin{tabular}{ccc}
	\includegraphics[width=0.33\textwidth,height=45mm]{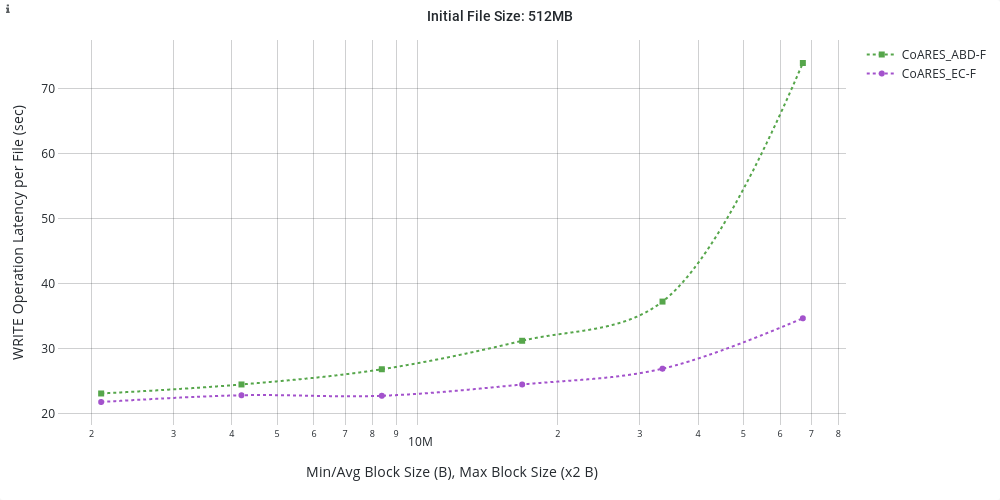}
	&
    \includegraphics[width=0.33\textwidth,height=45mm]{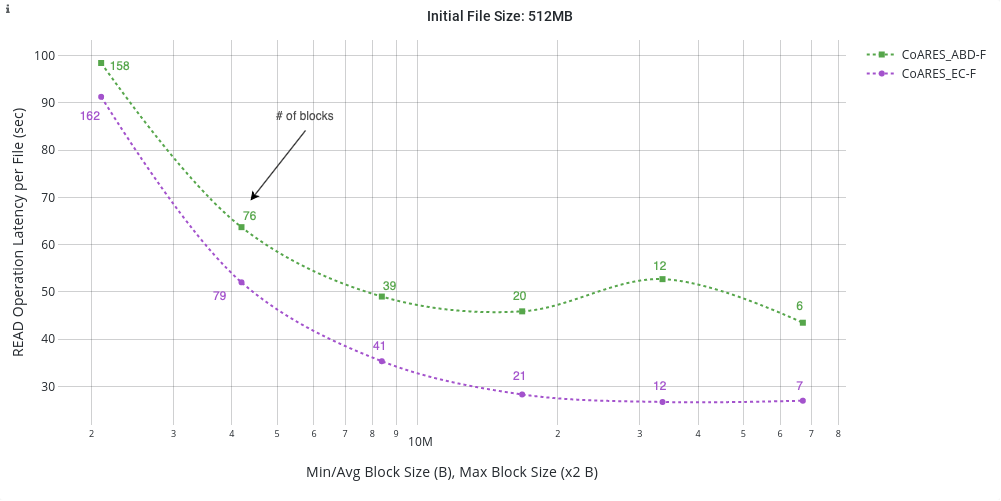} 
    &
    \includegraphics[width=0.33\textwidth,height=45mm]{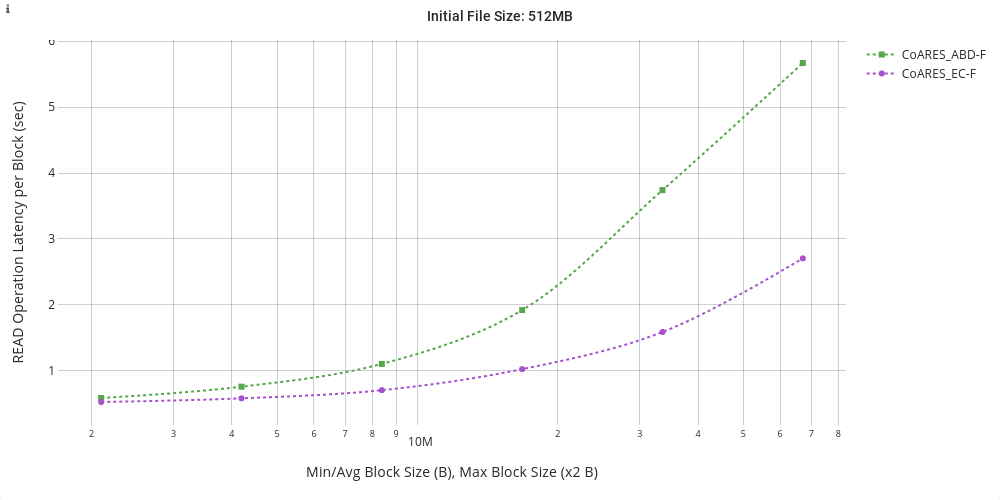} \\
	(a) & (b) & (c)\\ 
\end{tabular}\vspace{-1em}
}
\caption{AWS results for Min/Avg/Max Block Sizes' experiments.}
\label{fig:plots_blocksizes_aws}\vspace{-1em}
\end{figure*}

As we mentioned earlier, our choice of $k$ minimizes the coded fragment size 
%and thus the amount of redundant information that is added to each fragment. However, this choice 
but introduces bigger quorums and thus larger communication overhead. 
%adversely affects the system since each client waits for more responses. 
As a result, in smaller file sizes, \ares{} (either fragmented or not) may not benefit so much from the coding, bringing the delays of the \ARESec{} and \ARESabd{} closer to each other (cf. Fig.~\ref{fig:plots_recon_fixed_emulab}). However, the read latency of \fARESec{} is significant lower than of \fARESabd{}. This is because the \fARESec{} takes less time to transfer the blocks to the new configuration.

Fig.~\ref{fig:plots_reconrandom_emulab} illustrates the results of \fcoARES{} experiments with the random storage change.
% While, in Fig.~\ref{fig:plots_reconchanging_emulab}, we can see the results of the experiments when the reconfigurer switches between storage algorithms. During both experiments, 
During the experiments, there are cases where a single read/write operation may access configurations that implement both \abddap{} and \ecdapopt{}, when concurrent with a recon operation.

% However the reads in \fcoARES{} transfer more data compared to reads in \ares{} 
% % (Figs.~\ref{fig:plots_reconrandom(c),~\ref{fig:plots_reconchanging}(c))
% , explaining their slower completion as seen in Figs.~\ref{fig:plots_reconrandom_emulab}(d),~\ref{fig:plots_reconchanging_emulab}(d).

The last scenario in Fig.~\ref{fig:plots_reconservers_emulab} is constructed to show that the service is working without interruptions despite the existence of concurrent read/write and reconfiguration operations that may add/remove servers and switch the storage algorithm in the system. Also, we can observe that \fcoARES{} (Fig.~\ref{fig:plots_reconservers_emulab}(b)) has shorter update and read latencies than \coARES{} (Fig.~\ref{fig:plots_reconservers_emulab}(a)).

\subsection{Performance VS. Block Sizes}

\subsubsection{Performance VS. Min/Avg Block Sizes}
We varied the minimum and average $b_{sizes}$ of fragmented algorithms from \SI{8}{\kilo\byte} to \SI{1}{\mega\byte}. The size of the initial file used was set to \SI{4}{\mega\byte}, while the maximum block size was set to \SI{1}{\mega\byte}. In Emulab, each writer performs 20 writes and each reader 20 reads, whereas in AWS each writer performs 50 writes and each reader 50 reads. 

\paragraph{Emulab parameters} We have 5 writers, 5 readers and 11 servers. The parity value of the \ecbased{} algorithms is set to 1. Thus the quorum size of the \ecbased{} algorithms is $11$, while the quorum size of \abdbased{} algorithms is $4$.  

\paragraph{AWS parameters} We have 1 writer, 1 reader and 6 servers. The parity value of the \ecbased{} algorithms is set to 1. Thus the quorum size of the \ecbased{} algorithms is $6$, while the quorum size of \abdbased{} algorithms is $4$.  

% \paragraph{Emulab parameters} We have 5 writers, 5 readers and 11 servers. The parity value of the \ec{} algorithm is set to 1, in order to minimize the redundancy; leading to 11-1 data servers and 1 parity server. Thus the quorum size of the \ec{} algorithm is $11$ 
% %$\left\lceil \frac{11+10}{2} \right\rceil=11$
% , while the quorum size of \abd{} algorithm is $6$.  %$\left\lfloor \frac{11}{2} \right\rfloor+1=6$.
% In total, each writer performs 20 writes and each reader 20 reads. 

% \paragraph{AWS parameters} We have 1 writer, 1 reader and 6 servers. The parity value of the \ec{} algorithm is set to 1
% % , in order to minimize the redundancy
% ; leading to 6-1 data servers and 1 parity server. Thus the quorum size of the \ec{} algorithm is $6$
% %$\left\lceil \frac{6+5}{2} \right\rceil=6$
% , while the quorum size of \abd{} algorithm is $4$.  %$\left\lfloor \frac{6}{2} \right\rfloor+1=4$.
% In total, each writer performs 50 writes and each reader 50 reads. 

From Figs.~\ref{fig:plots_minavgblocksize_emulab}(a) %and~\ref{fig:plots_minavgblocksize_AWS}(a) 
, we can infer in general that when larger min/avg block sizes are used, the update latency reaches its highest values since larger blocks need to be transferred. However, too small min/avg block sizes lead to the generation of more new blocks during update operations, resulting in more update block operations, and hence slightly higher update latency. 
In Figs.~\ref{fig:plots_minavgblocksize_emulab}(b) %and~\ref{fig:plots_minavgblocksize_AWS}(b)
, smaller block sizes require more read block operations to obtain the file's value.
As the minimum and average $b_{sizes}$ increase, lower number of rather small blocks need to be read. Thus, further increase of the minimum and average $b_{sizes}$ forces the decrease of the read latency, reaching a plateau in the graph. This means that the scenario finds optimal minimum and average $b_{sizes}$ and increasing them does not give better (or worse) read latency. The corresponding AWS findings show similar trends.

% As we can see in Figs.~\ref{fig:plots_minavgblocksize_emulab}(b) and~\ref{fig:plots_minavgblocksize_AWS}(b), \fARESec{} is most affected by the reduction of min/avg block sizes since it has the extra overhead of applying a second level of striping, thus delaying completing the update of larger numbers of rather small blocks. 

\subsubsection{Performance VS. Min/Avg/Max Block Sizes}
We varied the minimum and average $b_{sizes}$ from \SI{2}{\mega\byte} to \SI{64}{\mega\byte} and the maximum $b_{size}$ from \SI{4}{\mega\byte} to \SI{1}{\giga\byte}. In Emulab and AWS, this scenario has the same settings as the prior block size scenario. In total, each writer performs 20 writes and each reader 20 reads. The size of the initial file used was set to \SI{512}{\mega\byte}.   

% \begin{figure*}%[bp!]
% {\small \centering
% \begin{tabular}{ccc}
% 	\includegraphics[scale=0.5,width=0.5\textwidth,height=45mm]{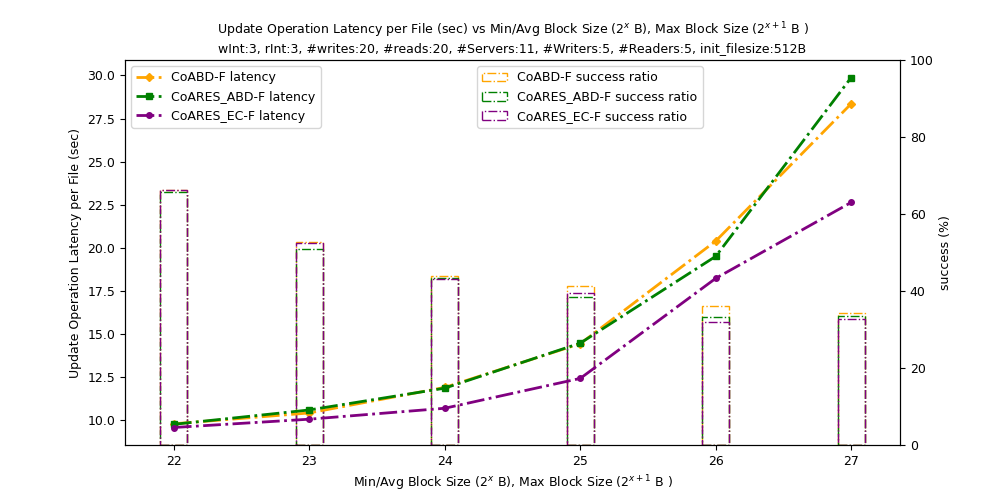}
% 	&
%     \includegraphics[scale=0.5,width=0.5\textwidth,height=45mm]{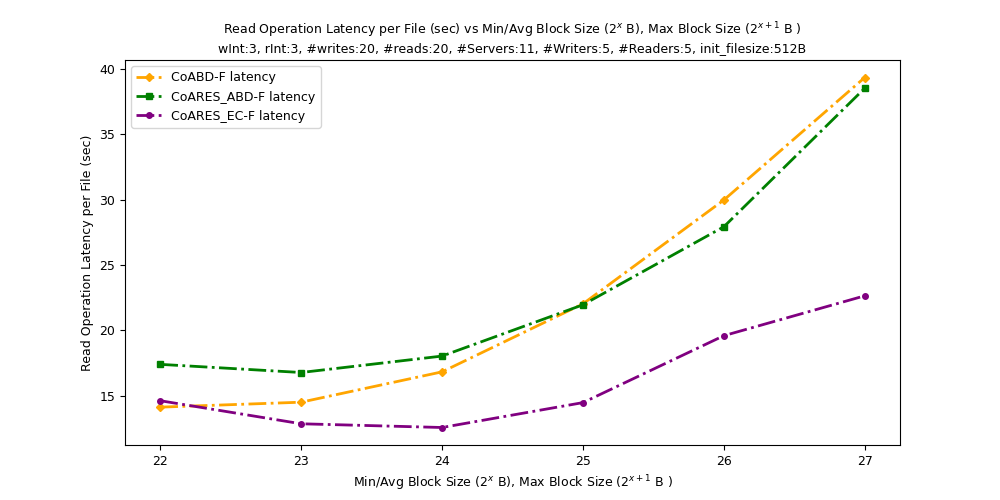}\\ 
% 	(a) & (b)\\ 
% \end{tabular}
% }
% \caption{
% Emulab results for Min/Avg/Max Block Sizes' experiments.
% }
% \label{fig:plots_blocksizes_emulab}
% \end{figure*}

This scenario evaluates how the block size impacts the latencies when having a rather large file size. As all examined block sizes are enough to fit the text additions no new blocks are created. All the algorithms achieve the maximal update latency as the block size gets larger %(Figs.~\ref{fig:plots_blocksizes_emulab}(a),
(Fig~\ref{fig:plots_blocksizes_aws}(a)).  \fARESec{} has the lower impact as block size increases mainly due to the extra level of striping. Similar behaviour has the read latency in Emulab.
%, as shown in Fig.~\ref{fig:plots_blocksizes_emulab}(b).
However, in real time conditions of AWS, the read latency of a higher number of relatively large blocks (Fig.~\ref{fig:plots_blocksizes_aws}(c)) has a significant impact on overall latency, resulting in a larger read latency (Fig.~\ref{fig:plots_blocksizes_aws}(b)).

\section{Conclusions}
% A large set of systems and applications depends on distributed environments to process and analyse huge amounts of data. 
%One of the most important features of distributed file systems is the availability of data and to achieve this objective the Replication of Blocks comes into picture.
In this paper we have presented a dynamic distributed file system that utilizes coverable fragmented objects, which we call \fcoARES{}. To achieve this, we preformed a non-trivial integration of the \ares{} framework with the \frfs{} distributed file system. When \fcoARES{} is used with a Reed-Solomon Erasure Coded DAP we obtain a two-level striping dynamic and robust distributed file system providing strong consistency and high access concurrency to large objects (e.g., files).
% is designed to handle large shared data objects,
We demonstrated the benefits of our approach 
%both theoretically, as well as in practice 
through extensive experiments performed on Emulab and AWS testbeds. Compared to the approach that does not use the fragmentation layer of \frfs{} (\coARES{}), the \fcoARES{} is optimized with an efficient access to shared data
% accessed 
under heavy concurrency. Based on these results, we plan to explore in future work how to optimize %or extend
our approach to enable low overhead under read scenarios.

Below we discuss the main trade-offs 
%and difficulties
that we faced during the implementation and deployment: 

\noindent {\bf\em Block size of {\em FM}.} The performance of data striping highly depends on the block size. There is a trade-off between splitting the object into smaller blocks, for improving the concurrency in the system, and paying for the cost of sending these blocks in a distributed fashion. 
% If the block size is too small, then processes need to fetch many blocks, potentially resulting in an inefficient communication delay. On the other hand, selecting a block size that is too large may increase the collision of many processes in a single block as each process takes more time to write a larger block. This results in multiple processes accessing the same data blocks simultaneously, which limits the benefits of data distribution. 
Therefore, it is crucial to discover the ``golden" spot with the minimum communication delays (while having a large block size) that will ensure a small expected probability of collision (as a parameter of the block size and the delays in the network).

\noindent {\bf\em Parity of \ec{}.} There is a trade-off between operation latency and fault-tolerance in the system: the further increase of the parity (and thus higher fault-tolerance) the larger the latency. 
% Large parity numbers result in the generation of smaller numbers of data fragments and thus bigger sizes of the fragments and higher redundancy. As a result, with a higher number of parity we achieve higher levels of fault-tolerance, but that would waste storage efficiency. The write latency is less affected by the parity number since the encoding is considerably faster as it requires less computation.

\noindent {\bf\em Parameter $\delta$ of \ec{}.} 
% The delta must be the maximum number of concurrent put-data operations, since each server stores delta+1 concurrent writers’ values. Thus, t
The value of $\delta$ is equal to the number of writers. As a result, as the number of writers increases, the latency of the first phase of \ec{} also increases, since each server sends the list with all the concurrent values. In this point, we can understand the importance of the optimization in the DSMM layer. %(see Section~\ref{sec:dap:optimize}).

% {\bf Consensus of \ares{}{}}. 
% In \ares{}, we used an external implementation of RAFT~\cite{Raft} consensus algorithm, which was used for the service reconfiguration and was deployed on top of small RPi devices. Small devices introduced further delays in the system, reducing the speed of reconfigurations and creating harsh conditions for longer periods in the service. The Python implementation of Raft used for consensus is PySyncObj~\cite{ref_url_PySyncObj}. Some modifications were done to allow the execution of Raft in the \ares{} environment. We built an API for the utilization of the Raft subsystem. 

%% Bibliography
\bibliographystyle{plain}
\bibliography{references}

\begin{thebibliography}{10}

\bibitem{emulab}
Emulab network testbed.
\newblock \url{https://www.emulab.net/}.

\bibitem{dynastore}
M.K. Aguilera, I.~Keidar, D.~Malkhi, and A.~Shraer.
\newblock Dynamic atomic storage without consensus.
\newblock In {\em Proceedings of the 28th ACM symposium on Principles of
  distributed computing (PODC '09)}, pages 17--25, New York, NY, USA, 2009.
  ACM.

\bibitem{ansible}
Ansible.
\newblock https://www.ansible.com/overview/how-ansible-works.

\bibitem{SIROCCO_2021}
A.F. Anta, C.~Georgiou, T.~Hadjistasi, E.~Stavrakis, and A.~Trigeorgi.
\newblock {Fragmented Object : Boosting Concurrency of Shared Large Objects}.
\newblock {\em In Proc.of SIROCCO}, pages 1--18, 2021.

\bibitem{BFS_arxiv}
Antonio~Fern{\'{a}}ndez Anta, Chryssis Georgiou, Theophanis Hadjistasi, Nicolas
  Nicolaou, Efstathios Stavrakis, and Andria Trigeorgi.
\newblock Fragmented objects: Boosting concurrency of sharedlarge objects.
\newblock {\em CoRR}, abs/2102.12786, 2021.

\bibitem{ref_article_ABD_new}
H.~Attiya.
\newblock {Robust Simulation of Shared Memory: 20 Years After}.
\newblock {\em Bulletin of the EATCS}, 100:99--114, 2010.

\bibitem{ref_article_ABD}
H.~Attiya, A.~Bar-Noy, and D.~Dolev.
\newblock {Sharing Memory Robustly in Message-Passing Systems}.
\newblock {\em Journal of the ACM (JACM)}, 42(1):124--142, 1995.

\bibitem{ref_url_AWS-EC2}
{AWS EC2}.
\newblock https://aws.amazon.com/ec2/.

\bibitem{stringMatching}
Paul Black.
\newblock Ratcliff pattern recognition.
\newblock {\em Dictionary of Algorithms and Data Structures}, 2021.

\bibitem{ARES_arxiv}
Viveck~R. Cadambe, Nicolas~C. Nicolaou, Kishori~M. Konwar, N.~Prakash, Nancy~A.
  Lynch, and Muriel M{\'{e}}dard.
\newblock {ARES:} adaptive, reconfigurable, erasure coded, atomic storage.
\newblock {\em CoRR}, abs/1805.03727, 2018.

\bibitem{ref_article_fastRead}
P.~Dutta, R.~Guerraoui, R.R. Levy, and A.~Chakraborty.
\newblock {How fast can a distributed atomic read be?}
\newblock {\em In Prof. of PODC}, pages 236--245, 2004.

\bibitem{SpSnStore}
E.~Gafni and D.~Malkhi.
\newblock {Elastic configuration maintenance via a parsimonious speculating
  snapshot solution}.
\newblock {\em Lecture Notes in Computer Science (including subseries Lecture
  Notes in Artificial Intelligence and Lecture Notes in Bioinformatics)},
  9363:140--153, 2015.

\bibitem{ref_article_ERATO}
C.~Georgiou, T.~Hadjistasi, N.~Nicolaou, and A.~Schwarzmann.
\newblock {Unleashing and speeding up readers in atomic object
  implementations}.
\newblock {\em In Proc. of NETYS}, 2018.

\bibitem{ref_article_semifast}
C.~Georgiou, N.~Nicolaou, and A.A. Shvartsman.
\newblock {Fault-tolerant semifast implementations of atomic read/write
  registers}.
\newblock {\em Journal of Parallel and Distributed Computing}, 69(1):62--79,
  2009.

\bibitem{HW90}
M.P. Herlihy and J.M. Wing.
\newblock Linearizability: a correctness condition for concurrent objects.
\newblock {\em ACM Transactions on Programming Languages and Systems},
  12(3):463--492, 1990.

\bibitem{ref_article_Linearizability}
M.P. Herlihy and J.M. Wing.
\newblock {Linearizability: A Correctness Condition for Concurrent Objects}.
\newblock {\em ACM Transactions on Programming Languages and Systems (TOPLAS)},
  12(3):463--492, 1990.

\bibitem{smartmerge}
L.~Jehl, R.~Vitenberg, and H.~Meling.
\newblock Smartmerge: A new approach to reconfiguration for atomic storage.
\newblock In {\em International Symposium on Distributed Computing}, pages
  154--169. Springer, 2015.

\bibitem{rambo}
N.~Lynch and A.A. Shvartsman.
\newblock {RAMBO: A reconfigurable atomic memory service for dynamic networks}.
\newblock {\em Lecture Notes in Computer Science (including subseries Lecture
  Notes in Artificial Intelligence and Lecture Notes in Bioinformatics)},
  2508(June):173--190, 2002.

\bibitem{Lynch1996}
N.A. Lynch.
\newblock {\em Distributed Algorithms}.
\newblock Morgan Kaufmann Publishers, 1996.

\bibitem{ref_article_MWMRABD}
N.A. Lynch and A.A. Shvartsman.
\newblock {Robust emulation of shared memory using dynamic quorum-acknowledged
  broadcasts}.
\newblock {\em In Proc. of FTCS}, pages 272--281, 1997.

\bibitem{coverability}
N.~Nicolaou, A.F. Anta, and C.~Georgiou.
\newblock {Cover-ability: Consistent versioning in asynchronous, fail-prone,
  message-passing environments}.
\newblock In {\em Proc. of IEEE NCA 2016}, pages 224--231. Institute of
  Electrical and Electronics Engineers Inc., 2016.

\bibitem{ARES}
Nicolas Nicolaou, Viveck Cadambe, N.~Prakash, Andria Trigeorgi, Kishori~M.
  Konwar, Muriel Medard, and Nancy Lynch.
\newblock {ARES: Adaptive, Reconfigurable, Erasure coded, Atomic Storage}.
\newblock {\em ACM Transactions on Programming Languages and Systems (TOPLAS)},
  2022.

\bibitem{Raft}
Diego Ongaro and John Ousterhout.
\newblock In search of an understandable consensus algorithm.
\newblock In {\em Proceedings of the 2014 USENIX Conference on USENIX Annual
  Technical Conference}, USENIX ATC'14, pages 305--320, Berkeley, CA, USA,
  2014. USENIX Association.

\bibitem{Rabin1981}
M~O Rabin.
\newblock {Fingerprinting by random polynomials}, 1981.

\bibitem{bookDS}
M.V. Steen and A.S. Tanenbaum.
\newblock {\em {Distributed Systems, 3rd ed.}}
\newblock distributed-systems.net, 2017.

\bibitem{rsync}
A.~Tridgell and P.~Mackerras.
\newblock {The rsync algorithm}.
\newblock {\em Imagine}, 1996.

\bibitem{ref_article_Consistency}
P.~Viotti and M.~Vukolic.
\newblock Consistency in non-transactional distributed storage systems.
\newblock {\em ACM Computing Surveys (CSUR)}, 49:1 -- 34, 2016.

\bibitem{ref_url_zmq}
{ZeroMQ}.
\newblock https://zeromq.org.

\end{thebibliography}

\vspace{12pt}
% \color{red}
% IEEE conference templates contain guidance text for composing and formatting conference papers. Please ensure that all template text is removed from your conference paper prior to submission to the conference. Failure to remove the template text from your paper may result in your paper not being published.

%\input{appendix}\label{appendix}

\end{document}